\documentclass[11pt]{article}

\usepackage[margin=1in]{geometry}
\usepackage{amsmath,amssymb,amsthm,mathtools}
\usepackage{bbm}
\usepackage{enumitem, comment}
\usepackage{hyperref}
\usepackage[nameinlink,capitalize,noabbrev]{cleveref}
\usepackage{graphicx}

\usepackage{xcolor}

\theoremstyle{plain}
\newtheorem{theorem}{Theorem}[section]
\newtheorem{lemma}[theorem]{Lemma}
\newtheorem{proposition}[theorem]{Proposition}
\newtheorem{corollary}[theorem]{Corollary}

\theoremstyle{definition}
\newtheorem{definition}[theorem]{Definition}

\theoremstyle{remark}
\newtheorem{remark}[theorem]{Remark}

\newcommand{\E}{\mathbb{E}}
\renewcommand{\P}{\mathbb{P}}

\newcommand{\1}{\mathbbm{1}}
\newcommand{\KL}{\mathrm{D}}

\newcommand{\Cov}{\mathrm{Cov}}
\newcommand{\Dir}{\mathrm{Dir}}
\newcommand{\Beta}{\mathrm{Beta}}
\newcommand{\Bin}{\mathrm{Bin}}
\newcommand{\Gap}{\mathrm{Gap}}
\newcommand{\Gam}{\mathrm{Gamma}}
\newcommand{\supp}{\mathrm{supp}}

\newcommand{\diag}{\ \mathrm{diag}}

\newcommand{\Perrw}{\P_{\mathrm{errw}}}
\newcommand{\Psrw}{\P_{\mathrm{srw}}}
\newcommand{\Eerrw}{\E_{\mathrm{errw}}}
\newcommand{\Esrw}{\E_{\mathrm{srw}}}

\newcommand{\Hent}{\mathrm{H}}

\newif\ifva
\vatrue

\title{
An Information-theoretic Analysis of Edge-reinforced \\ Random Walks
}
\author{Qinghua (Devon) Ding, \quad Venkat Anantharam\\
  Department of Electrical Engineering and Computer Sciences\\
                    University of California at Berkeley\\
                    Berkeley, CA, United States\\
                    Email: \{devon\_ding, ananth\}@berkeley.edu}

\begin{document}
\maketitle

\begin{abstract}
Reinforced random walks are 
random walks on graphs
whose transition probabilities along edges from a vertex
are proportional to the weights of those edges,
but where the weight of an edge evolves
in a way that depends on the past traversals across it. 
In an edge-reinforced random walk (ERRW) the weight of an edge increases by $1$ 
whenever that edge is traversed, in either direction.
On a finite graph, an ERRW admits a remarkable representation as a random walk in a random environment. The law of the environment
is given by the so-called {\em magic formula}~\cite{KeaneRolles2000,MerklOryRolles2008},
with this law depending on the initial edge weights.
This representation provides a natural route for studying 
statistical
properties of ERRWs~\cite{DiaconisRolles2006,SabotTarres2015,SabotTarresZeng2017}.

This work focuses on 
various
information-theoretic 
quantities associated with ERRWs on finite graphs, motivated in part by the problem of statistically distinguishing between different ERRW models from observed trajectories. 
In particular, we study the entropy rate of 
an ERRW.
We also study
the Kullback--Leibler divergence (KL divergence) between two ERRW environment laws, and the KL divergence between the corresponding finite-trajectory distributions. Leveraging structural properties of the underlying random environment~\cite{SabotTarres2015,SabotTarresZeng2017}, we derive an annealed representation of the entropy rate, a closed-form formula for the environment-level KL divergence, and quantitative bounds on the convergence of trajectory-level KL divergence toward environment-level KL divergence.

These information-theoretic quantities are motivated by the two-point hypothesis testing problem for ERRW trajectories, and in particular by the associated Stein exponent. We also expect them to play a fundamental role in the study of other testing problems for ERRWs, including identity testing and closeness testing.
\end{abstract}

\section{Introduction}
\label{sec:intro}


An edge-reinforced random walk (ERRW) is a
random walk on a graph
whose transition probabilities along an edge from a vertex
are proportional to the weights of those edges, and where
the weight of an edge increases by $1$ 
whenever that edge is traversed, in either direction.
Introduced in the work of Coppersmith and Diaconis, the model has since been studied from several points of view, including 
its intimate connections with the concept of
partial exchangeability; Bayesian statistics for reversible Markov chains; and 
as an example of a random walk in a random environment;
see, for example, \cite{KeaneRolles2000,Rolles2003,DiaconisRolles2006,MerklRolles2006,MerklOryRolles2008}. On finite graphs, a basic structural fact is that ERRW has the same law as a mixture of reversible Markov chains, or equivalently as a random walk in a random environment, with 
an
explicit mixing measure given by the 
so-called
{\em magic formula} \cite{KeaneRolles2000,MerklOryRolles2008}. This representation places ERRW in the broader framework of de Finetti-type results for Markov chains and reversible processes \cite{DiaconisFreedman1980,DiaconisRolles2006}.
This random-environment point of view has also led to the further developments relating ERRW to the 
vertex-reinforced jump process (VRJP) and to random Schr\"odinger operators and the supersymmetric hyperbolic sigma model \cite{SabotTarres2015,SabotTarresZeng2017}. We refer the reader to the survey article of Merkl and Rolles \cite{MerklRolles2006} for background on linearly edge-reinforced random walk and its principal structural properties.

The purpose of the present paper is to study several information-theoretic quantities associated with ERRW on finite graphs. The main objects of interest are the entropy rate of the reinforced walk, the Kullback--Leibler divergence (KL divergence) between two ERRW environment laws, and the KL divergence between the corresponding finite-trajectory laws. These quantities arise naturally when one considers statistical questions for ERRW trajectories. In particular, the trajectory-level KL divergence governs the asymptotic exponent in two-point hypothesis testing, through Stein's lemma. More broadly, the same quantities should be relevant for other testing problems for ERRW, such as identity testing and closeness testing.

Our main results are as follows.
\begin{itemize}
    \item \emph{Entropy rate of ERRW.} We prove that the entropy rate of ERRW admits an annealed representation in terms of the mixing measure.
    We derive an explicit formula for the entropy rate. We also provide an upper bound for it, expressed via the normalizing constant in the magic formula.

    \item \emph{KL divergence between environment laws.} 
    We derive a closed-form formula for the environment-level KL divergence, i.e. the divergence between the mixing measures 
    of ERRW when started from two different initial edge weights. This is based on the fact that the family of ERRW mixing measures forms a canonical exponential family. 
    We show
    that this formula admits a natural probabilistic interpretation through the 
    Sabot-Tarr\`{e}s-Zheng field representation (STZ-field representation)~\cite{SabotTarresZeng2017}: 
    the underlying edge Gamma field factorizes into an ERRW environment component and independent Gamma fields on the vertices, so that the KL identity appears as the net contribution of the edge-Gamma and vertex-Gamma terms.
    For details see \cref{sec:kl_env}.

    \item \emph{Convergence of trajectory-level KL divergences.} We prove that the trajectory-level KL divergence converges to the environment-level KL divergence as \(T\to\infty\) at a sublinear convergence rate on general graphs, with a tight characterization in the urn (\(n\)-star) case. The main technical difficulty is the derivation of quantitative bounds on the random environment. We address this by combining a combinatorial lemma with analytic control of the normalized environment over the simplex, and concentration estimates for finite reversible Markov chains.
\end{itemize}

The results of the paper support the following general point of view. For ERRW on finite graphs, the random environment, especially at the level of the STZ-field~\cite{SabotTarresZeng2017}, is the natural object in terms of which entropy and KL divergence are most naturally understood. In this sense, the random-environment representation is not only a structural description of the model, but also the appropriate framework for its information-theoretic analysis.

The paper is organized as follows. In \Cref{sec:related-work}, we review the most relevant earlier work. In \Cref{sec:prelims}, we recall the definition of an ERRW and its random-environment representation. In \Cref{sec:entropy}, we study the entropy rate of ERRW. In \Cref{sec:kl_env}, we analyze the KL divergence between ERRW environment laws. In \Cref{sec:kl_path}, we study the KL divergence between finite trajectory laws and its relation to posterior updating. We then conclude with a discussion of several open problems.

\section{Related Work}
\label{sec:related-work}

Linearly edge-reinforced random walk was introduced by Coppersmith and Diaconis \cite{CoppersmithDiaconis1986}
and has been studied extensively on both finite and infinite graphs. On finite graphs, a basic structural result is that ERRW has the same law as a mixture of reversible Markov chains, or equivalently as a random walk in a random environment; this representation was established by Keane and Rolles
\cite{KeaneRolles2000},
and the corresponding mixing measure was later identified explicitly through the magic formula by Merkl, \"Ory, and Rolles \cite{KeaneRolles2000,MerklOryRolles2008}. The survey article \cite{MerklRolles2006} provides further background on linearly edge-reinforced random walk and its principal structural properties.

A second line of work concerns the statistical interpretation of ERRW. Rolles 
\cite{Rolles2003}
showed that ERRW arises naturally in the Bayesian analysis of reversible Markov chains, and Diaconis and Rolles developed the corresponding conjugate-prior framework for reversible transition matrices \cite{Rolles2003,DiaconisRolles2006}. In particular, the posterior update of the environment law after observing a path plays an important role in the present paper.

ERRW is also closely related to other reinforced processes and to the random-field representations developed in the work of Sabot, Tarr\`es, and Zeng
\cite{SabotTarres2015,SabotTarresZeng2017}.
In particular, the connections with VRJP, random Schr\"odinger operators, and the supersymmetric hyperbolic sigma model provide a useful structural framework for interpreting the random environment and its factorization properties \cite{SabotTarres2015,SabotTarresZeng2017}. These ideas enter our treatment of the KL divergence between environment laws.

The quantitative part of the paper also draws on earlier estimates for the ERRW environment and on concentration inequalities for finite Markov chains. For the former, we use the simplex-density representation and related bounds due to Merkl and Rolles \cite{MerklRolles2008Bounding}. For the latter, we use standard concentration tools for finite reversible Markov chains, especially the Chernoff-type bound of L\'ezaud, the concentration inequalities of Paulin, and the general background in Levin, Peres, and Wilmer \cite{Lezaud1998,Paulin2015,LevinPeresWilmer2017}.

To the best of our knowledge, however, the entropy rate of ERRW, the closed-form KL divergence between ERRW environment laws, and the quantitative comparison between environment-level and trajectory-level KL divergences have not been developed systematically in the previous literature. The present paper is intended as a contribution in this direction.


\section{Preliminaries on ERRW}
\label{sec:prelims}

We write $:=$ and $=:$ for equality by definition. The notation $\{u,v\}$ denotes an unordered pair of elements.
Thus, the edge between vertices $u$ and $v$ in an undirected graph is denoted $\{u,v\}$.

Throughout the paper, let \(G=(V,E)\) be a finite, connected, simple undirected graph, with \(|V|=:n\) and \(|E|=:m\). For \(u,v\in V\), we write \(u\sim v\) if \(\{u,v\}\in E\), and for \(v\in V\), we write \(\deg(v):=|\{u\in V:\ u\sim v\}|\). We fix a distinguished initial vertex \(v_0\in V\), and we equip the edges with positive initial weights \(A=(a_e)_{e\in E}\in(0,\infty)^E\). For \(v\in V\), let \(a_v:=\sum_{e\ni v} a_e\). When \(e=\{u,v\}\), we occasionally write \(a_e=a_{uv}=a_{vu}\).

\subsection{Definition and basic path statistics}

A trajectory of length \(T\) is denoted by \(X_0^T:=(X_0,\dots,X_T)\), 
where \(X_0=v_0\) and \(X_t\in V\) for $0 \le t \le T$.
For a realized trajectory \(x_0^T\), define the directed transition counts
\begin{equation}        \label{eq:edgecounts}
N_{uv}(x_0^T):=\bigl|\{t\in\{1,\dots,T\}:\ x_{t-1}=u,\ x_t=v\}\bigr|,
\qquad u\sim v,
\end{equation}
the vertex departure counts 
\begin{equation}        \label{eq:departurecounts}
N_u(x_0^T):=\sum_{v\sim u} N_{uv}(x_0^T), 
\end{equation}
and the undirected edge counts 
\begin{equation}    \label{eq:undirectedcounts}
N_{\{u,v\}}(x_0^T):=N_{uv}(x_0^T)+N_{vu}(x_0^T). 
\end{equation}
When no confusion 
arises,
we abbreviate 
the undirected edge counts by \(N_e:=N_e(x_0^T)\).

For each edge \(e\in E\) and time \(t\ge 0\), the edge local time is defined as
\begin{equation}    \label{eq:edgelocaltime}
L_e(t):=a_e+\sum_{s=1}^t \1\bigl(\{X_{s-1},X_s\}=e\bigr)
      =a_e+N_e(X_0^t).
\end{equation}
Here \((a_e, e \in E)\) are the initial edge weights, which are assumed to be strictly positive.
We think of $L_e(t)$ as the edge weight of edge $e$ at time $t$.

\begin{definition}[Linearly edge-reinforced random walk]
Let \(G=(V,E)\), \(v_0\in V\), and \(A=(a_e)_{e\in E}\in(0,\infty)^E\). A process \((X_t)_{t\ge 0}\) is called an \emph{edge-reinforced random walk} (ERRW) on \((G,v_0,A)\) if \(X_0=v_0\) and, for every \(t\ge 0\) and every neighbor \(v\sim X_t\),
\[
\P\bigl(X_{t+1}=v \,\big|\, X_0,\dots,X_t\bigr)
=
\frac{L_{\{X_t,v\}}(t)}{\sum_{u\sim X_t} L_{\{X_t,u\}}(t)}.
\]
\end{definition}

Thus, at each step, the walk chooses among the edges incident to its current position with probabilities proportional to their current weights.

\subsection{ERRW as a random walk in a random environment}
\label{subsec:magic}

A basic structural fact is that ERRW on a finite graph is a mixture of reversible Markov chains; see \cite{KeaneRolles2000,MerklOryRolles2008}. Fix once and for all a reference edge \(e_0\in E\). An environment is represented by a family of positive edge conductances \(w=(w_e)_{e\in E}\in(0,\infty)^E\) in the pinned gauge \(w_{e_0}=1\). We write \(w_{-e_0}:=(w_e)_{e\in E\setminus\{e_0\}}\), \(dw_{-e_0}:=\prod_{e\neq e_0}dw_e\), and \(w_v:=\sum_{e\ni v} w_e\) for \(v\in V\). Let \(\mathcal T\) denote the set of spanning trees of \(G\), and define \(\tau(w):=\sum_{T\in\mathcal T}\prod_{e\in T} w_e\).

\begin{definition}[Magic-formula mixing measure \cite{MerklOryRolles2008}]
\label{def:magic_measure}
Fix \((G,v_0,A)\) and work in the gauge \(w_{e_0}=1\). The \emph{mixing measure} \(\mu_{v_0,A}\) is the probability measure on \(\{w\in(0,\infty)^E:\ w_{e_0}=1\}\) with density
\begin{equation}
\label{eq:mixing_measure_magic}
d\mu_{v_0,A}(w_{-e_0})
=
Z_{v_0,A}^{-1}\,
\frac{w_{v_0}^{1/2}\prod_{e\in E} w_e^{a_e-1}}
{\prod_{v\in V} w_v^{(a_v+1)/2}}
\sqrt{\tau(w)}\,dw_{-e_0},
\end{equation}
where
\begin{equation}
\label{eq:Zv0A}
Z_{v_0,A}
:=
\frac{\prod_{e\in E}\Gamma(a_e)}
{\prod_{v\in V}\Gamma\left(\frac{a_v+1-\1(v=v_0)}{2}\right)}
\cdot
\frac{\pi^{(|V|-1)/2}}{2^{\,1-|V|+\sum_{e\in E} a_e}}.
\end{equation}
\end{definition}
Here $\Gamma(x) := \int_0^\infty t^{x-1} e^{-t} dt$ denotes the value of the 
Gamma function at $x > 0$. Also note that, since $w_{e_0} =1$ we sometimes write
$d\mu_{v_0,A}(w)$ for $d\mu_{v_0,A}(w_{-e_0})$, with the understanding that
an additional factor of $\delta(w_{e_0} -1)$ has been tagged on.

\begin{theorem}[Random-environment representation of ERRW \cite{KeaneRolles2000,MerklOryRolles2008}]
\label{thm:errw_rwre}
Let \((X_t)_{t\ge 0}\) 
denote the ERRW on
\((G,v_0,A)\),
i.e. on the graph $G$ with the initial condition $v_0$ and the initial edge weights given by $A$.
Then there exists a random environment \(W\sim\mu_{v_0,A}\) such that, conditional on \(W=w\), the process \((X_t)_{t\ge 0}\) is the reversible Markov chain on \(V\) with transition kernel
\begin{equation}
\label{eq:Pw_def}
p_{ij}(w)=\P(X_{t+1}=j\mid X_t=i,\ W=w)=\frac{w_{\{i,j\}}}{w_i},
\qquad i\sim j,
\end{equation}
and \(p_{ij}(w)=0\) if \(i\not\sim j\). Equivalently,
\[
\Perrw^{v_0,A}(\cdot)=\int \Psrw^{v_0,w}(\cdot)\,\mu_{v_0,A}(dw),
\]
where \(\Psrw^{v_0,w}\) denotes the law of the Markov chain 
on the vertices of $G$
started from \(v_0\) with kernel \eqref{eq:Pw_def},
and \(\Perrw^{v_0,A}\) denotes the law of the ERRW on \((G,v_0,A)\).
\end{theorem}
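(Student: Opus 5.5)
Since this is the Keane--Rolles / Merkl--\"Ory--Rolles representation, the natural route is the one through partial exchangeability and an explicit computation of path probabilities. By the Kolmogorov extension theorem, it suffices to show that for every finite path \(x_0^T\) starting at \(v_0\),
\[
\Perrw^{v_0,A}(X_0^T=x_0^T)=\int \prod_{t=1}^T p_{x_{t-1}x_t}(w)\,\mu_{v_0,A}(dw).
\]
\textbf{ERRW side.} Multiplying the reinforcement probabilities along the path gives
\[
\Perrw^{v_0,A}(x_0^T)=\frac{\prod_{e\in E}\prod_{k=0}^{N_e-1}(a_e+k)}{\prod_{v\in V}\prod_{j=0}^{N_v-1}\bigl(a_v+2j+\1(v\neq v_0)\bigr)}.
\]
This uses two facts. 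First, each traversal of \(e\) raises \(L_e\) by one. Second, at the \(j\)-th departure from \(v\) (indexed from \(0\)), the total weight at \(v\) equals \(a_v+2j+\1(v\ne v_0)\), because every previous departure and every arrival at \(v\) added one unit. In Gamma form this is
\[
\prod_e \frac{\Gamma(a_e+N_e)}{\Gamma(a_e)}\prod_v \frac{\Gamma\bigl(\tfrac{a_v+\1(v\ne v_0)}{2}\bigr)}{2^{N_v}\,\Gamma\bigl(\tfrac{a_v+\1(v\ne v_0)}{2}+N_v\bigr)}.
\]
It depends on the path only through the counts \((N_e)\), \((N_v)\) and the endpoint \(x_T\).

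\textbf{Environment side.} Under \(\Psrw^{v_0,w}\), the path probability is \(\prod_e w_e^{N_e}/\prod_v w_v^{N_v}\). Integrating against the density \eqref{eq:mixing_measure_magic} gives
\[
Z_{v_0,A}^{-1}\int \frac{w_{v_0}^{1/2}\prod_e w_e^{a_e+N_e-1}}{\prod_v w_v^{(a_v+1)/2+N_v}}\sqrt{\tau(w)}\,dw_{-e_0}.
\]
For a path from \(v_0\) to \(x_T\), the number of arrivals at \(v\) is \(N_v-\1(v=v_0)+\1(v=x_T)\). Hence \(N_e\) summed over \(e\ni v\) equals \(2N_v-\1(v=v_0)+\1(v=x_T)\). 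So with updated weights \(a'_e=a_e+N_e\), the exponent of \(w_v\) becomes
\[
\frac{a'_v+1}{2}+\frac{\1(v=v_0)-\1(v=x_T)}{2}.
\]
The integrand is therefore exactly the unnormalized magic density for \((G,x_T,A')\): the factor \(w_{v_0}^{1/2}\) is replaced by \(w_{x_T}^{1/2}\). The integral then equals \(Z_{x_T,A'}\), and the environment side is \(Z_{x_T,A'}/Z_{v_0,A}\).

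\textbf{Matching the two sides.} The remaining step is to substitute \eqref{eq:Zv0A} and compare with the Gamma form above. The \(\Gamma(a_e)\) factors match directly. The powers of \(2\) match because \(\sum_e N_e=T=\sum_v N_v\). For the vertex Gamma factors, the identity \((a'_v+1-\1(v=x_T))/2=(a_v+\1(v\ne v_0))/2+N_v\) makes them match as well. This closes the computation, provided \eqref{eq:Zv0A} really is the normalizing constant for every starting vertex and every weight vector.

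\textbf{Main obstacle.} The hard step is exactly that normalization claim: \(\int(\text{unnormalized density})\,dw_{-e_0}=Z_{v_0,A}\) for all \(v_0\) and \(A\). I would take it from \cite{MerklOryRolles2008}, since \cref{def:magic_measure} asserts that \(\mu_{v_0,A}\) is a probability measure. If a self-contained proof were needed, I would try one of two routes.
\begin{itemize}
\item Change variables \(w_e=\sqrt{\cdot}\) in the STZ/VRJP style. Then \(\sqrt{\tau(w)}\) becomes a Gaussian determinant normalization, and the integral reduces to products of Gamma integrals over a spanning-tree parametrization.
\item Use the Keane--Rolles route instead. Prove partial exchangeability together with recurrence, apply Diaconis--Freedman's de Finetti theorem for Markov chains to get some mixing measure, and then identify it. For the identification, check the formula on trees by a direct Dirichlet/urn computation at each vertex, and extend to general graphs by induction, adding edges and matching moments.
\end{itemize}
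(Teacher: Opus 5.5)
Your argument is correct. Note, though, that the paper gives no proof of this theorem at all. It is quoted from \cite{KeaneRolles2000,MerklOryRolles2008}, and the fact that the density \eqref{eq:mixing_measure_magic} integrates to $Z_{v_0,A}$ for every root and every weight vector is likewise simply asserted in Definition~\ref{def:magic_measure}.

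Relative to that, your proposal is a genuine reduction, and the bookkeeping checks out:
\begin{enumerate}
\item At the $j$-th departure from $v$ (counting from $0$) the local weight is indeed $a_v+2j+\1(v\neq v_0)$.
\item Your environment-side computation is exactly the algebra of Appendix~\ref{sec:proof_posterior_conjugacy}.
\item The matching works. Both sides carry the factor $2^{-T}$, and the identity $\frac{a_v+1-\1(v=v_0)}{2}+N_v=\frac{a'_v+1-\1(v=x_T)}{2}$ follows from $d_v=2N_v-\1(v=v_0)+\1(v=x_T)$.
\end{enumerate}

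The logical direction is the reverse of the paper's. In Proposition~\ref{prop:posterior_conjugacy} the paper derives $\Perrw^{v_0,A}(X_0^T=x_0^T)=Z_{x_T,A+N}/Z_{v_0,A}$ \emph{from} the representation theorem. You instead prove both sides of this identity independently. This exhibits the theorem as a corollary of one analytic fact: the magic density is normalized by $Z_{v_0,A}$ for all $(v_0,A)$. You need this at the updated pair $(x_T,A+N)$, which the Definition covers.

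What your route buys is a proof that avoids partial exchangeability, recurrence, and the Diaconis--Freedman de Finetti theorem. On the Keane--Rolles route, that theorem yields only the existence of some mixing measure and leaves its density to be identified. What it costs is that all the difficulty sits in the normalization integral, which you, like the paper, import from \cite{MerklOryRolles2008}.

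Two small points. First, your opening step needs uniqueness of a law on $V^{\mathbb N}$ from its cylinder probabilities, via a $\pi$--$\lambda$ argument plus measurability of $w\mapsto\Psrw^{v_0,w}(\cdot)$. This is not Kolmogorov's extension theorem, which concerns existence. Second, the two self-contained routes you sketch for the normalization are not yet arguments. For instance, the STZ route rests on the normalization of the VRJP mixing measure, which is itself a nontrivial theorem. Your main proof does not depend on either sketch.
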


For a fixed environment \(w\), the chain \(P_w=(p_{ij}(w))\) is reversible with stationary distribution
\[
\pi_i(w):=\frac{w_i}{\sum_{v\in V} w_v},\qquad i\in V.
\]
Indeed, for \(i\sim j\), one has \(\pi_i(w)p_{ij}(w)=w_{ij}/\sum_{v\in V}w_v=\pi_j(w)p_{ji}(w)\).

The representation in \Cref{thm:errw_rwre} will be the 
basis for our analysis of the entropy rate and KL divergence in the rest of the paper.

\subsection{Quenched likelihood and count representation}

Let \(w\) be a fixed environment, and let \(x_0^T\) be a trajectory with \(x_0=v_0\). Under the quenched law \(\Psrw^{v_0,w}\), the probability of \(x_0^T\) 
for $T \ge 1$
is
\begin{equation}
\label{eq:quenched_likelihood_counts}
\Psrw^{v_0,w}(X_0^T=x_0^T)
=
\prod_{t=1}^T p_{x_{t-1},x_t}(w)
=
\prod_{u\in V}\prod_{v\sim u}
\left(\frac{w_{\{u,v\}}}{w_u}\right)^{N_{uv}(x_0^T)}.
\end{equation}
Equivalently, the quenched log-likelihood 
for $T \ge 1$
is
\begin{equation}
\label{eq:quenched_loglik}
\log \Psrw^{v_0,w}(X_0^T=x_0^T)
=
\sum_{u\in V}\sum_{v\sim u} N_{uv}(x_0^T)\log w_{\{u,v\}}
-
\sum_{u\in V} N_u(x_0^T)\log w_u.
\end{equation}
In particular, for fixed \(w\), the quenched law of the trajectory depends on \(x_0^T\) only through its directed transition count vector \(N(x_0^T):=(N_{uv}(x_0^T))_{u\sim v}\). We shall use this observation repeatedly.

For later use, note also that the stationary mass carried by an undirected edge \(e=\{i,j\}\) 
in the fixed environment \(w\) 
is \(2\pi_i(w)p_{ij}(w)=w_e/\sum_{g\in E}w_g\). Accordingly, we consider the 
stationary occupation measure on edges under the fixed environment \(w\) and denote it
by \(X_e(w):=w_e/\sum_{g\in E} w_g\), \(e\in E\). This quantity will play an important role in the 
rest
of the paper.

\section{Entropy Rate of ERRW}
\label{sec:entropy}

In this section, we study the entropy rate of ERRW through its random-environment representation. Throughout, let \((X_t)_{t\ge 0}\) 
denote the
ERRW on \((G,v_0,A)\), and let \(W\sim\mu_{v_0,A}\) denote the random environment from \Cref{thm:errw_rwre},
where \(\mu_{v_0,A}\) is defined in \eqref{eq:mixing_measure_magic}.
Conditional on \(W=w\), let \(P_w=(p_{ij}(w))\) be the corresponding quenched transition kernel and let \(\pi(w)\) be its stationary distribution.

\subsection{Entropy rate and the finite-state Markov-chain formula}

We write \(\Hent(\cdot)\) for Shannon entropy with 
logarithms to the natural base.
Since \(X_0=v_0\) is deterministic, \(\Hent(X_0^T)\)
for $T \ge 1$
is the same as the entropy of \((X_1,\dots,X_T)\); nevertheless, we continue to write \(\Hent(X_0^T)\) for convenience.

For each \(T\ge 1\) and each environment \(w\), define the quenched path entropy by
\(h_T(w):=\Hent_{\Psrw^{v_0,w}}(X_0^T)\). Since \(V\) is finite, this is well-defined and satisfies \(0\le h_T(w)\le T\log|V|\). We define the conditional entropy of \(X_0^T\) given \(W\) by \(\Hent(X_0^T\mid W):=\E[h_T(W)]\), and the mutual information between \(W\) and \(X_0^T\) by
\(I(W;X_0^T):=\Hent(X_0^T)-\Hent(X_0^T\mid W)\). Equivalently,
\(I(W;X_0^T)=\KL(P_{W,X_0^T}\,\|\,P_W\otimes P_{X_0^T})\), where \(\KL\) denotes relative entropy. Since \(X_0^T\) takes values in the finite set \(V^{T+1}\), all these quantities are finite.

\begin{definition}[Entropy rate]
\label{def:entropy_rate}
Let \((Y_t)_{t\ge 0}\) be a discrete-time process on a finite state space. Its \emph{entropy rate} is
defined as
\[
r:=\lim_{T\to\infty}\frac{1}{T}\Hent(Y_0^T),
\]
provided the limit exists.
\end{definition}

The entropy rate of a finite irreducible Markov chain is well known (e.g., Cover and Thomas~\cite[Sec.~4.2]{cover_thomas_2006}).

\begin{proposition}[Entropy rate of a finite Markov chain]
\label{prop:mc_entropy_rate}
Let \(P=(p_{ij})_{i,j\in V}\) be an irreducible Markov chain on a finite state space \(V\), and let \(\pi\) be its stationary distribution. Then the entropy rate exists and is given by
\[
r(P) :=-\sum_{i\in V}\pi_i\sum_{j\in V} p_{ij}\log p_{ij}.
\]
\end{proposition}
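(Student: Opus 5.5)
The plan is to apply the chain rule for entropy together with the Markov property, and then use convergence of Cesàro averages. Let the chain start from an arbitrary initial distribution $\nu$ on $V$; in our setting $\nu$ is the point mass at $v_0$. The chain rule and the Markov property give
\[
\Hent(Y_0^T)=\Hent(Y_0)+\sum_{t=1}^{T}\Hent(Y_t\mid Y_{t-1}),
\]
since $\Hent(Y_t\mid Y_0^{t-1})=\Hent(Y_t\mid Y_{t-1})$. Set $h_i:=-\sum_j p_{ij}\log p_{ij}$, the entropy of row $i$. Then
\[
\Hent(Y_t\mid Y_{t-1})=\sum_i \nu_{t-1}(i)\,h_i, \qquad \nu_{t}:=\nu P^{t}.
\]
Hence
\[
\frac1T\Hent(Y_0^T)=\frac{\Hent(Y_0)}{T}+\sum_i\Bigl(\frac1T\sum_{t=0}^{T-1}\nu_t(i)\Bigr)h_i .
\]
The first term is at most $\log|V|/T\to0$.

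The key step is to show that the Cesàro averages $\bar\nu_T:=\frac1T\sum_{t=0}^{T-1}\nu P^t$ converge to $\pi$. This is the one place where care is needed, because irreducibility without aperiodicity (for example, bipartite graphs) does not give $\nu P^t\to\pi$. Averaging removes the problem.

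The argument runs as follows. The vectors $\bar\nu_T$ lie in the compact probability simplex, so it suffices to show every subsequential limit equals $\pi$. We have
\[
\bar\nu_T P-\bar\nu_T=\frac1T(\nu P^T-\nu),
\]
whose norm is at most $2/T$. Therefore any limit point $\rho$ satisfies $\rho P=\rho$. Irreducibility of a finite chain makes the stationary distribution unique (Perron--Frobenius, or the standard return-time argument), so $\rho=\pi$.

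Since $V$ is finite and each $h_i\le\log|V|$ is finite, we may pass to the limit term by term. This gives
\[
\frac1T\Hent(Y_0^T)\to\sum_i\pi_i h_i=-\sum_i\pi_i\sum_j p_{ij}\log p_{ij}=r(P),
\]
using the convention $0\log 0=0$. In particular the limit exists, independently of the initial distribution.

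Uniqueness of the stationary distribution is the only nontrivial input. If a self-contained argument is wanted, one can apply the ergodic theorem to $Q=\frac12(I+P)$. This chain is irreducible and aperiodic with the same stationary distributions as $P$, so its powers converge and any stationary $\rho$ satisfies $\rho=\rho Q^t\to\pi$.
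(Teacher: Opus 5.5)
Your argument is correct and is essentially the one the paper uses. The paper only cites Cover--Thomas for this proposition, but in the proof of Theorem~\ref{thm:entropy_rate_annealed} it runs exactly your chain-rule, Markov-property and Ces\`aro-average argument for the quenched chain started at \(v_0\). Your fixed-point proof of the Ces\`aro convergence, via \(\bar\nu_T P-\bar\nu_T=\frac1T(\nu P^T-\nu)\) and uniqueness of the stationary distribution, fills in the one step the paper only asserts, and it correctly covers periodic chains and non-stationary initial laws.
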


In the present setting, for each fixed environment \(w\), the quenched chain \(P_w\) is irreducible on the finite connected graph \(G\), so Proposition \ref{prop:mc_entropy_rate} applies.

\subsection{Annealed reduction through the random environment}
\label{subsec:anneal}

We first 
bound
the entropy 
\(\Hent(X_0^T)\) under the law of the
ERRW 
on \((G,v_0,A)\) in terms of
an annealed average of the 
corresponding quantities for
the quenched chains
corresponding to each environment \(w\).

\begin{lemma}[Annealed reduction]
\label{lem:entropy_anneal}
For every \(T\ge 1\),
\begin{equation}
\label{eq:anneal_mi}
\Hent(X_0^T)=\Hent(X_0^T\mid W)+I(W;X_0^T).
\end{equation}
Moreover,
\begin{equation}
\label{eq:mi_log_bound}
I(W;X_0^T)\le (2|E|-1)\log(T+1).
\end{equation}
In particular, \(T^{-1}I(W;X_0^T)\to 0\) as \(T\to\infty\).
\end{lemma}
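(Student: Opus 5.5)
The identity \eqref{eq:anneal_mi} is just the definition of \(I(W;X_0^T)\) rearranged: since \(X_0^T\) takes values in the finite set \(V^{T+1}\), all three terms are finite, and \(I(W;X_0^T):=\Hent(X_0^T)-\Hent(X_0^T\mid W)\). The only work is the logarithmic bound \eqref{eq:mi_log_bound}. For it we use the count representation \eqref{eq:quenched_likelihood_counts}. For every fixed \(w\), the quenched likelihood of \(x_0^T\) depends on the path only through the directed count vector \(N(x_0^T)\), and the same is then true of the annealed likelihood \(\Perrw^{v_0,A}(x_0^T)=\int \Psrw^{v_0,w}(x_0^T)\,\mu_{v_0,A}(dw)\).

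Hence \(N:=N(X_0^T)\) is a sufficient statistic for \(W\). Conditional on \(N=k\), the path is uniformly distributed over the paths with count vector \(k\), independently of \(W\). So \(W\to N\to X_0^T\) is a Markov chain, and \(N\) is a function of \(X_0^T\). The data-processing inequality applied in both directions gives
\[
I(W;X_0^T)=I(W;N)\le \Hent(N)\le \log|\mathcal N_T|,
\]
where \(\mathcal N_T\) is the set of values \(N\) can take. The intermediate inequality \(I(W;N)\le\Hent(N)\) holds because \(N\) is discrete, so \(\Hent(N\mid W)\ge 0\). I will phrase the argument directly via the chain rule to avoid measure-theoretic fuss with the continuous \(W\): \(I(W;X_0^T)=I(W;N)+I(W;X_0^T\mid N)\), and the second term vanishes by sufficiency.

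It remains to count \(\mathcal N_T\). There are \(2|E|\) directed edges, and the counts \(N_{uv}\) are nonnegative integers summing to \(T\). This crude bound gives \(|\mathcal N_T|\le (T+1)^{2|E|}\), i.e.\ exponent \(2|E|\) rather than \(2|E|-1\). To save one factor, note that once the \(2|E|-1\) counts other than a fixed directed edge are specified, the last count is determined by \(\sum N_{uv}=T\). Each of the \(2|E|-1\) free counts lies in \(\{0,\dots,T\}\), so \(|\mathcal N_T|\le (T+1)^{2|E|-1}\). This yields \eqref{eq:mi_log_bound}. The final claim \(T^{-1}I(W;X_0^T)\to 0\) follows since \(\log(T+1)/T\to 0\).

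The main obstacle, such as it is, is making the sufficiency step rigorous with a continuous parameter \(W\). I would handle it by writing the conditional law of \(X_0^T\) given \((W,N)\) explicitly as uniform on the finite fibre \(\{x_0^T: N(x_0^T)=k,\ x_0=v_0\}\). That fibre does not depend on \(w\), so the conditional entropy given \((W,N)\) equals that given \(N\), and the two mutual informations coincide.
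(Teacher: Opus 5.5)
Your proposal is correct and follows the paper's proof essentially step for step: sufficiency of the directed count vector $N(X_0^T)$ (the path given $N$ is uniform on a fibre that does not depend on $w$), the chain rule giving $I(W;X_0^T)=I(W;N)\le \Hent(N)$, and a count of the support of $N$. The only cosmetic difference is that you bound the support by fixing $2|E|-1$ free coordinates in $\{0,\dots,T\}$, whereas the paper counts weak compositions, $\binom{T+2|E|-1}{2|E|-1}\le (T+1)^{2|E|-1}$, and both give the same bound.
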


\begin{proof}
The identity \eqref{eq:anneal_mi} is immediate from the definition of mutual information.

To prove \eqref{eq:mi_log_bound}, let \(N(X_0^T):=(N_{uv}(X_0^T))_{u\sim v}\) be the directed transition-count vector. By \eqref{eq:quenched_likelihood_counts}, for every fixed environment \(w\), the quenched probability of a trajectory \(x_0^T\) depends on \(x_0^T\) only through \(N(x_0^T)\). Thus, if \(\mathcal S(n):=\{x_0^T\in V^{T+1}:N(x_0^T)=n,\ x_0=v_0\}\), then \(\Psrw^{v_0,w}(X_0^T=x_0^T)\) is constant on \(\mathcal S(n)\). Hence, conditional on \(\{N(X_0^T)=n\}\), the quenched law of \(X_0^T\) is uniform on \(\mathcal S(n)\), and in particular does not depend on \(w\). Equivalently, \(W\perp X_0^T\mid N(X_0^T)\).

By the chain rule for mutual information,
\(I(W;X_0^T)=I(W;N(X_0^T))+I(W;X_0^T\mid N(X_0^T))=I(W;N(X_0^T))\).
Since \(N(X_0^T)\) is discrete, \(I(W;N(X_0^T))\le \Hent(N(X_0^T))\).

Now \(N(X_0^T)\) has \(2|E|\) nonnegative integer coordinates and total mass \(T\), so its support is contained in the set of weak compositions of \(T\) into \(2|E|\) parts. Therefore
\[
|\supp(N(X_0^T))|\le \binom{T+2|E|-1}{2|E|-1},
\]
and hence
\[
\Hent(N(X_0^T))
\le
\log \binom{T+2|E|-1}{2|E|-1}
\le
(2|E|-1)\log(T+1).
\]
This proves \eqref{eq:mi_log_bound}.
\end{proof}

In the special case of an $n$-star the bound of \eqref{eq:mi_log_bound} is tight up to a constant, see Appendix~\ref{sec:mi_tightness_star}.

We now deduce the existence of the entropy rate of the ERRW on \((G,v_0,A)\)
and provide an annealed representation for this entropy rate.

\begin{theorem}[Annealed entropy-rate formula]
\label{thm:entropy_rate_annealed}
The entropy rate of the ERRW on \((G,v_0,A)\) exists. More precisely, for \(\mu_{v_0,A}\)-a.e.\ environment \(w\), the quenched chain \(P_w\) has entropy rate
\[
r(P_w)=-\sum_{i\in V}\pi_i(w)\sum_{j\in V} p_{ij}(w)\log p_{ij}(w),
\]
and
\begin{equation}
\label{eq:errw_entropy_rate_annealed}
r(v_0,A):=\lim_{T\to\infty}\frac{1}{T}\Hent(X_0^T)=\E\left[r(P_W)\right].
\end{equation}
\end{theorem}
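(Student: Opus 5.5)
The plan is to combine \Cref{lem:entropy_anneal} with the quenched entropy rate of \Cref{prop:mc_entropy_rate}, and then pass the limit inside the expectation by dominated convergence. Dividing \eqref{eq:anneal_mi} by \(T\) and using \eqref{eq:mi_log_bound}, we get
\[
\frac1T\Hent(X_0^T)=\E\Bigl[\frac{h_T(W)}{T}\Bigr]+O\Bigl(\frac{\log(T+1)}{T}\Bigr).
\]
So it suffices to show that \(\E[h_T(W)/T]\to\E[r(P_W)]\). Existence of the limit defining \(r(v_0,A)\) then follows automatically.

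\textbf{Quenched step.} First I will note that \(\mu_{v_0,A}\) is supported on \((0,\infty)^E\). For every such \(w\), \(P_w\) is irreducible on the connected graph \(G\), and \(\pi(w)\) is its stationary law. One subtlety is that \(h_T(w)\) is the entropy of the chain started at the fixed vertex \(v_0\), not stationary. I will therefore verify directly that \(h_T(w)/T\to r(P_w)\). By the chain rule and the Markov property,
\[
h_T(w)=\sum_{t=1}^T\sum_i \P(X_{t-1}=i)\,H_i(w),
\qquad
H_i(w):=-\sum_j p_{ij}(w)\log p_{ij}(w).
\]
Hence
\[
\frac{h_T(w)}{T}=\sum_i \bar\nu_T(i)H_i(w),
\]
where \(\bar\nu_T\) is the Cesàro average of the marginal laws \(\nu_0,\dots,\nu_{T-1}\). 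For an irreducible finite chain these Cesàro averages converge to \(\pi(w)\), even when the chain is periodic; this matters, since \(G\) may be bipartite. Therefore \(h_T(w)/T\to\sum_i\pi_i(w)H_i(w)=r(P_w)\). This is exactly the formula in the theorem, and it holds for every \(w\), in particular \(\mu_{v_0,A}\)-a.e.

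\textbf{Interchange step.} Since \(0\le h_T(w)/T\le\log|V|\) uniformly in \(w\) and \(T\), bounded convergence gives \(\E[h_T(W)/T]\to\E[r(P_W)]\). Measurability of \(w\mapsto h_T(w)\) and of \(w\mapsto r(P_w)\) is clear, because both are continuous functions of \(w\) on \((0,\infty)^E\). Combining this with the first display yields \eqref{eq:errw_entropy_rate_annealed}, and \(\E[r(P_W)]\le\log|V|\) is finite.

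The main obstacle is modest: being careful that \Cref{prop:mc_entropy_rate} is applied to the non-stationary start \(v_0\), and possibly to a periodic chain. The Cesàro-average argument above handles both issues without any rate estimates. Uniform boundedness by \(\log|V|\) removes all integrability concerns.
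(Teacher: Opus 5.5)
Your proposal is correct and follows essentially the same route as the paper: it uses the chain rule and the Markov property to write $h_T(w)/T$ as a Ces\`aro-weighted average of the one-step entropies, then the convergence of Ces\`aro averages to $\pi(w)$ for the irreducible (possibly periodic) quenched chain, then the uniform bound $\log|V|$ for dominated convergence, and finally Lemma~\ref{lem:entropy_anneal} to make the mutual-information term vanish after dividing by $T$. Your extra remarks on measurability of $w\mapsto h_T(w)$ and on the non-stationary start at $v_0$ are correct refinements of points the paper handles implicitly.
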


\begin{proof}
Fix an environment \(w\). For \(i\in V\), define the one-step entropy by
\(h_i(w):=-\sum_{j\in V} p_{ij}(w)\log p_{ij}(w)
=-\sum_{j\sim i}\frac{w_{ij}}{w_i}\log\frac{w_{ij}}{w_i}\).
Since under \(\Psrw^{v_0,w}\) the process is Markov, the chain rule for entropy gives
\(h_T(w)=\sum_{t=1}^T \Hent_{\Psrw^{v_0,w}}(X_t\mid X_0^{t-1})\), and the Markov property yields
\(\Hent_{\Psrw^{v_0,w}}(X_t\mid X_0^{t-1})
=\Hent_{\Psrw^{v_0,w}}(X_t\mid X_{t-1})
=\E^{v_0,w}[h_{X_{t-1}}(w)]\).
Therefore
\[
h_T(w)=\sum_{t=0}^{T-1}\E^{v_0,w}[h_{X_t}(w)].
\]

Since \(P_w\) is irreducible on a finite state space, its Ces\`aro averages converge to the stationary distribution \(\pi(w)\). Thus, for each \(i\in V\),
\[
\frac1T\sum_{t=0}^{T-1}\Psrw^{v_0,w}(X_t=i)\longrightarrow \pi_i(w).
\]
Multiplying by \(h_i(w)\) and summing over \(i\), we obtain
\[
\frac{1}{T}h_T(w)
=
\sum_{i\in V}
\left(
\frac1T\sum_{t=0}^{T-1}\Psrw^{v_0,w}(X_t=i)
\right)
h_i(w)
\longrightarrow
\sum_{i\in V}\pi_i(w)h_i(w)=r(P_w).
\]
Moreover, \(0\le h_i(w)\le \log\deg(i)\le \log|V|\), so \(0\le T^{-1}h_T(w)\le \log|V|\) and \(0\le r(P_w)\le \log|V|\).

Now divide \eqref{eq:anneal_mi} by \(T\). By Lemma \ref{lem:entropy_anneal}, the mutual-information term is \(O((\log T)/T)=o(1)\). On the other hand,
\[
\frac1T\Hent(X_0^T\mid W)=\E\left[\frac1T h_T(W)\right].
\]
Since \(T^{-1}h_T(W)\to r(P_W)\) almost surely and \(0\le T^{-1}h_T(W)\le \log|V|\), dominated convergence gives
\[
\frac1T\Hent(X_0^T\mid W)\longrightarrow \E[r(P_W)].
\]
Combining these two limits with \eqref{eq:anneal_mi} proves \eqref{eq:errw_entropy_rate_annealed}.
\end{proof}

\subsection{Exact formulas and bounds for the ERRW entropy rate}

We now rewrite the quenched entropy rate in a form better adapted to the random environment.

\begin{theorem}[Exact formula and upper bound]
\label{thm:annealed_entropy_rate}
Let \(r(v_0,A)\) be the ERRW entropy rate from \Cref{thm:entropy_rate_annealed}. Then
\begin{equation}
\label{eq:annealed_entropy_rate_integral}
r(v_0,A)
=
-\int
\left(
\sum_{\{i,j\}\in E}
\frac{w_{ij}}{\sum_{\{u,v\}\in E} w_{uv}}
\log\frac{w_{ij}}{\sqrt{w_iw_j}}
\right)
\mu_{v_0,A}(dw),
\end{equation}
where \(\mu_{v_0,A}\) is defined in \eqref{eq:mixing_measure_magic}.
For \(v\in V\), write \(b_v:=\frac{a_v+1-\1(v=v_0)}{2}\). Then
\begin{equation}
\label{eq:r_upper_final_clean}
r(v_0,A)
\le
\sum_{e=\{i,j\}\in E}
\bigg(
\frac{a_e}{2}\cdot
\frac{\Gamma(b_i)}{\Gamma(b_i+\tfrac12)}\cdot
\frac{\Gamma(b_j)}{\Gamma(b_j+\tfrac12)}
\bigg)
\bigg(
\log 2+\frac12\Psi\bigl(b_i+\tfrac12\bigr)+\frac12\Psi\bigl(b_j+\tfrac12\bigr)-\Psi(a_e+1)
\bigg).
\end{equation}
Here, for a real number $x > 0$, we use the standard notation $\Gamma(x) := \int_0^\infty t^{x-1} e^{-t} dt$ for the Gamma function
and $\Psi(x) := \frac{d}{dx} \log \Gamma(x)$ for the digamma function.
\end{theorem}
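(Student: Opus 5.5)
The plan has two parts: an algebraic rewriting of $r(P_w)$ that gives \eqref{eq:annealed_entropy_rate_integral}, and then a pointwise domination followed by a change of parameters in the magic formula that gives \eqref{eq:r_upper_final_clean}.

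\emph{Exact formula.} Start from \Cref{thm:entropy_rate_annealed}, which gives $r(v_0,A)=\E[r(P_W)]$. Put $S_E:=\sum_{g\in E}w_g$, so that $\sum_v w_v=2S_E$ and $\pi_i(w)p_{ij}(w)=w_{ij}/(2S_E)$. Then
\[
r(P_w)=-\sum_{i}\sum_{j\sim i}\frac{w_{ij}}{2S_E}\log\frac{w_{ij}}{w_i}.
\]
Group the two directed terms belonging to each undirected edge $\{i,j\}$. They combine to
\[
\frac{w_{ij}}{2S_E}\bigl(2\log w_{ij}-\log w_i-\log w_j\bigr)=\frac{w_{ij}}{S_E}\log\frac{w_{ij}}{\sqrt{w_iw_j}},
\]
and integrating against $\mu_{v_0,A}$ gives \eqref{eq:annealed_entropy_rate_integral}.

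\emph{Pointwise domination.} For $e=\{i,j\}$ we have $w_e\le w_i\le S_E$ and $w_e\le w_j\le S_E$, hence $\sqrt{w_iw_j}\le S_E$. Each summand $\frac{w_e}{S_E}\log\frac{\sqrt{w_iw_j}}{w_e}$ is nonnegative, so we may replace $S_E$ by the smaller quantity $\sqrt{w_iw_j}$. This gives
\[
r(v_0,A)\le\sum_{e=\{i,j\}}\E_{\mu_{v_0,A}}\!\left[\frac{w_e}{\sqrt{w_iw_j}}\log\frac{\sqrt{w_iw_j}}{w_e}\right].
\]
All quantities involved are invariant under rescaling of $w$, so the gauge $w_{e_0}=1$ plays no role.

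\emph{Tilting.} Multiply the density \eqref{eq:mixing_measure_magic} by $w_e w_i^{-1/2}w_j^{-1/2}$. This is exactly the unnormalized magic density with $A$ replaced by $A+\delta_e$: $a_e$ increases by $1$, and $a_i$ and $a_j$ each increase by $1$, so $(a_i+1)/2$ and $(a_j+1)/2$ each increase by $\tfrac12$. Hence
\[
\E_{v_0,A}\!\left[\frac{w_e}{\sqrt{w_iw_j}}\,f(w)\right]=\frac{Z_{v_0,A+\delta_e}}{Z_{v_0,A}}\,\E_{v_0,A+\delta_e}[f(w)].
\]
From \eqref{eq:Zv0A}, the ratio $Z_{v_0,A+\delta_e}/Z_{v_0,A}$ equals
\[
\frac{\Gamma(a_e+1)}{\Gamma(a_e)}\cdot\frac{\Gamma(b_i)}{\Gamma(b_i+\tfrac12)}\cdot\frac{\Gamma(b_j)}{\Gamma(b_j+\tfrac12)}\cdot\frac12
=\frac{a_e}{2}\,\frac{\Gamma(b_i)\Gamma(b_j)}{\Gamma(b_i+\tfrac12)\Gamma(b_j+\tfrac12)},
\]
which is the prefactor in \eqref{eq:r_upper_final_clean}.

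\emph{Log-moments via the exponential family.} It remains to compute $\E_{v_0,A'}[\tfrac12\log(w_iw_j)-\log w_e]$ with $A'=A+\delta_e$. The family $\{\mu_{v_0,A}\}$ is an exponential family in $A$: the $a_e$-dependence of the unnormalized density is $\exp\bigl(a_e(\log w_e-\tfrac12\log w_i-\tfrac12\log w_j)\bigr)$. Since $\int(\text{unnormalized density})=Z_{v_0,A}$, differentiating in $a_e$ gives
\[
\E_{v_0,A}\!\left[\log w_e-\tfrac12\log w_i-\tfrac12\log w_j\right]=\partial_{a_e}\log Z_{v_0,A}=\Psi(a_e)-\tfrac12\Psi(b_i)-\tfrac12\Psi(b_j)-\log 2,
\]
where $-\log 2$ comes from the factor $2^{-\sum a_e}$. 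Evaluating at $A'$ (so $a_e\to a_e+1$, $b_i\to b_i+\tfrac12$, $b_j\to b_j+\tfrac12$) and negating gives
\[
\log 2+\tfrac12\Psi(b_i+\tfrac12)+\tfrac12\Psi(b_j+\tfrac12)-\Psi(a_e+1).
\]
Multiplying by the prefactor and summing over $e$ yields \eqref{eq:r_upper_final_clean}.

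\emph{Main obstacle.} The delicate step is justifying differentiation under the integral, and the case $e=e_0$ where $w_e$ is pinned. For the pinning, scale invariance of $\log(w_e/\sqrt{w_iw_j})$ and of the whole density (its total homogeneity degree makes the pinned form gauge-independent) shows that the formula does not depend on which edge is pinned. For differentiation under the integral, $Z_{v_0,A}<\infty$ on an open set of $A$, so the standard analyticity of the log-partition function of an exponential family in the interior of its natural parameter domain applies.
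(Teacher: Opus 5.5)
Your proposal is correct and follows the paper's proof essentially step for step. You pair the directed terms to get \eqref{eq:annealed_entropy_rate_integral}, dominate $w_e/\sum_g w_g$ by $Y_e=w_e/\sqrt{w_iw_j}$ (via $w_i,w_j\le\sum_g w_g$, where the paper uses $\tfrac12(w_i+w_j)\le\sum_g w_g$), tilt $A\to A+\1_e$ to obtain $\E_{\mu_{v_0,A}}[Y_e\log Y_e]=\E_{\mu_{v_0,A}}[Y_e]\,\E_{\mu_{v_0,A+\1_e}}[\log Y_e]$, and read off the digamma terms from $\partial_{a_e}\log Z_{v_0,A}$, exactly as the paper does. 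For $e=e_0$, your gauge-invariance argument is valid but unnecessary: with $w_{e_0}=1$, the tilt by $Y_{e_0}=1/\sqrt{w_iw_j}$ directly produces the $A+\1_{e_0}$ density, which is how the paper handles it.
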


\begin{proof}
By \Cref{thm:entropy_rate_annealed}, \(r(v_0,A)=\E[r(P_W)]\). For a fixed environment \(w\),
\[
r(P_w)
=
-\sum_{i\in V}\pi_i(w)\sum_{j\in V} p_{ij}(w)\log p_{ij}(w)
=
-\frac{1}{\sum_{v\in V}w_v}
\sum_{i\in V}\sum_{j\sim i} w_{ij}\log\frac{w_{ij}}{w_i}.
\]
For a fixed undirected edge \(\{i,j\}\), the two oriented contributions are
\(w_{ij}\log\frac{w_{ij}}{w_i}\) and \(w_{ij}\log\frac{w_{ij}}{w_j}\), whose sum is \(2w_{ij}\log\frac{w_{ij}}{\sqrt{w_iw_j}}\). Summing over all edges and using \(\sum_{v\in V}w_v=2\sum_{e\in E}w_e\), we obtain
\begin{equation}    \label{eq:entropy_conditioned}
r(P_w)
=
-\sum_{\{i,j\}\in E}
\frac{w_{ij}}{\sum_{e\in E} w_e}
\log\frac{w_{ij}}{\sqrt{w_iw_j}}.
\end{equation}
Averaging over \(W\sim\mu_{v_0,A}\) yields \eqref{eq:annealed_entropy_rate_integral}.

For the upper bound, define \(Y_e(w):=\frac{w_e}{\sqrt{w_iw_j}}\) for \(e=\{i,j\}\in E\). Since \(w_e\le w_i\) and \(w_e\le w_j\), we have \(0<Y_e(w)\le 1\). Also, \(\sum_{g\in E} w_g\ge \frac12(w_i+w_j)\ge \sqrt{w_iw_j}\), so
\[
\frac{w_e}{\sum_{g\in E} w_g}\le Y_e(w).
\]
Because \(-\log Y_e(w)\ge 0\), 
\eqref{eq:annealed_entropy_rate_integral} 
and \eqref{eq:entropy_conditioned} 
imply
\[
r(P_w)\le -\sum_{e\in E} Y_e(w)\log Y_e(w),
\qquad
r(v_0,A)\le -\sum_{e\in E}\E_{\mu_{v_0,A}}[Y_e(W)\log Y_e(W)].
\]

Fix \(e=\{i,j\}\in E\), and let \(A^{(e)}:=A+\1_e\). 
Comparing the densities
\eqref{eq:mixing_measure_magic} for \(A\) and \(A^{(e)}\), we see that increasing
\(a_e\) by \(1\) multiplies the unnormalized density by
\[
\frac{w_e}{\sqrt{w_iw_j}}=Y_e(w),
\]
while the remaining change is only the ratio of normalizing constants. This remains
true when \(e=e_0\): in the pinned gauge \(w_{e_0}=1\), the edge factor \(w_{e_0}\)
is equal to \(1\), but the two incident vertex factors \(w_i^{-1/2}\) and
\(w_j^{-1/2}\) are still introduced. Hence the multiplier is still
\[
Y_{e_0}(w)=\frac{w_{e_0}}{\sqrt{w_iw_j}}=\frac{1}{\sqrt{w_iw_j}}.
\]
Thus
\[
\frac{d\mu_{v_0,A^{(e)}}}{d\mu_{v_0,A}}(w)
=
\frac{Z_{v_0,A}}{Z_{v_0,A^{(e)}}}\,Y_e(w).
\]
Integrating gives \(\E_{\mu_{v_0,A}}[Y_e(W)]=\frac{Z_{v_0,A^{(e)}}}{Z_{v_0,A}}\), while multiplying by \(\log Y_e(w)\) before integrating gives
\[
\E_{\mu_{v_0,A}}[Y_e(W)\log Y_e(W)]
=
\E_{\mu_{v_0,A}}[Y_e(W)]\,
\E_{\mu_{v_0,A^{(e)}}}[\log Y_e(W)].
\]

The first factor 
on the RHS
is computed directly from \eqref{eq:Zv0A}:
\[
\E_{\mu_{v_0,A}}[Y_e(W)]
=
\frac{a_e}{2}\cdot
\frac{\Gamma(b_i)}{\Gamma(b_i+\tfrac12)}\cdot
\frac{\Gamma(b_j)}{\Gamma(b_j+\tfrac12)}.
\]
For the second factor, the dependence of the density \eqref{eq:mixing_measure_magic} on \(a_e\) is exactly through the factor \(Y_e(w)^{a_e}\), so
\[
\frac{\partial}{\partial a_e}\log Z_{v_0,A}
=
\E_{\mu_{v_0,A}}[\log Y_e(W)].
\]
Differentiating the explicit formula \eqref{eq:Zv0A} then gives
\[
\E_{\mu_{v_0,A}}[\log Y_e(W)]
=
\Psi(a_e)-\frac12\Psi(b_i)-\frac12\Psi(b_j)-\log 2.
\]
Applying this identity at the shifted parameter \(A^{(e)}\) gives
\[
\E_{\mu_{v_0,A^{(e)}}}[\log Y_e(W)]
=
\Psi(a_e+1)-\frac12\Psi\bigl(b_i+\tfrac12\bigr)-\frac12\Psi\bigl(b_j+\tfrac12\bigr)-\log 2.
\]
Substituting these identities into the previous bound yields \eqref{eq:r_upper_final_clean}.
\end{proof}

\begin{remark}
The exact formula \eqref{eq:annealed_entropy_rate_integral} shows that the entropy rate is determined by the random environment through the ratios \(w_{ij}/\sqrt{w_iw_j}\), while the upper bound \eqref{eq:r_upper_final_clean} 
provides an estimate for
this dependence explicitly in terms of the initial weights.
\end{remark}

\section{KL Divergence Between Environment Laws}
\label{sec:kl_env}

Throughout this section, fix a finite connected graph \(G=(V,E)\), a distinguished initial vertex \(v_0\in V\), and a reference edge \(e_0\in E\). We work throughout in the pinned gauge \(w_{e_0}=1\). For \(A=(a_e)_{e\in E}\in(0,\infty)^E\), let \(\mu_{v_0,A}\) be the ERRW mixing measure from Definition \ref{def:magic_measure}, 
let \(Z_{v_0,A}\) be the corresponding normalizing constant, and write \(\Phi(A):=\log Z_{v_0,A}\). As before, for \(v\in V\) we write \(a_v:=\sum_{e\ni v} a_e\) and \(b_v=b_v(A):=\frac{a_v+1-\1(v=v_0)}{2}\).

For each edge \(e=\{u,v\}\in E\), define \(Y_e(w):=\frac{w_e}{\sqrt{w_uw_v}}\) and \(T_e(w):=\log Y_e(w)\). Thus \(T_e\) is exactly the logarithm of the quantity that appeared already in \cref{sec:entropy}.

The starting point is that, once the graph and the initial vertex are fixed, the family \(\{\mu_{v_0,A}\}_{A\in(0,\infty)^E}\) forms a canonical exponential family 
in the standard sense of exponential-family theory; see, for example,
\cite{Brown1986,WainwrightJordan2008}.
As a consequence, the KL divergence between two environment laws is determined by the log-normalizing constant 
\(\Phi(A)\), where $A$ denotes the vector of initial edge weights.

\subsection{Explicit formula for environment-level KL divergence}

\begin{theorem}[Exponential-family structure and explicit KL formula]
\label{thm:env_kl_main}
The family \(\{\mu_{v_0,A}\}_{A\in(0,\infty)^E}\) is a regular canonical exponential family on the pinned space \(\mathcal W_{e_0}:=\{w\in(0,\infty)^E:\ w_{e_0}=1\}\), with natural parameter \(A\), sufficient statistic \(T=(T_e)_{e\in E}\), and log-partition function \(\Phi\). More precisely,
\begin{equation}
\label{eq:exp_family_magic}
d\mu_{v_0,A}(w)
=
\exp\Big(\sum_{e\in E} a_e T_e(w)-\Phi(A)\Big)\,\nu(dw),
\end{equation}
where the base measure \(\nu\) does not depend on \(A\).
Consequently, for every \(e,e'\in E\),
\begin{equation}
\label{eq:grad_hess_env}
\frac{\partial}{\partial a_e}\Phi(A)=\E_{\mu_{v_0,A}}[T_e(W)],
\qquad
\frac{\partial^2}{\partial a_e\partial a_{e'}}\Phi(A)
=
\Cov_{\mu_{v_0,A}}\bigl(T_e(W),T_{e'}(W)\bigr).
\end{equation}
If \(A^{(0)},A^{(1)}\in(0,\infty)^E\), and \(\mu_s:=\mu_{v_0,A^{(s)}}\) for \(s\in\{0,1\}\), 
then the KL divergence is the Bregman divergence of \(\Phi\):
\begin{equation}
\label{eq:KL_bregman_env}
\KL(\mu_0\|\mu_1)
=
\Phi(A^{(1)})-\Phi(A^{(0)})
-\sum_{e\in E}(a_e^{(1)}-a_e^{(0)})\,
\frac{\partial}{\partial a_e}\Phi(A)\Big|_{A=A^{(0)}}.
\end{equation}
Define, for \(p,q>0\), 
$\Lambda(p,q):=\log\Gamma(q)-\log\Gamma(p)-(q-p)\Psi(p)$.
Then, with \(b_v^{(s)}:=b_v(A^{(s)})\),
\begin{equation}
\label{eq:KL_env_closed_form}
\KL(\mu_0\|\mu_1)
=
\sum_{e\in E}\Lambda(a_e^{(0)},a_e^{(1)})
-
\sum_{v\in V}\Lambda(b_v^{(0)},b_v^{(1)}).
\end{equation}
\end{theorem}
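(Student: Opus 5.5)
The plan is to first rewrite the magic-formula density \eqref{eq:mixing_measure_magic} in exponential-family form, and then read off the gradient, Hessian, Bregman and closed-form identities from the explicit normalizing constant \eqref{eq:Zv0A}.

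\textbf{Exponential-family form.} Write
\[
\frac{w_{v_0}^{1/2}\prod_e w_e^{a_e-1}}{\prod_v w_v^{(a_v+1)/2}}
=\prod_{e=\{u,v\}} \Bigl(\frac{w_e}{\sqrt{w_uw_v}}\Bigr)^{a_e}\cdot \frac{w_{v_0}^{1/2}}{\prod_e w_e\prod_v w_v^{1/2}}.
\]
This works because \(\sum_v a_v/2=\sum_e a_e\) pairs each \(w_e^{a_e}\) with \((w_uw_v)^{-a_e/2}\). Set
\[
\nu(dw):=\frac{w_{v_0}^{1/2}\sqrt{\tau(w)}}{\prod_e w_e\prod_v w_v^{1/2}}\,dw_{-e_0},
\]
which does not depend on \(A\). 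Then \eqref{eq:exp_family_magic} holds with \(\Phi=\log Z_{v_0,A}\), since the density is normalized.

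\textbf{Regularity.} The natural parameter space \(\{A:\int e^{\langle A,T\rangle}d\nu<\infty\}\) contains the open set \((0,\infty)^E\), because \(Z_{v_0,A}\) is finite there by Definition~\ref{def:magic_measure}. The statistics \(T_e\) are affinely independent \(\nu\)-a.e.: otherwise \(\Phi\) would be affine along some direction, which contradicts the strict convexity visible from the explicit \(\log\Gamma\) terms. If only openness is needed, this independence can be skipped. Standard exponential-family theory \cite{Brown1986,WainwrightJordan2008} then gives \eqref{eq:grad_hess_env}, via differentiation under the integral justified on the interior, as already used in the proof of Theorem~\ref{thm:annealed_entropy_rate}.

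\textbf{Bregman identity.} Compute
\[
\KL(\mu_0\|\mu_1)=\E_{\mu_0}\Bigl[\sum_e (a^{(0)}_e-a^{(1)}_e)T_e-\Phi(A^{(0)})+\Phi(A^{(1)})\Bigr]
\]
and substitute \(\E_{\mu_0}T_e=\partial_e\Phi(A^{(0)})\). This gives \eqref{eq:KL_bregman_env}; finiteness follows since \(T_e\) is integrable under \(\mu_0\).

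\textbf{Closed form.} From \eqref{eq:Zv0A},
\[
\Phi(A)=\sum_e\log\Gamma(a_e)-\sum_v\log\Gamma(b_v)-\Bigl(1-|V|+\sum_e a_e\Bigr)\log 2+\text{const}.
\]
Since \(\partial b_v/\partial a_e=\tfrac12\1(v\in e)\), we get \(\partial_e\Phi=\Psi(a_e)-\tfrac12\sum_{v\in e}\Psi(b_v)-\log2\), matching the earlier computation. Plug this into \eqref{eq:KL_bregman_env}:
\begin{itemize}
\item The \(\log 2\) terms cancel, because \(-\log2\sum_e(a^{(1)}_e-a^{(0)}_e)\) from \(\Phi(A^{(1)})-\Phi(A^{(0)})\) offsets the gradient contribution.
\item The edge terms give \(\sum_e\Lambda(a^{(0)}_e,a^{(1)}_e)\) directly.
\item For the vertex terms, \(\sum_e (a^{(1)}_e-a^{(0)}_e)\cdot\tfrac12\sum_{v\in e}\Psi(b^{(0)}_v)=\sum_v (b^{(1)}_v-b^{(0)}_v)\Psi(b^{(0)}_v)\), since \(b^{(1)}_v-b^{(0)}_v=\tfrac12\sum_{e\ni v}(a^{(1)}_e-a^{(0)}_e)\). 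Together with \(-\log\Gamma(b^{(1)}_v)+\log\Gamma(b^{(0)}_v)\), this yields \(-\sum_v\Lambda(b^{(0)}_v,b^{(1)}_v)\).
\end{itemize}
This establishes \eqref{eq:KL_env_closed_form}.

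\textbf{Main obstacle.} The algebra is routine. The only delicate point is justifying regularity and the differentiation of \(\Phi\) under the integral, including the gauge edge \(e_0\), where \(T_{e_0}=-\tfrac12\log(w_uw_v)\) still appears correctly. I would handle this by citing the standard fact that a log-partition function is analytic on the interior of the natural parameter space, and noting that \((0,\infty)^E\) is open and contained in that space.
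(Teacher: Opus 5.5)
Your proposal is correct and follows the paper's proof essentially step for step. Both use the same regrouping of the vertex factors into $\prod_e Y_e(w)^{a_e}$ with the same $A$-free base measure $\nu$, cite the standard exponential-family gradient, Hessian and Bregman identities, and then substitute the explicit derivative $\Psi(a_e)-\tfrac12\Psi(b_u)-\tfrac12\Psi(b_v)-\log 2$; your vertex-term bookkeeping via $b_v^{(1)}-b_v^{(0)}=\tfrac12\sum_{e\ni v}(a_e^{(1)}-a_e^{(0)})$ is exactly the step the paper leaves implicit.

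Drop the optional aside on affine independence, because it is not just unproved but false in general. Strict convexity of $\Phi$ is not visible from a difference of two convex $\log\Gamma$ sums. It actually fails for a single-edge graph: there the pinned space is one point, $T\equiv 0$, and $\Phi\equiv 0$ by the duplication formula. None of the stated identities need minimality.
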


\begin{proof}
Starting from \eqref{eq:mixing_measure_magic}, the density of \(\mu_{v_0,A}\) is
\[
d\mu_{v_0,A}(w)
=
Z_{v_0,A}^{-1}\,
\frac{w_{v_0}^{1/2}\prod_{e\in E} w_e^{a_e-1}}
{\prod_{v\in V} w_v^{(a_v+1)/2}}
\sqrt{\tau(w)}\,dw_{-e_0}.
\]
For each edge \(e=\{u,v\}\), the dependence on \(a_e\) is exactly through the factor
\(w_e^{a_e}w_u^{-a_e/2}w_v^{-a_e/2}=Y_e(w)^{a_e}=e^{a_eT_e(w)}\). Thus
\[
d\mu_{v_0,A}(w)
=
\exp\Big(\sum_{e\in E} a_e T_e(w)-\Phi(A)\Big)\,\nu(dw),
\]
where
\[
\nu(dw):=
w_{v_0}^{1/2}
\Bigl(\prod_{e\in E} w_e^{-1}\Bigr)
\Bigl(\prod_{v\in V} w_v^{-1/2}\Bigr)
\sqrt{\tau(w)}\,dw_{-e_0}
\]
does not depend on \(A\). This proves \eqref{eq:exp_family_magic}.

The identities \eqref{eq:grad_hess_env} and \eqref{eq:KL_bregman_env} are the
standard gradient, Hessian, and KL--Bregman identities for regular canonical
exponential families; see, for example, \cite{Brown1986,WainwrightJordan2008}.
Indeed, differentiating the log-partition function in \eqref{eq:exp_family_magic}
gives the mean and covariance identities in \eqref{eq:grad_hess_env}, while direct
substitution of \eqref{eq:exp_family_magic} into
\(\KL(\mu_0\|\mu_1)\) gives the Bregman formula
\eqref{eq:KL_bregman_env}.



To verify \eqref{eq:KL_env_closed_form}, we compute the following explicit formulas. By \eqref{eq:Zv0A}, we have
\[
\Phi(A)
=
\sum_{e\in E}\log\Gamma(a_e)
-\sum_{v\in V}\log\Gamma(b_v)
+\frac{|V|-1}{2}\log\pi
-\Bigl(1-|V|+\sum_{e\in E} a_e\Bigr)\log 2.
\]
Differentiating with respect to \(a_e\), where \(e=\{u,v\}\), gives 
\begin{equation}
\label{eq:grad_logZ_env_explicit}
\frac{\partial}{\partial a_e}\Phi(A)
=
\Psi(a_e)-\frac12\Psi(b_u)-\frac12\Psi(b_v)-\log 2.
\end{equation}
Substituting these expressions in \eqref{eq:KL_bregman_env} yields \eqref{eq:KL_env_closed_form}.

\end{proof}

\begin{remark}
The exponential-family structure in \eqref{eq:exp_family_magic} is already implicit
in the magic formula and in the Bayesian interpretation of ERRW as a prior on
reversible Markov chains; see
\cite{CoppersmithDiaconis1986,KeaneRolles2000,Rolles2003,DiaconisRolles2006,MerklOryRolles2008}.
The point of \Cref{thm:env_kl_main} is to make explicit the associated
Bregman-divergence formula for the KL divergence between two ERRW environment laws,
and to record the resulting closed form in terms of Gamma and digamma functions.
\end{remark}

\subsection{A probabilistic interpretation via the STZ field}

The formula \eqref{eq:KL_env_closed_form} also admits a natural probabilistic interpretation through the random-field representation developed by Sabot, Tarr\`es, and Zeng \cite{SabotTarres2015,SabotTarresZeng2017}. 
In the original STZ field representation, the pinned field includes an
auxiliary ``ghost'' component: the field at $v_0$ has an extra independent
$\Gam(\frac12,1)$ factor. 
We consider a version of the field with this component removed,
and call the resulting reduced representation the
\emph{deghosted} parametrization.
In this parametrization, 
the underlying field separates into an ERRW environment component together with independent Gamma variables on edges and independent Gamma variables on vertices. The KL identity then appears as the net contribution of these two Gamma structures.

Let \(\beta=(\beta_e)_{e\in E}\) have independent coordinates with \(\beta_e\sim \Gam(a_e,1)\). Conditional on \(\beta\), and in the gauge \(\phi_{v_0}=0\), let \(\phi=(\phi_i)_{i\in V}\) be sampled according to the hyperbolic Gaussian density
\[
p_\beta(\phi)
=
\frac{1}{(2\pi)^{(n-1)/2}}
e^{-\sum_{e=\{i,j\}\in E}\beta_e(\cosh(\phi_i-\phi_j)-1)}
e^{-\sum_{i\in V\setminus\{v_0\}}\phi_i}
\sqrt{\sum_{T\in\mathcal T}\prod_{e=\{i,j\}\in T}\beta_e e^{\phi_i+\phi_j}}.
\]
We then define, for \(i\in V\) and \(\{i,j\}\in E\),
\(S_i:=\frac12\sum_{j\sim i}\beta_{ij}e^{\phi_j-\phi_i}\) and
\(W_{ij}:=\beta_{ij}e^{\phi_i+\phi_j}\).
Since \(W_i:=\sum_{j\sim i}W_{ij}=2S_i e^{2\phi_i}\), it is natural to introduce the normalized field
\[
\widetilde\beta_{ij}
=
\frac{\beta_{ij}}{\sqrt{S_iS_j}}
=
\frac{2W_{ij}}{\sqrt{W_iW_j}}.
\]

The first point is that the vertex variables separate from the environment.

\begin{lemma}[Deghosted vertex field~\cite{dingananth26}]
\label{lem:deghosted_S_gamma}
The coordinates of \(S\) are independent, with \(S_v\sim \Gam(b_v,1)\) for \(v\in V\).
\end{lemma}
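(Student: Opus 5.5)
The plan is to compute the joint density of \((\beta,\phi)\), change variables to the pair (pinned environment \(w\), vertex field \(S\)), and check that the density factorizes as \(f(w)\prod_{v\in V}S_v^{b_v-1}e^{-S_v}/\Gamma(b_v)\). That factorization gives independence of the \(S_v\), their \(\Gam(b_v,1)\) laws, and (as a by-product and consistency check) that the \(w\)-marginal is \(\mu_{v_0,A}\).

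\emph{Exponent.} The joint density is
\[
\prod_{e}\frac{\beta_e^{a_e-1}e^{-\beta_e}}{\Gamma(a_e)}\;p_\beta(\phi).
\]
The exponentials combine as \(-\sum_e\beta_e\cosh(\phi_i-\phi_j)\). Since \(\cosh(\phi_i-\phi_j)=\tfrac12(e^{\phi_i-\phi_j}+e^{\phi_j-\phi_i})\), regrouping by vertices gives exactly \(-\sum_{i\in V}S_i\). So the factor \(\prod_v e^{-S_v}\) appears for free.

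\emph{Coordinates.} The dimension count is \(|E|+|V|-1\) on both sides: \(w_{-e_0}\) has \(|E|-1\) coordinates and \(S\) has \(|V|\). I would build the bijection in steps.
\begin{itemize}
\item First map \((\beta,\phi)\mapsto(W,\phi)\) with \(W_{ij}=\beta_{ij}e^{\phi_i+\phi_j}\). The Jacobian factor is \(\prod_{e=\{i,j\}}e^{-\phi_i-\phi_j}=\prod_i e^{-\deg(i)\phi_i}\).
\item Next map \((W,\phi_{-v_0})\mapsto(W,S_{-v_0})\) via \(S_i=W_ie^{-2\phi_i}/2\). This is diagonal, with Jacobian \(\prod_{i\neq v_0}(2S_i)^{-1}\) up to orientation.
\item Then write \(W=c\,w\) with \(w_{e_0}=1\). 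This contributes \(c^{|E|-1}\).
\item Finally trade the scale \(c\) for \(S_{v_0}=c\,w_{v_0}/2\), which is forced by the gauge \(\phi_{v_0}=0\). This contributes a factor \(2/w_{v_0}\).
\end{itemize}
The inverse map is explicit: \(e^{2\phi_i}=c\,w_i/(2S_i)\) and
\[
\beta_{ij}=\frac{2w_{ij}\sqrt{S_iS_j}}{\sqrt{w_iw_j}},
\]
which is scale-free. Hence \(\prod_e\beta_e^{a_e-1}\) contributes \(\prod_v S_v^{(a_v-\deg v)/2}\) times a function of \(w\). The remaining pieces transform as follows: the tree term becomes \(\sqrt{\tau(W)}=c^{(|V|-1)/2}\sqrt{\tau(w)}\), and each of \(e^{-\sum_{i\ne v_0}\phi_i}\) and \(e^{-\deg(i)\phi_i}\) becomes a product of powers of \(c\,w_i/S_i\). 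Throughout, \(c\) is re-expressed as \(2S_{v_0}/w_{v_0}\).

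\emph{Main obstacle: bookkeeping of powers.} The hard step is collecting all powers of each \(S_v\), including the extra contributions to \(S_{v_0}\) coming through \(c\). The target is the exponent \(b_v-1=\tfrac{a_v-1-\1(v=v_0)}{2}\) for every \(v\), with no leftover cross-dependence between \(S\) and \(w\). This is where the deghosting matters: the missing ghost \(\Gam(\tfrac12,1)\) factor is exactly what shifts \(b_{v_0}\) by \(-\tfrac12\) relative to the other vertices.

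Once the powers are confirmed, the density is \(f(w)\prod_v S_v^{b_v-1}e^{-S_v}\). Integrating out \(S\) yields \(f(w)\propto\) the magic-formula density. As a check, the constants \(\prod_e\Gamma(a_e)\), \((2\pi)^{(|V|-1)/2}\) and the powers of \(2\) should reproduce \(Z_{v_0,A}\prod_v\Gamma(b_v)\) via \eqref{eq:Zv0A}. This confirms both that the bookkeeping is right and that the \(S_v\) are independent with \(S_v\sim\Gam(b_v,1)\).
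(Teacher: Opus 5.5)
Your route is correct, and it differs from the paper's. The paper gives no internal proof of this lemma. It imports the result from \cite{dingananth26}, and in Lemma~\ref{lem:deghosted_structure}(3) it attributes the factorization to the STZ representation of \cite{SabotTarres2015,SabotTarresZeng2017}.

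Your direct change of variables to $(w_{-e_0},S)$ does close.

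\emph{The map is a global chart.} Your explicit inverse shows that $(\beta,\phi_{-v_0})\mapsto(w_{-e_0},S)$ is a bijection from $(0,\infty)^E\times\mathbb R^{V\setminus\{v_0\}}$ onto the full product $(0,\infty)^{E\setminus\{e_0\}}\times(0,\infty)^V$. That is what makes a factorized density imply independence.

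\emph{The bookkeeping you flagged works out.}
\begin{itemize}
\item For $v\neq v_0$, the power of $S_v$ is $\frac{a_v-\deg v}{2}+\frac12+\frac{\deg v}{2}-1=\frac{a_v-1}{2}$.
\item The scale $c$ collects $-\frac{|V|-1}{2}+\frac{|V|-1}{2}-\bigl(|E|-\frac{\deg v_0}{2}\bigr)+|E|-1=\frac{\deg v_0}{2}-1$. Here the Jacobian factor $\prod_i e^{-\deg(i)\phi_i}$ is counted only over $i\neq v_0$, since $\phi_{v_0}=0$.
\item After substituting $c=2S_{v_0}/w_{v_0}$, the total power of $S_{v_0}$ is $\frac{a_{v_0}}{2}-1=b_{v_0}-1$.
\item The $w$-powers are exactly those of \eqref{eq:mixing_measure_magic}.
\item The powers of $2$ and $\pi$ combine to $2^{\sum_e a_e-(|V|-1)}\pi^{-(|V|-1)/2}/\prod_e\Gamma(a_e)=1/\bigl(Z_{v_0,A}\prod_v\Gamma(b_v)\bigr)$.
\end{itemize}
The only external input you need is that $p_\beta$ is a probability density, which the setup gives. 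With that, the factorization alone forces the $S$-marginal to be $\bigotimes_v\Gam(b_v,1)$, and the constant match is just a consistency check. Alternatively, you could take $Z_{v_0,A}$ from the magic formula as the input instead.

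\emph{What each approach buys.} Your argument is self-contained. It also proves Lemma~\ref{lem:deghosted_structure}(3) at the same stroke, including the magic-formula marginal with its exact normalizing constant. And it makes explicit that the independence is between $S$ and the \emph{pinned} environment $w$, not the unpinned $W$, because $S_{v_0}=W_{v_0}/2$ is a function of $W$. The paper's statement \eqref{eq:factorization_W_S} leaves that point implicit. The citation route is shorter but outsources all of this.

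Your remark that the missing ghost $\Gam(\tfrac12,1)$ explains the $-\tfrac12$ shift at $v_0$ is heuristic and not needed. In the computation, the shift comes from trading the scale $c$ for $S_{v_0}$.
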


We next record the basic structural properties of the deghosted parametrization.

\begin{lemma}[Deghosted parametrization, normalized field, and factorization]
\label{lem:deghosted_structure}
Fix the gauge \(\phi_{v_0}=0\).

\begin{enumerate}
    \item \emph{Bijection between \((\beta,\phi)\) and \((W,S)\).}
    The map \((\beta,\phi)\mapsto (W,S)\) is measurable and one-to-one onto its image. Its inverse is given by
    \[
    \phi_i=\frac12\log\frac{W_i}{2S_i},
    \qquad i\in V,
    \]
    and \(\beta_{ij}=W_{ij}e^{-(\phi_i+\phi_j)}\).

    \item \emph{Pinned environments and normalized fields.}
    Let \(\mathcal W_{e_0}:=\{w\in(0,\infty)^E:\ w_{e_0}=1\}\), and define
    \[
    H:\mathcal W_{e_0}\to(0,\infty)^E,
    \qquad
    H(w)_{ij}:=\frac{2w_{ij}}{\sqrt{w_iw_j}},
    \qquad \{i,j\}\in E.
    \]
    Then \(H\) is injective. Moreover, a vector \(\widetilde\beta\in(0,\infty)^E\) lies in the image of \(H\) if and only if 
     the matrix \(\widetilde B\) with rows and column indexed by the vertices and with entries \(\widetilde\beta_{ij}\)
    has spectral radius \(2\). In that case, if \(u>0\) satisfies \(\widetilde B u=2u\) and is normalized by \(\frac12\,\widetilde\beta_{i_0j_0}u_{i_0}u_{j_0}=1\), then the unique \(w\in\mathcal W_{e_0}\) with \(H(w)=\widetilde\beta\) is given by \(w_{ij}=\frac12\,\widetilde\beta_{ij}u_i u_j\).

    \item \emph{Factorization.}
    Let \(P_A\) denote the law of \((\beta,\phi)\) under parameter \(A\). Then the pushforward of \(P_A\) under \((\beta,\phi)\mapsto (W,S)\) factorizes as
    \begin{equation}
    \label{eq:factorization_W_S}
    (W,S)\sim \mu_{v_0,A}\otimes \bigotimes_{v\in V}\Gam(b_v,1).
    \end{equation}
    Equivalently, if \(\widetilde\nu_{v_0,A}:=H_\#\mu_{v_0,A}\) denotes the law of the normalized field, then
    \begin{equation}
    \label{eq:factorization_tildebeta_S}
    (\widetilde\beta,S)\sim \widetilde\nu_{v_0,A}\otimes \bigotimes_{v\in V}\Gam(b_v,1).
    \end{equation}
\end{enumerate}
\end{lemma}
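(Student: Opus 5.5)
Part 1 is direct algebra. From the definitions, $W_i=\sum_{j\sim i}\beta_{ij}e^{\phi_i+\phi_j}=e^{2\phi_i}\sum_{j\sim i}\beta_{ij}e^{\phi_j-\phi_i}=2S_ie^{2\phi_i}$. Solving gives $\phi_i=\frac12\log(W_i/(2S_i))$. Then $\beta_{ij}=W_{ij}e^{-(\phi_i+\phi_j)}$ recovers $\beta$. Both maps are compositions of continuous functions, so measurability and injectivity follow at once. The gauge $\phi_{v_0}=0$ forces $W_{v_0}=2S_{v_0}$, which is one constraint on the image; I will carry this constraint into part 3.

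For part 2, set $D=\diag(w_v)$ and let $M$ be the weighted adjacency matrix $(w_{ij})$. Then $H(w)=2D^{-1/2}MD^{-1/2}$, which is similar to $2D^{-1}M=2P_w$. Since $P_w$ is stochastic and irreducible, its spectral radius is $1$, so $H(w)$ has spectral radius $2$ with Perron vector $u=D^{1/2}\mathbf 1$, i.e.\ $u_i=\sqrt{w_i}$.

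Conversely, take $\widetilde B$ with positive entries on $E$ and spectral radius $2$. Perron--Frobenius (using connectivity of $G$) gives a positive eigenvector $u$, unique up to scale, with $\widetilde B u=2u$. Define $w_{ij}=\frac12\widetilde\beta_{ij}u_iu_j$. Then $w_i=\frac12u_i(\widetilde Bu)_i=u_i^2$, so $2w_{ij}/\sqrt{w_iw_j}=\widetilde\beta_{ij}$. The normalization $w_{e_0}=1$ fixes the scale of $u$.

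Injectivity comes from the same argument. If $H(w)=H(w')$, then $\sqrt{w_i}$ and $\sqrt{w'_i}$ are both Perron vectors of the same matrix, hence proportional. This gives $w=cw'$, and pinning at $e_0$ forces $c=1$.

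Part 3 is the main obstacle, and I will do it by a change of variables. The joint density of $(\beta,\phi)$ is
\[
\prod_e\frac{\beta_e^{a_e-1}e^{-\beta_e}}{\Gamma(a_e)}\cdot p_\beta(\phi).
\]
I will rewrite it in the coordinates $(w_{-e_0},S)$, after first reparametrizing $W$ as $cw$ with $w\in\mathcal W_{e_0}$. The variable count matches: $(\beta,\phi)$ has $m+n-1$ coordinates, and $(w_{-e_0},S)$ has $(m-1)+n$.

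Several simplifications make this work. The exponent $-\sum_e\beta_e\cosh(\phi_i-\phi_j)$ equals $-\sum_i S_i$, because each edge term splits into the two halves $\frac12\beta_{ij}e^{\phi_j-\phi_i}$ and $\frac12\beta_{ij}e^{\phi_i-\phi_j}$. This term combines with $e^{-\beta_e}e^{+\beta_e}$ to cancel the $\beta_e$ part. The spanning-tree factor is $\sqrt{\tau(W)}$, and $\beta_e^{a_e-1}=W_e^{a_e-1}e^{-(a_e-1)(\phi_i+\phi_j)}$. Substituting $e^{2\phi_i}=W_i/(2S_i)$ turns these factors into powers of $W_i$ and $S_i$.

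The Jacobian is the most delicate point. I would compute it in two steps: first $(\beta,\phi)\to(W,\phi)$, which has a diagonal Jacobian $\prod_e e^{\phi_i+\phi_j}$; then $(W,\phi)\to(W,S)$, using the constraint $W_{v_0}=2S_{v_0}$ to eliminate one coordinate. Afterwards, the powers of $S_v$ should collect to $S_v^{b_v-1}e^{-S_v}$, and the remaining $W$-dependence should match the magic-formula density \eqref{eq:mixing_measure_magic}. The $w_{v_0}^{1/2}$ factor and the $-\1(v=v_0)$ shift in $b_{v_0}$ should both trace back to removing the ghost $\Gam(\frac12,1)$ factor at $v_0$. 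In this computation $c$ should be determined by $S_{v_0}$, or else integrate out cleanly.

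Once the density is shown to be proportional to $\mu_{v_0,A}(dw)\prod_v S_v^{b_v-1}e^{-S_v}$, the factorization \eqref{eq:factorization_W_S} follows, with normalizing constants automatically consistent with Lemma~\ref{lem:deghosted_S_gamma} and \eqref{eq:Zv0A}. The second form \eqref{eq:factorization_tildebeta_S} is then immediate by pushing forward through $H$, which is injective by part 2.

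As a fallback, if the direct Jacobian bookkeeping is unmanageable, I would cite the Sabot--Tarrès--Zeng identification of the law of $W$ as $\mu_{v_0,A}$. Combining it with Lemma~\ref{lem:deghosted_S_gamma} would then leave only the independence of $W$ and $S$ to prove. For that I would compute the conditional density of $S$ given $W$ and show that it does not depend on $W$.
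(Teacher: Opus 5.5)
Your parts 1 and 2 follow the paper's argument: inversion via $W_i=2S_ie^{2\phi_i}$, similarity of $H(w)$ to $2P_w$ with Perron vector $u_i=\sqrt{w_i}$, and the converse reconstruction $w_{ij}=\frac12\widetilde\beta_{ij}u_iu_j$. For part 3 you take a genuinely different route. The paper simply cites the Sabot--Tarr\`es--Zeng representation for both the environment law and the independent $\Gam(b_v,1)$ vertex field. You instead verify the factorization by an explicit change of variables $(\beta,\phi_{-v_0})\to(w_{-e_0},S)$.

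Your plan does close, and the exponents you anticipate come out exactly.

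After $d\beta=\prod_e e^{-(\phi_i+\phi_j)}\,dW$ and $d\phi_i=dS_i/(2S_i)$ for $i\neq v_0$, the density in $(W,S_{-v_0})$ is proportional to $\prod_e W_e^{a_e-1}\prod_i (2S_i/W_i)^{(a_i+1)/2}\sqrt{\tau(W)}\,e^{-\sum_i S_i}\prod_{i\neq v_0}(2S_i)^{-1}$, with $S_{v_0}=W_{v_0}/2$. Write $W=cw$, so $dW=c^{m-1}\,dc\,dw_{-e_0}$. The $W$-dependent factors together with this Jacobian scale as $c^{-3/2}$. Setting $c=2S_{v_0}/w_{v_0}$ then gives $c^{-3/2}\,dc=2(2S_{v_0})^{-3/2}w_{v_0}^{1/2}\,dS_{v_0}$. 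This single factor produces both the $w_{v_0}^{1/2}$ of the magic formula and $b_{v_0}=a_{v_0}/2$. So in the computation itself, the $v_0$-asymmetry enters through the gauge $\phi_{v_0}=0$, which ties the scale to $S_{v_0}$.

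One detail you did not state: the map is onto the product set $\mathcal W_{e_0}\times(0,\infty)^V$. Given $(w,S)$, put $c=2S_{v_0}/w_{v_0}$ and invert as in part 1. With that, the product form of the density gives independence. Normalization then identifies the $w$-marginal with $\mu_{v_0,A}$, and you recover $Z_{v_0,A}$ and Lemma~\ref{lem:deghosted_S_gamma} as by-products.

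Your route is self-contained; the only external input is that $p_\beta$ is a probability density. It also forces the correct reading of \eqref{eq:factorization_W_S}, with $W$ replaced by its pinned representative $W/W_{e_0}$. This is necessary, because the unpinned $W$ satisfies $W_{v_0}=2S_{v_0}$ and so cannot be independent of $S$. The paper's citation is shorter but leaves this point and the exponent bookkeeping to the references.
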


\begin{proof}
For part (1), the identities \(W_i=\sum_{j\sim i}W_{ij}=2S_i e^{2\phi_i}\), \(i\in V\), determine \(\phi\) uniquely from \((W,S)\), since \(\phi_{v_0}=0\) is fixed. Once \(\phi\) is known, the variables \(\beta_{ij}\) are recovered from \(W_{ij}=\beta_{ij}e^{\phi_i+\phi_j}\). This proves the claimed bijection.

For part (2), let \(w\in\mathcal W_{e_0}\), let \(W\) be 
the matrix with rows and columns indexed by the vertices and with the entries $W_{ij}$,
and let \(D:=\diag(w_i:i\in V)\). Then the 
matrix 
associated with \(H(w)\) is 
\(\widetilde B=2D^{-1/2}WD^{-1/2}\). 
If \(P:=D^{-1}W\), then \(P\) is the transition matrix of the reversible Markov chain with conductances \(w\), and \(\widetilde B=2D^{1/2}PD^{-1/2}\). Thus \(\widetilde B\) is similar to \(2P\). Since \(G\) is connected and all edge weights are positive, \(P\) is irreducible and stochastic, so \(\rho(P)=1\). Hence \(\rho(\widetilde B)=2\).

Now let \(u_i:=\sqrt{w_i}\). Then
\[
(\widetilde B u)_i
=
\sum_{j\sim i}\frac{2w_{ij}}{\sqrt{w_iw_j}}\sqrt{w_j}
=
\frac{2}{\sqrt{w_i}}\sum_{j\sim i} w_{ij}
=
2\sqrt{w_i}
=
2u_i.
\]
Thus \(u\) is a positive eigenvector of \(\widetilde B\) for eigenvalue \(2\). By 
the
Perron--Frobenius
theorem, 
such an eigenvector is unique up to scale. Since the pinning condition \(w_{e_0}=1\) fixes the scale, \(H\) is injective.

Conversely, let \(\widetilde\beta\in(0,\infty)^E\) be such that 
the matrix \(\widetilde B\) with rows and columns indexed by the vertices and with entries \(\widetilde\beta_{ij}\)
satisfies \(\rho(\widetilde B)=2\). By 
the
Perron--Frobenius
theorem,
there exists \(u>0\) with \(\widetilde B u=2u\). Define \(w_{ij}:=\frac12\,\widetilde\beta_{ij}u_i u_j\). Then
\[
w_i=\sum_{j\sim i} w_{ij}
=\frac12\,u_i\sum_{j\sim i}\widetilde\beta_{ij}u_j
=\frac12\,u_i(\widetilde B u)_i
=u_i^2.
\]
Hence
\[
H(w)_{ij}
=
\frac{2w_{ij}}{\sqrt{w_iw_j}}
=
\frac{2\cdot \frac12\,\widetilde\beta_{ij}u_i u_j}{u_i u_j}
=
\widetilde\beta_{ij}.
\]
If \(u\) is normalized by \(\frac12\,\widetilde\beta_{i_0j_0}u_{i_0}u_{j_0}=1\), then \(w_{e_0}=1\), so \(w\in\mathcal W_{e_0}\). This proves the image characterization and reconstruction formula.

For part (3), part (1) shows that the change of variables from \((\beta,\phi)\) to \((W,S)\) is bijective onto its image. The STZ representation implies that, after this change of variables, the \(W\)-component has law \(\mu_{v_0,A}\), while the coordinates of \(S\) are independent with laws \(\Gam(b_v,1)\); see \cite{SabotTarres2015,SabotTarresZeng2017}. This proves \eqref{eq:factorization_W_S}. Applying the map \(H\) to the \(W\)-coordinate yields \eqref{eq:factorization_tildebeta_S}.
\end{proof}

We may now recover the KL identity from the deghosted factorization.

\begin{theorem}[Probabilistic interpretation of the environment KL formula]
\label{thm:env_kl_info_theoretic}
Let \(A^{(0)},A^{(1)}\in(0,\infty)^E\), and let \(\mu_s:=\mu_{v_0,A^{(s)}}\), \(\widetilde\nu_s:=H_\#\mu_{v_0,A^{(s)}}\), and \(b_v^{(s)}:=b_v(A^{(s)})\), for \(s\in\{0,1\}\). Then
\begin{equation}
\label{eq:env_kl_info_theoretic}
\KL(\mu_0\|\mu_1)
=
\KL(\widetilde\nu_0\|\widetilde\nu_1)
=
\sum_{e\in E}\Lambda(a_e^{(0)},a_e^{(1)})
-
\sum_{v\in V}\Lambda(b_v^{(0)},b_v^{(1)}).
\end{equation}
In particular, the environment-level KL divergence is the net contribution of the edge-Gamma and vertex-Gamma parts of the deghosted field.
\end{theorem}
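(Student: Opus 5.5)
The plan is to prove the two equalities in \eqref{eq:env_kl_info_theoretic} separately. The first, \(\KL(\mu_0\|\mu_1)=\KL(\widetilde\nu_0\|\widetilde\nu_1)\), follows from the invariance of KL divergence under measurable bijections. By part (2) of \Cref{lem:deghosted_structure}, \(H\) is injective on \(\mathcal W_{e_0}\), and its inverse on the image is given by the explicit Perron--Frobenius reconstruction, which is measurable. The data-processing inequality applied to \(H\) gives one inequality, and applied to \(H^{-1}\) it gives the reverse. So \(\KL(H_\#\mu_0\|H_\#\mu_1)=\KL(\mu_0\|\mu_1)\).

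For the second equality I would work at the level of the full field \((\beta,\phi)\). Let \(P_s:=P_{A^{(s)}}\). I would compute \(\KL(P_0\|P_1)\) in two ways and compare.

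\emph{First way: disintegrate over \(\beta\).} The conditional density \(p_\beta(\phi)\) does not depend on \(A\): the parameter enters only through the law of \(\beta\). The chain rule for KL therefore gives
\[
\KL(P_0\|P_1)=\KL\Bigl(\textstyle\bigotimes_e\Gam(a_e^{(0)},1)\,\Big\|\,\bigotimes_e\Gam(a_e^{(1)},1)\Bigr)=\sum_{e\in E}\KL\bigl(\Gam(a_e^{(0)},1)\|\Gam(a_e^{(1)},1)\bigr).
\]
A direct computation, using \(\E[\log\beta]=\Psi(p)\) for \(\beta\sim\Gam(p,1)\), gives \(\KL(\Gam(p,1)\|\Gam(q,1))=\log\Gamma(q)-\log\Gamma(p)-(q-p)\Psi(p)=\Lambda(p,q)\). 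Note that the rate terms cancel because both laws have rate \(1\).

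\emph{Second way: push forward to \((W,S)\).} By part (1) of \Cref{lem:deghosted_structure}, the map \((\beta,\phi)\mapsto(W,S)\) is a measurable bijection onto its image, with an explicit measurable inverse. By the same invariance argument as above, KL is preserved under this map. Part (3) of \Cref{lem:deghosted_structure} then says the pushforward is the product \(\mu_{v_0,A^{(s)}}\otimes\bigotimes_v\Gam(b_v^{(s)},1)\). KL tensorizes over products, so
\[
\KL(P_0\|P_1)=\KL(\mu_0\|\mu_1)+\sum_{v\in V}\Lambda(b_v^{(0)},b_v^{(1)}).
\]

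Equating the two expressions and rearranging gives \(\KL(\mu_0\|\mu_1)=\sum_e\Lambda(a_e^{(0)},a_e^{(1)})-\sum_v\Lambda(b_v^{(0)},b_v^{(1)})\). This matches \eqref{eq:KL_env_closed_form} from \Cref{thm:env_kl_main}, which serves as a consistency check.

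The main obstacle is justifying the two KL manipulations measure-theoretically rather than formally.
\begin{enumerate}[label=(\roman*)]
\item \emph{Disintegration step.} This requires \(P_0\ll P_1\) and finiteness of the divergences. Both follow because Gamma laws with different shapes are mutually absolutely continuous, with finite KL.
\item \emph{Change of variables.} Here I must be sure the bijection and its inverse are Borel, so that KL is exactly preserved and not merely bounded above. For this I would rely on the explicit inverse formulas in part (1) of \Cref{lem:deghosted_structure}. I would check that the image of \((\beta,\phi)\mapsto(W,S)\) carries full mass under both \(P_0\) and \(P_1\). This holds because the image does not depend on \(A\), since the supports coincide.
\end{enumerate}
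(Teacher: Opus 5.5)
Your proposal is correct and follows the paper's proof essentially step for step. The paper also computes \(\KL(P_0\|P_1)\) twice: once by the chain rule over \(\beta\), using that \(p_\beta(\phi)\) is free of \(A\), and once by pushing forward along the bijection \((\beta,\phi)\mapsto(W,S)\) and using the product factorization. It then equates the two expressions and concludes \(\KL(\mu_0\|\mu_1)=\KL(\widetilde\nu_0\|\widetilde\nu_1)\) from the injectivity of \(H\). Your explicit checks go beyond the paper, which leaves them implicit: the Gamma KL computation, absolute continuity, finiteness (which is what licenses the subtraction), and measurability of the inverses.
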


\begin{proof}
Let \(P_s\) be the law of \((\beta,\phi)\) under parameter \(A^{(s)}\), for \(s\in\{0,1\}\). Since the conditional law of \(\phi\) given \(\beta\) does not depend on \(A\), the chain rule for relative entropy gives
\[
\KL(P_0\|P_1)
=
\KL(\beta^{(0)}\|\beta^{(1)})
=
\sum_{e\in E}\Lambda(a_e^{(0)},a_e^{(1)}),
\]
because the coordinates \(\beta_e\) are independent Gamma variables.

On the other hand, by Lemma \ref{lem:deghosted_structure}(1), relative entropy is preserved under the bijection \((\beta,\phi)\mapsto (W,S)\). By Lemma \ref{lem:deghosted_structure}(3), the pushforward law factorizes as
\[
(W,S)\sim \mu_{v_0,A}\otimes \bigotimes_{v\in V}\Gam(b_v,1).
\]
Hence
\[
\KL(P_0\|P_1)
=
\KL(\mu_0\|\mu_1)
+
\sum_{v\in V}\Lambda(b_v^{(0)},b_v^{(1)}).
\]
Comparing the two expressions yields
\[
\KL(\mu_0\|\mu_1)
=
\sum_{e\in E}\Lambda(a_e^{(0)},a_e^{(1)})
-
\sum_{v\in V}\Lambda(b_v^{(0)},b_v^{(1)}).
\]

Finally, by Lemma \ref{lem:deghosted_structure}(2), the map \(H\) is a bijection between the pinned environment and the normalized field. Since relative entropy is invariant under measurable bijections,
\[
\KL(\mu_0\|\mu_1)=\KL(\widetilde\nu_0\|\widetilde\nu_1).
\]
This proves \eqref{eq:env_kl_info_theoretic}.
\end{proof}

In light of this, the environment-level KL divergence should be understood as
\[
\KL(\mu_0\|\mu_1)
=
\KL(\beta^{(0)}\|\beta^{(1)})
-
\KL(S^{(0)}\|S^{(1)})
=
\sum_{e\in E}\KL(\beta_e^{(0)}\|\beta_e^{(1)})
-
\sum_{v\in V}\KL(S_v^{(0)}\|S_v^{(1)}).
\]
Indeed, under the deghosted factorization, the raw field splits into an environment component and an independent vertex Gamma field, while the edge variables \((\beta_e)_{e\in E}\) are themselves independent Gamma variables. The ghost variable, when included, has the same law under both parameters and therefore contributes no relative entropy. Thus the KL divergence of the environment is obtained by subtracting the vertex-Gamma contribution from the total KL divergence of the independent edge-Gamma field.

\section{KL Divergence Between Trajectory Laws}
\label{sec:kl_path}

We now turn from the hidden environment to the observable trajectory. For \(T\ge 1\) and \(s\in\{0,1\}\), let
\[
P_s^{(T)}:=\mathcal L_{\Perrw^{v_0,A^{(s)}}}(X_0^T)
\]
denote the law of the first \(T\) steps of ERRW with initial weights \(A^{(s)}\). These trajectory laws are the natural objects in statistical testing problems based on finite observed paths. In particular, if one observes i.i.d.\ \(T\)-step trajectories drawn either from \(P_0^{(T)}\) or from \(P_1^{(T)}\), then Stein's lemma identifies \(\KL(P_0^{(T)}\|P_1^{(T)})\) as the optimal type-II error exponent at fixed type-I error level. The main point of this section is that the difference between the environment-level and trajectory-level KL divergences is exactly an expected posterior KL divergence.

\subsection{Trajectory laws, testing, and posterior conjugacy}

The relevant posterior update is especially simple for ERRW: after observing a path, one adds the undirected edge counts to the parameters and moves the distinguished root to the terminal vertex. This is the conjugacy property underlying the Bayesian approach of Diaconis and Rolles \cite{DiaconisRolles2006}.

\begin{proposition}[Posterior conjugacy and path probabilities]
\label{prop:posterior_conjugacy}
Fix \(A\in(0,\infty)^E\), let \(W\sim\mu_{v_0,A}\), and, conditional on \(W=w\), let \(X_0^T\) be sampled from the quenched chain \(\Psrw^{v_0,w}\). For a realized path \(x_0^T\) with \(x_0=v_0\), let \(N=N(x_0^T)=(N_e)_{e\in E}\) be its undirected edge-count vector and let \(x_T\) be its endpoint. Then
\begin{equation}
\label{eq:unnormalized_bayes_magic_main}
\Psrw^{v_0,w}(X_0^T=x_0^T)\,\mu_{v_0,A}(dw)
=
\frac{Z_{x_T,A+N}}{Z_{v_0,A}}\,
\mu_{x_T,A+N}(dw).
\end{equation}
Consequently,
\begin{equation}
\label{eq:path_probability_Z_ratio_main}
\Perrw^{v_0,A}(X_0^T=x_0^T)=\frac{Z_{x_T,A+N}}{Z_{v_0,A}},
\end{equation}
and
\begin{equation}
\label{eq:posterior_path_main}
\mu_{v_0,A}(dw\mid X_0^T=x_0^T)=\mu_{x_T,A+N}(dw).
\end{equation}
Moreover, since the endpoint is determined by the parity rule
\eqref{eq:parityrule},
the posterior depends on the path only through \(N\):
\begin{equation}
\label{eq:posterior_count_main}
\mu_{v_0,A}(dw\mid N)=\mu_{x_T(N),A+N}(dw).
\end{equation}
\end{proposition}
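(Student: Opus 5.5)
The heart of the argument is the density identity \eqref{eq:unnormalized_bayes_magic_main}; the other three claims follow from it by integrating and normalizing. I will verify it by multiplying the quenched likelihood \eqref{eq:quenched_likelihood_counts} into the magic-formula density \eqref{eq:mixing_measure_magic}. Both sides are then compared as unnormalized densities in \(w_{-e_0}\).

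\textbf{Step 1: the likelihood in undirected form.} For \(e=\{u,v\}\), the numerator of the likelihood is
\[
\prod_{u}\prod_{v\sim u} w_{\{u,v\}}^{N_{uv}}=\prod_{e\in E} w_e^{N_e}.
\]
The denominator is \(\prod_{u} w_u^{N_u}\), where \(N_u\) counts departures from \(u\). Let \(N_u^{\mathrm{in}}\) denote the number of arrivals at \(u\). Departures and arrivals balance except at the two endpoints, so
\[
N_u = N_u^{\mathrm{in}}+\1(u=v_0)-\1(u=x_T).
\]
Combining this with \(\sum_{e\ni u}N_e=N_u+N_u^{\mathrm{in}}\) gives
\[
N_u=\tfrac12\Bigl(\sum_{e\ni u} N_e+\1(u=v_0)-\1(u=x_T)\Bigr).
\]
This is exactly the parity rule \eqref{eq:parityrule}. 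Hence the likelihood equals
\[
\prod_e w_e^{N_e}\prod_u w_u^{-(n_u+\1(u=v_0)-\1(u=x_T))/2},\qquad n_u:=\sum_{e\ni u}N_e .
\]

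\textbf{Step 2: matching exponents.} I multiply this by the unnormalized density \(w_{v_0}^{1/2}\prod_e w_e^{a_e-1}\prod_v w_v^{-(a_v+1)/2}\sqrt{\tau(w)}\). The edge factors become \(w_e^{a_e+N_e-1}\). At each vertex, the exponent collects \(-(a_v+n_v+1)/2\), a \(+\tfrac12\) at \(v_0\) from the prefactor, a \(-\tfrac12\) at \(v_0\) from the likelihood, and a \(+\tfrac12\) at \(x_T\). The two \(v_0\) terms cancel, leaving
\[
\frac{w_{x_T}^{1/2}\prod_e w_e^{(a_e+N_e)-1}}{\prod_v w_v^{((a+N)_v+1)/2}}\sqrt{\tau(w)}.
\]
This is precisely the unnormalized density of \(\mu_{x_T,A+N}\). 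Dividing by \(Z_{v_0,A}\) and writing the result as \(Z_{x_T,A+N}\,\mu_{x_T,A+N}(dw)/Z_{v_0,A}\) yields \eqref{eq:unnormalized_bayes_magic_main}.

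The case \(x_T=v_0\) needs no separate treatment: the \(\pm\tfrac12\) contributions at \(v_0\) cancel and the \(w_{x_T}^{1/2}\) factor remains. The edge \(e_0\) is also harmless, since \(w_{e_0}=1\) makes its edge power irrelevant while the vertex factors are unaffected.

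\textbf{Step 3: consequences.} Integrating both sides of \eqref{eq:unnormalized_bayes_magic_main} over \(w\), and using that \(\mu_{x_T,A+N}\) is a probability measure, gives \eqref{eq:path_probability_Z_ratio_main} by \Cref{thm:errw_rwre}. Bayes' formula, obtained by dividing \eqref{eq:unnormalized_bayes_magic_main} by \eqref{eq:path_probability_Z_ratio_main}, gives \eqref{eq:posterior_path_main}.

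For \eqref{eq:posterior_count_main}, I use the parity rule to see that \(x_T\) is determined by \(N\). If the path is closed (\(x_T=v_0\)), every \(n_u\) is even. Otherwise \(v_0\) and \(x_T\) are exactly the two vertices with odd \(n_u\), so \(x_T\) is the odd vertex different from \(v_0\). Consequently the posterior given the path is a function of \(N\) alone. Conditioning on the event \(\{N=n\}\), a union of paths all carrying the same posterior, therefore yields the same measure.

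I expect Step 2's exponent bookkeeping to be the main obstacle. The half-integer endpoint corrections have to be tracked carefully, in particular the \(-\1(v=v_0)\) in the definition of \(b_v\). No step beyond this is delicate, since the normalizing constants never need to be computed explicitly.
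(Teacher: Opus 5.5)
Your proof is correct and follows essentially the same route as the paper: you multiply the quenched likelihood into the magic-formula density, use the departure/arrival balance $\sum_{e\ni v}N_e=2N_v+\1(v=x_T)-\1(v=v_0)$ to rewrite the vertex exponents, and observe that the half-powers at $v_0$ cancel, leaving $w_{x_T}^{1/2}$; you then integrate, divide, and recover $x_T$ from $N$ by the parity rule. Your explicit identification of $x_T$ as the odd-degree vertex other than $v_0$, and your remark that conditioning on $\{N=n\}$ averages identical posteriors, spell out two points that the paper only asserts.
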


\begin{proof}
See Appendix~\ref{sec:proof_posterior_conjugacy}.
\end{proof}

\subsection{The posterior-gap identity and general bounds}

We now compare the environment-level and trajectory-level KL divergences. Let
\[
\mu_s:=\mu_{v_0,A^{(s)}},
\qquad
P_s^{(T)}:=\mathcal L_{\Perrw^{v_0,A^{(s)}}}(X_0^T),
\qquad s\in\{0,1\},
\]
and define the gap
\[
\Gap_T:=\KL(\mu_0\|\mu_1)-\KL(P_0^{(T)}\|P_1^{(T)}).
\]

\begin{theorem}[Posterior-gap identity and explicit formula]
\label{thm:gap_formula}
For every \(T\ge 0\),
\begin{equation}
\label{eq:gap_conditional_KL_main}
\Gap_T
=
\E_{P_0^{(T)}}\!\left[
\KL\bigl(P_0(W\mid N)\,\|\,P_1(W\mid N)\bigr)
\right].
\end{equation}
Equivalently, if \(N=N(X_0^T)\) and \(x=x_T(N)\), then
\begin{equation}
\label{eq:gap_posterior_env_main}
\Gap_T
=
\E_{P_0^{(T)}}\!\left[
\KL\bigl(\mu_{x,A^{(0)}+N}\,\|\,\mu_{x,A^{(1)}+N}\bigr)
\right].
\end{equation}

For \(s\in\{0,1\}\), write \(a_v^{(s)}:=\sum_{e\ni v}a_e^{(s)}\), and let \(d_v(N):=\sum_{e\ni v}N_e\). Define
\[
b_v^{(s,N)}
:=
\frac{a_v^{(s)}+d_v(N)+1-\1(v=x_T(N))}{2},
\qquad v\in V.
\]
Then
\begin{equation}
\label{eq:gap_explicit_main}
\Gap_T
=
\E_{P_0^{(T)}}\!\left[
\sum_{e\in E}\Lambda(a_e^{(0)}+N_e,a_e^{(1)}+N_e)
-
\sum_{v\in V}\Lambda(b_v^{(0,N)},b_v^{(1,N)})
\right].
\end{equation}
\end{theorem}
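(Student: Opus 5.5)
The plan is to apply the chain rule for relative entropy to the joint law of the environment and the observed path, and then to use the posterior conjugacy of \Cref{prop:posterior_conjugacy}. For \(s\in\{0,1\}\), let \(Q_s\) be the joint law of \((W,X_0^T)\), where \(W\sim\mu_s\) and, conditional on \(W=w\), the path \(X_0^T\) follows \(\Psrw^{v_0,w}\). We decompose \(\KL(Q_0\|Q_1)\) in two ways.

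\emph{First decomposition (environment first).} Conditional on \(W=w\), the path law \(\Psrw^{v_0,w}\) is the same under both hypotheses. The chain rule therefore gives \(\KL(Q_0\|Q_1)=\KL(\mu_0\|\mu_1)\).

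\emph{Second decomposition (path first).} Conditioning on the path instead gives
\[
\KL(Q_0\|Q_1)=\KL(P_0^{(T)}\|P_1^{(T)})+\E_{P_0^{(T)}}\bigl[\KL(Q_0(W\mid X_0^T)\,\|\,Q_1(W\mid X_0^T))\bigr].
\]
Subtracting the two decompositions identifies \(\Gap_T\) with the expected posterior divergence. By \eqref{eq:posterior_path_main}, the posterior given \(X_0^T=x_0^T\) is \(\mu_{x_T,A^{(s)}+N}\). Since the endpoint is determined by \(N\) through the parity rule, this posterior depends on the path only through \(N\) (\eqref{eq:posterior_count_main}). This yields both \eqref{eq:gap_conditional_KL_main} and \eqref{eq:gap_posterior_env_main}.

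For \(T=0\) the path is the deterministic point \(v_0\), so \(N=0\) and the statement is trivial. One could alternatively reach the same identities through \(I(W;X_0^T)=I(W;N)\) as in \Cref{lem:entropy_anneal}, but the direct chain rule is cleaner.

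\emph{Measure-theoretic justification.} The chain rule is legitimate here because the path takes finitely many values, each with positive probability under both laws, and every conditional law of \(W\) is an explicit density on the pinned space \(\mathcal W_{e_0}\). The one point to check is finiteness: \(\KL(\mu_0\|\mu_1)\) is finite, since by \Cref{thm:env_kl_main} it equals the finite Bregman expression. All the other terms are then finite and nonnegative, so the subtraction is legitimate.

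\emph{Explicit formula \eqref{eq:gap_explicit_main}.} Apply \eqref{eq:KL_env_closed_form} from \Cref{thm:env_kl_main} with distinguished vertex \(x=x_T(N)\) and parameters \(A^{(s)}+N\). The only thing to verify is that \Cref{thm:env_kl_main} holds for an arbitrary root vertex, which it does since \(v_0\) was arbitrary there. With root \(x\) and parameters \(A^{(s)}+N\):
\begin{itemize}
\item the edge parameters are \(a_e^{(s)}+N_e\);
\item the vertex sums become \(\sum_{e\ni v}(a_e^{(s)}+N_e)=a_v^{(s)}+d_v(N)\);
\item the root indicator is \(\1(v=x)\).
\end{itemize}
Hence \(b_v\) turns into exactly \(b_v^{(s,N)}\). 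Substituting into the closed form and taking the \(P_0^{(T)}\)-expectation gives \eqref{eq:gap_explicit_main}.

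The main obstacle I expect is not analytic but bookkeeping. One must keep straight that the posterior root moves from \(v_0\) to the endpoint \(x_T(N)\), and that this move is what shifts the \(\1(\cdot)\) term inside \(b_v^{(s,N)}\). The integrability argument above is the only other step requiring care.
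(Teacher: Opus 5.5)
Your proposal is correct and follows essentially the same route as the paper: decompose $\KL(Q_0\|Q_1)$ for the joint law of $(W,X_0^T)$ by the chain rule in both orders, identify the posteriors as $\mu_{x_T(N),A^{(s)}+N}$ via Proposition~\ref{prop:posterior_conjugacy}, and substitute the shifted edge parameters and the re-rooted vertex parameters into \eqref{eq:KL_env_closed_form}. Your extra remarks on finiteness (which justify the subtraction) and on the trivial case $T=0$ are sound points that the paper leaves implicit.
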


\begin{proof}
Let \(Q_s\) be the joint law of \((W,X_0^T)\) under parameter \(A^{(s)}\). Since the conditional law of \(X_0^T\) given \(W=w\) is the same quenched law \(\Psrw^{v_0,w}\) under both parameters, the chain rule for relative entropy gives
\[
\KL(Q_0\|Q_1)=\KL(\mu_0\|\mu_1).
\]
Decomposing in the opposite order yields
\[
\KL(Q_0\|Q_1)
=
\KL(P_0^{(T)}\|P_1^{(T)})
+
\E_{P_0^{(T)}}\!\left[
\KL\bigl(P_0(W\mid X_0^T)\,\|\,P_1(W\mid X_0^T)\bigr)
\right].
\]
Subtracting the two identities proves
\[
\Gap_T
=
\E_{P_0^{(T)}}\!\left[
\KL\bigl(P_0(W\mid X_0^T)\,\|\,P_1(W\mid X_0^T)\bigr)
\right].
\]
By Proposition \ref{prop:posterior_conjugacy}, \(P_s(W\mid X_0^T)=P_s(W\mid N)=\mu_{x,A^{(s)}+N}\), where \(x=x_T(N)\). This gives \eqref{eq:gap_conditional_KL_main} and \eqref{eq:gap_posterior_env_main}.

It remains to apply the environment-level KL formula from \Cref{thm:env_kl_main}. For the posterior measure \(\mu_{x,A^{(s)}+N}\), the edge parameters are \(a_e^{(s)}+N_e\), while the corresponding vertex parameters are
\[
\frac{\sum_{e\ni v}(a_e^{(s)}+N_e)+1-\1(v=x)}{2}
=
\frac{a_v^{(s)}+d_v(N)+1-\1(v=x)}{2}
=
b_v^{(s,N)}.
\]
Substituting these into \eqref{eq:KL_env_closed_form} yields \eqref{eq:gap_explicit_main}.
\end{proof}

The exact expression \eqref{eq:gap_explicit_main} is convenient because the vertex term is nonpositive, so one can obtain upper bounds by estimating only the edge contribution. The next proposition records both an integral representation and a simple upper bound.

\begin{proposition}[General bounds on \(\Gap_T\)]
\label{prop:gap_general_bounds}
Let \(\delta_e:=a_e^{(1)}-a_e^{(0)}\), let \(\underline a:=\min_{e\in E}\min\{a_e^{(0)},a_e^{(1)}\}\), and let \(\Psi_1=\Psi'\) be the trigamma function. Then
\begin{align}
\Gap_T
&=
\int_0^1 (1-t)\Bigg(
\sum_{e\in E}\delta_e^2\,
\E_{P_0^{(T)}}\!\Big[\Psi_1\!\bigl(a_e^{(0)}+N_e+t\delta_e\bigr)\Big]
\notag\\
&\hspace{6em}
-\frac14\sum_{v\in V}\delta_v^2\,
\E_{P_0^{(T)}}\!\Big[\Psi_1\!\bigl(b_v^{(0,N)}+\tfrac{t\delta_v}{2}\bigr)\Big]
\Bigg)\,dt,
\label{eq:gap_integral_main}
\end{align}
where \(\delta_v:=a_v^{(1)}-a_v^{(0)}=\sum_{e\ni v}\delta_e\).

In particular,
\begin{equation}
\label{eq:gap_upper_main}
\Gap_T
\le
\frac12\sum_{e\in E}\delta_e^2\,
\E_{P_0^{(T)}}\!\left[
\frac{1}{N_e+\underline a}
+
\frac{1}{(N_e+\underline a)^2}
\right].
\end{equation}
Hence also
\begin{equation}
\label{eq:gap_upper_simplified_main}
\Gap_T
\le
\frac12\Bigl(1+\frac1{\underline a}\Bigr)
\sum_{e\in E}\delta_e^2\,
\E_{P_0^{(T)}}\!\left[\frac{1}{N_e+\underline a}\right].
\end{equation}
\end{proposition}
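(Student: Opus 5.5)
Start from the explicit formula \eqref{eq:gap_explicit_main} of \Cref{thm:gap_formula}, which writes $\Gap_T$ as an expectation of $\Lambda$-terms. The key observation is that $\Lambda(p,q)=\log\Gamma(q)-\log\Gamma(p)-(q-p)\Psi(p)$ is the first-order Taylor remainder of $\log\Gamma$ at $p$. Taylor's theorem with integral remainder therefore gives
\[
\Lambda(p,p+\delta)=\delta^2\int_0^1(1-t)\,\Psi_1(p+t\delta)\,dt .
\]

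For the edge term, take $p=a_e^{(0)}+N_e$ and $\delta=\delta_e$. For the vertex term, note that $b_v^{(1,N)}-b_v^{(0,N)}=\delta_v/2$, because the indicator $\1(v=x_T(N))$ and $d_v(N)$ are common to both parameters. This gives the factor $(\delta_v/2)^2=\delta_v^2/4$ and the argument $b_v^{(0,N)}+t\delta_v/2$. All integrands are nonnegative and finite, so Tonelli lets us exchange $\E_{P_0^{(T)}}$ with $\int_0^1$. Summing over edges and vertices then yields \eqref{eq:gap_integral_main}.

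For \eqref{eq:gap_upper_main}, first drop the vertex contribution. It enters with a minus sign and has nonnegative integrand, since $\Psi_1>0$. What remains is the edge term.
\begin{itemize}
\item The argument $a_e^{(0)}+N_e+t\delta_e$ is a convex combination of $a_e^{(0)}+N_e$ and $a_e^{(1)}+N_e$. Hence it is at least $N_e+\underline a$.
\item $\Psi_1$ is decreasing, so $\Psi_1(a_e^{(0)}+N_e+t\delta_e)\le\Psi_1(N_e+\underline a)$.
\item Use the standard bound $\Psi_1(x)\le 1/x+1/x^2$ for $x>0$. It follows from the series $\Psi_1(x)=\sum_{k\ge0}(x+k)^{-2}$ by comparing the sum with $x^{-2}+\int_x^\infty s^{-2}ds$.
\item Integrate $\int_0^1(1-t)\,dt=1/2$.
\end{itemize}
Together these give \eqref{eq:gap_upper_main}.

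Finally, since $N_e\ge0$, we have $1/(N_e+\underline a)^2\le \underline a^{-1}\cdot 1/(N_e+\underline a)$. Substituting this into \eqref{eq:gap_upper_main} gives \eqref{eq:gap_upper_simplified_main}.

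The only mildly delicate points are the following. First, check that the parity-determined endpoint is the same under both hypotheses, so that the vertex parameters differ exactly by $\delta_v/2$. Second, justify the interchange of expectation and integral; this is clean because $N$ takes finitely many values for fixed $T$. Third, verify the trigamma inequality.
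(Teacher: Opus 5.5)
Your proposal is correct and follows essentially the same route as the paper. You apply the Taylor integral-remainder identity for $\Lambda$ termwise to \eqref{eq:gap_explicit_main}, then discard the nonpositive vertex term and combine the monotonicity of $\Psi_1$, the bound $\Psi_1(x)\le x^{-1}+x^{-2}$ and $\min\{a_e^{(0)},a_e^{(1)}\}+N_e\ge N_e+\underline a$. Your additional remarks only spell out steps the paper leaves implicit: the common parity-determined endpoint giving the exact shift $\delta_v/2$, the interchange of expectation and integral over a finite sum, and the series proof of the trigamma bound.
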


\begin{proof}
We use the elementary identity
\[
\Lambda(p,q)
=
(q-p)^2\int_0^1 (1-t)\,\Psi_1\!\bigl(p+t(q-p)\bigr)\,dt,
\qquad p,q>0,
\]
which follows from Taylor's theorem with integral remainder applied to \(\log\Gamma\). Applying this to each term in \eqref{eq:gap_explicit_main} yields \eqref{eq:gap_integral_main}.

For the upper bound, discard the nonpositive vertex term in \eqref{eq:gap_explicit_main}. Since \(\Psi_1(x)\le x^{-1}+x^{-2}\) for \(x>0\), we obtain
\[
\Lambda(a_e^{(0)}+N_e,a_e^{(1)}+N_e)
\le
\frac{\delta_e^2}{2}\left(
\frac{1}{m_e(N)}+\frac{1}{m_e(N)^2}
\right),
\]
where \(m_e(N):=\min\{a_e^{(0)}+N_e,a_e^{(1)}+N_e\}\). Since \(m_e(N)\ge N_e+\underline a\), this gives \eqref{eq:gap_upper_main} after taking expectation. Finally,
\[
\frac{1}{(N_e+\underline a)^2}\le \frac1{\underline a}\cdot \frac{1}{N_e+\underline a},
\]
which implies \eqref{eq:gap_upper_simplified_main}.
\end{proof}

\subsection{The \(n\)-star case}    \label{sec:n-star-asyptotics}

We next consider the \(n\)-star, in which ERRW reduces to a P\'olya urn. This gives a tractable model in which the inverse local-time estimates underlying the KL gap can be analyzed sharply.

Let \(G\) be the \(n\)-star with center \(v_0\) and \(n\) leaves, and suppose that all initial edge weights are equal to \(a>0\). Fix one leaf edge \(e\). Since each excursion from the center consists of traversing exactly one leaf edge out and then back, it is natural to index the process by excursions from \(v_0\). For \(T\ge 1\), let \(N_e(T)\),
with a harmless abuse of notation,
denote the number of the first \(T\) excursions that use the edge \(e\).

\begin{theorem}[Inverse local-time bound on the \(n\)-star]
\label{thm:nstar_inverse_moment}
Fix \(\gamma>0\). Then starting at the center node $v_0$, as \(T\to\infty\),
\[
\Eerrw^{v_0, A}\!\left[\frac{1}{N_e(T)+\gamma}\right]
=
\begin{cases}
O(T^{-1}), & a>1,\\[3pt]
O((\log T)\,T^{-1}), & a=1,\\[3pt]
O(T^{-a}), & 0<a<1,
\end{cases}
\]
with constants depending only on \(a\), \(n\), and \(\gamma\).
\end{theorem}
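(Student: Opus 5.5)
The plan is to reduce the statement to an inverse-moment estimate for a Beta-mixed binomial. On the \(n\)-star, the walk started at the center \(v_0\) alternates between \(v_0\) and a leaf. At each visit to the center it picks a leaf edge with probability proportional to the current edge weights, and the return step is forced. Indexed by excursions, the sequence of chosen leaves is therefore a P\'olya urn with \(n\) colours, equal initial masses, and a fixed reinforcement of the chosen colour per excursion. This sequence is exchangeable, and by de Finetti's theorem (equivalently \Cref{thm:errw_rwre} specialised to the star, or a direct P\'olya urn computation) there is a random vector \(p=(p_1,\dots,p_n)\) with a symmetric Dirichlet law such that, conditionally on \(p\), the excursions choose leaves i.i.d.\ according to \(p\). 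Hence
\[
N_e(T)\mid p_e\sim\Bin(T,p_e),\qquad p_e\sim\Beta(\alpha,(n-1)\alpha).
\]
The first concrete step is to read off \(\alpha\) from the urn's initial-mass-to-increment ratio. The case split in the statement corresponds to \(\alpha=a\); I will verify this normalisation carefully against the ERRW dynamics, in which an excursion traverses the edge twice. Whatever \(\alpha\) turns out to be, the argument below produces the trichotomy at \(\alpha=1\).

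Next comes the quenched estimate. For a binomial variable I will use the exact identity
\[
\E\!\left[\frac{1}{\Bin(T,p)+1}\right]=\frac{1-(1-p)^{T+1}}{(T+1)p}\le \min\!\Big\{1,\frac{1}{(T+1)p}\Big\}.
\]
To handle general \(\gamma\), I use \(\frac{1}{N+\gamma}\le \max(1,\gamma^{-1})\frac{1}{N+1}\), which gives
\[
\E\!\left[\frac{1}{N_e(T)+\gamma}\,\Big|\,p_e\right]\le C_\gamma\min\{1,(Tp_e)^{-1}\}.
\]

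Then comes the annealed integral. With Beta density \(f(p)\asymp p^{\alpha-1}\) near \(0\) and bounded away from \(0\), I split at \(p=1/T\):
\[
\int_0^{1/T} f(p)\,dp+\frac1T\int_{1/T}^1 \frac{f(p)}{p}\,dp .
\]
The first term is \(O(T^{-\alpha})\). The second term is governed by \(\int_{1/T}^1 p^{\alpha-2}\,dp\), which is \(O(1)\) if \(\alpha>1\), equals \(\log T\) if \(\alpha=1\), and is \(O(T^{1-\alpha})\) if \(\alpha<1\). This gives \(O(T^{-1})\), \(O(T^{-1}\log T)\) and \(O(T^{-\alpha})\) respectively, with constants depending only on \(\alpha\) (hence on \(a\)), on \(n\) through the Beta normalising constant, and on \(\gamma\).

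The main obstacle is not the analysis, which is routine once the mixture is in hand. It is pinning down the exact Beta parameter of the de Finetti limit for the excursion urn, since this determines where the threshold sits. I would verify it by computing the urn's predictive probabilities directly and matching them to the Dirichlet predictive rule. As a sanity check I would also compute the exact P\'olya law \(\P(N_e(T)=k)\) as a ratio of Gamma functions and evaluate \(\sum_k \P(N_e(T)=k)/(k+\gamma)\) directly. This route avoids de Finetti entirely and makes the constants explicit.
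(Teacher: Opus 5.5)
\textbf{Comparison with the paper.} Your route matches the paper's. You use a Beta--binomial mixture for the excursion urn (Lemma~\ref{lem:beta_binomial_star}), a two-regime bound on the binomial inverse moment, and a split of the Beta integral near \(1/T\). Your exact identity for \(\E[(\Bin(T,p)+1)^{-1}]\) is a cleaner replacement for the Chernoff argument of Lemma~\ref{lem:binomial_inverse_upper}. Your annealed computation is correct for any \(\alpha\).

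\textbf{The gap.} The step you deferred, \(\alpha=a\), is false. Under the paper's rule (weight \(+1\) per traversal), each excursion traverses the chosen leaf edge twice. So at the \((m+1)\)-st departure from \(v_0\),
\[
\P\bigl(\text{edge } e \text{ is chosen}\mid \text{past}\bigr)=\frac{a+2K_e(m)}{na+2m}=\frac{a/2+K_e(m)}{na/2+m},
\]
where \(K_e(m)\) is the number of earlier excursions through \(e\). Hence the mixing law is \(\Dir(a/2,\dots,a/2)\) and \(p_e\sim\Beta(a/2,(n-1)a/2)\). This agrees with the magic formula: on the star, the density \eqref{eq:mixing_measure_magic} is proportional to \(w_{v_0}^{-na/2}\prod_i w_{e_i}^{a/2-1}\). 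It also agrees with the paper's own choice \(\alpha_i=a_i/2\) in Appendix~\ref{sec:mi_tightness_star}.

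With \(\alpha=a/2\), your argument gives \(O(T^{-1})\) for \(a>2\), \(O(T^{-1}\log T)\) for \(a=2\), and \(O(T^{-a/2})\) for \(0<a<2\). For \(0<a<2\) nothing better is possible, since
\[
\Eerrw^{v_0,A}\Bigl[\frac{1}{N_e(T)+\gamma}\Bigr]\ge\frac{1}{\gamma}\,\P\bigl(N_e(T)=0\bigr)=\frac{1}{\gamma}\prod_{m=0}^{T-1}\frac{(n-1)a+2m}{na+2m}\asymp T^{-a/2}.
\]
For example, \(n=2\), \(a=1\) gives \(\P(N_e(T)=0)=\binom{2T}{T}4^{-T}\sim(\pi T)^{-1/2}\), which is not \(O(T^{-1}\log T)\). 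So the statement as written (threshold at \(a=1\), rate \(T^{-a}\)) is false for \(0<a<2\). No completion of your plan can prove it. What your plan does prove is the statement with \(a\) replaced by \(a/2\) in the case split and in the rates.

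\textbf{The same error in the paper.} Lemma~\ref{lem:beta_binomial_star} asserts that the excursion sequence is a symmetric P\'olya urn ``with initial weight \(a\) in each color'' and takes \(X\sim\Beta(a,(n-1)a)\). This overlooks the double traversal and contradicts Appendix~\ref{sec:mi_tightness_star}. The remaining steps of the paper's proof are sound and parallel yours.

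You correctly identified this normalisation as the crux. Carrying out the check you proposed, via predictive probabilities or the exact P\'olya law as a ratio of Gamma functions, is what exposes the problem. You should do that rather than defer to the statement's case split.
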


The proof is based on the beta-binomial representation of \(N_e(T)\) together with a sharp estimate on the binomial inverse moment \(g_T(x):=\E[(\Bin(T,x)+\gamma)^{-1}]\); see Appendix~\ref{sec:proof_nstar_inverse}.

Combining \Cref{thm:nstar_inverse_moment} with Proposition \ref{prop:gap_general_bounds} yields the corresponding decay rates for the trajectory-level KL gap.

\begin{corollary}[Gap decay on the \(n\)-star]
\label{cor:nstar_gap_decay}
Let \(G\) be the \(n\)-star, and let \(A^{(s)}\), \(s\in\{0,1\}\), be constant edge-weight vectors with values \(a^{(s)}>0\). Then starting at the center node $v_0$, we have 
\[
\Gap_T=
\begin{cases}
O(T^{-1}), & a^{(0)},a^{(1)}>1,\\[3pt]
O((\log T)\,T^{-1}), & \min\{a^{(0)},a^{(1)}\}=1,\\[3pt]
O\!\bigl(T^{-\min\{a^{(0)},a^{(1)}\}}\bigr), & \min\{a^{(0)},a^{(1)}\}<1.
\end{cases}
\]
\end{corollary}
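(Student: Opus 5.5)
Combine the simplified bound \eqref{eq:gap_upper_simplified_main} of Proposition~\ref{prop:gap_general_bounds} with \Cref{thm:nstar_inverse_moment}. Here \(\delta_e=a^{(1)}-a^{(0)}=:\delta\) for every leaf edge and \(\underline a=\min\{a^{(0)},a^{(1)}\}\), so
\[
\Gap_T\le \frac12\Bigl(1+\frac{1}{\underline a}\Bigr)\,n\,\delta^2\,\max_{e\in E}\E_{P_0^{(T)}}\!\left[\frac{1}{N_e+\underline a}\right].
\]
By symmetry of the star with constant weights, the expectation is the same for every leaf edge. If \(\delta=0\), then \(\Gap_T=0\) and there is nothing to prove, so assume \(\delta\neq0\).

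The first step is to match time scales. In Proposition~\ref{prop:gap_general_bounds}, \(N_e\) is the undirected traversal count of \(e\) in the first \(T\) steps. In \Cref{thm:nstar_inverse_moment}, \(N_e(T)\) counts excursions. Starting from the center, after \(T\) steps at least \(\lfloor T/2\rfloor\) excursions are complete, and each completed excursion through \(e\) contributes \(2\) to the traversal count. Hence \(N_e^{\mathrm{steps}}\ge 2N_e(\lfloor T/2\rfloor)\ge N_e(\lfloor T/2\rfloor)\), and so
\[
\E\bigl[(N_e^{\mathrm{steps}}+\underline a)^{-1}\bigr]\le \E\bigl[(N_e(\lfloor T/2\rfloor)+\underline a)^{-1}\bigr].
\]
Replacing \(T\) by \(\lfloor T/2\rfloor\) changes the rates \(T^{-1}\), \((\log T)T^{-1}\) and \(T^{-a}\) only by constant factors.

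The expectation is taken under \(P_0^{(T)}\), which is the ERRW with weight \(a=a^{(0)}\). Apply \Cref{thm:nstar_inverse_moment} with \(a=a^{(0)}\) and \(\gamma=\underline a\). This yields a rate of \(T^{-1}\), \((\log T)T^{-1}\) or \(T^{-a^{(0)}}\), according to whether \(a^{(0)}>1\), \(a^{(0)}=1\) or \(a^{(0)}<1\).

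The main obstacle is that the corollary is phrased in terms of \(\min\{a^{(0)},a^{(1)}\}\), while the direct argument only involves \(a^{(0)}\). The direct rate is always at least as good as the claimed one, because \(a^{(0)}\ge\min\{a^{(0)},a^{(1)}\}\) and the rate function \(a\mapsto T^{-\min(a,1)}\), with the \(\log\) at \(a=1\), is monotone. I would check the three cases separately.

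\begin{itemize}
\item If \(a^{(0)},a^{(1)}>1\), the direct bound is \(O(T^{-1})\), which is what is claimed.
\item If \(\min=1\), then \(a^{(0)}\ge 1\), and the direct bound is \(O(T^{-1})\) or \(O((\log T)T^{-1})\). Both are \(O((\log T)T^{-1})\).
\item If \(\min<1\), then either \(a^{(0)}=\min\) and the bound is exactly \(O(T^{-\min})\), or \(a^{(0)}>\min\) and the bound is \(O(T^{-\min(a^{(0)},1)})\), possibly with a log factor. In the latter case \(T^{-\min(a^{(0)},1)}\le T^{-\min}\), and the log factor arising when \(a^{(0)}=1\) is absorbed because \(\min<1\) is strict, so \(T^{-1}\log T=O(T^{-\min})\).
\end{itemize}

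The constants depend on \(a^{(0)}\), \(a^{(1)}\) and \(n\) through \(\delta^2\), \(\underline a\) and the constants of \Cref{thm:nstar_inverse_moment}. This completes the plan.
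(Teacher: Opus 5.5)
Your route is exactly the paper's: the bound \eqref{eq:gap_upper_simplified_main} combined with \Cref{thm:nstar_inverse_moment}. Your extra bookkeeping is sound and fills in what the paper's two-line proof leaves implicit. That covers the conversion $N_e^{\mathrm{steps}}\ge 2N_e(\lfloor T/2\rfloor)$ between step counts and excursion counts. It also covers your remark that only $a^{(0)}$ enters, and that the $a^{(0)}$-rate dominates the $\min$-rate.

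The genuine problem is the black box itself, and it breaks the statement, not only your write-up; the paper's proof uses the same ingredient. \Cref{thm:nstar_inverse_moment} rests on Lemma~\ref{lem:beta_binomial_star}, which treats the excursion choices as a P\'olya urn with initial weight $a$ per color. But an excursion traverses the chosen leaf edge twice, so that edge's weight grows by $2$. In the notation of Appendix~\ref{sec:mi_tightness_star},
\[
\P(Y_{m+1}=i\mid Y_1,\dots,Y_m)=\frac{a+2K_i(m)}{na+2m}=\frac{a/2+K_i(m)}{na/2+m}.
\]
The de Finetti measure is therefore $\Dir(a/2,\dots,a/2)$, as that appendix correctly states. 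The magic formula confirms this: on the star the density reduces to $\prod_i w_{e_i}^{a/2-1}(\sum_j w_{e_j})^{-na/2}$. So the mixing variable is $X\sim\Beta(a/2,(n-1)a/2)$, not $\Beta(a,(n-1)a)$. Redoing the beta-binomial computation with $a/2$ in place of $a$ moves the thresholds to $a=2$ and gives $T^{-a/2}$ instead of $T^{-a}$. This is sharp, since $\E[(N_e(T)+\gamma)^{-1}]\ge\gamma^{-1}\E[(1-X)^T]\asymp T^{-a/2}$.

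As a result the corollary is false as stated, so no argument can close this gap. Take $T=2M$, so that $x_T=v_0$ and $N_{e_i}=2K_i$. By the duplication identity $\Lambda(2p,2q)=\Lambda(p,q)+\Lambda(p+\tfrac12,q+\tfrac12)$, the edge and leaf terms of \eqref{eq:gap_explicit_main} combine into
\[
\Gap_T=\E_{P_0^{(T)}}\Bigl[\sum_{i=1}^n\Lambda\bigl(\tfrac{a^{(0)}}{2}+K_i,\tfrac{a^{(1)}}{2}+K_i\bigr)-\Lambda\bigl(\tfrac{na^{(0)}}{2}+M,\tfrac{na^{(1)}}{2}+M\bigr)\Bigr].
\]
This is the expected KL divergence between the Dirichlet posteriors with parameters $a^{(s)}/2+K_i$, so the bracket is nonnegative. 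On the event $\{K_1=0\}$, whose probability is $\asymp M^{-a^{(0)}/2}$ for $n\ge 2$, the bracket is at least $\Lambda(a^{(0)}/2,a^{(1)}/2)-O(1/M)$. That is a positive constant whenever $a^{(0)}\neq a^{(1)}$, so $\Gap_T\gtrsim T^{-a^{(0)}/2}$.

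For example, $a^{(0)}=3/2$, $a^{(1)}=2$ falls in the first case of the statement, which claims $O(T^{-1})$, yet $\Gap_T\gtrsim T^{-3/4}$. Once the inverse-moment bound is corrected, your argument actually proves $\Gap_T=O(T^{-1})$, $O(T^{-1}\log T)$ and $O(T^{-a^{(0)}/2})$ for $a^{(0)}>2$, $a^{(0)}=2$ and $a^{(0)}<2$ respectively. The last rate is sharp by the lower bound just given.
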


\begin{proof}
Apply Proposition \ref{prop:gap_general_bounds}. In the \(n\)-star with constant initial weights, all edge counts are identically distributed under \(P_0^{(T)}\), so the right-hand side of \eqref{eq:gap_upper_simplified_main} is bounded by a constant multiple of \(\E[(N_e(T)+\gamma)^{-1}]\), for any fixed \(\gamma>0\). The stated rates then follow from \Cref{thm:nstar_inverse_moment}.
\end{proof}

\subsection{The general graph case}

We 
carry out 
the proof in three steps,
followed by a step to try to optimize the bounds.

\paragraph{Step 1: tails of the random environment.}

We begin by controlling the lower tail of the normalized environment
\[
X_e:=\frac{W_e}{\sum_{g\in E}W_g},
\qquad e\in E,
\]
where \(W\sim \mu_{v_0,A}\). By \cite{MerklRolles2008Bounding}, the law of \(X=(X_g)_{g\in E}\) has a density on the open simplex
\[
\Delta_E:=\Bigl\{x=(x_g)_{g\in E}\in(0,1)^E:\ \sum_{g\in E}x_g=1\Bigr\}
\]
of the form
\begin{equation}
\label{eq:simplex_density_general_graph}
\rho_{v_0,A}(x)
=
\widetilde Z_{v_0,A}^{-1}
\frac{\prod_{g\in E}x_g^{a_g-1}}
{\prod_{v\in V}x_v^{b_v}}
\sqrt{\tau(x)},
\qquad x\in \Delta_E,
\end{equation}
where \(x_v:=\sum_{g\ni v}x_g\), \(b_v:=\frac{a_v+1-\1(v=v_0)}{2}\), \(\tau(x):=\sum_{T\in\mathcal T}\prod_{g\in T}x_g\), and \(\widetilde Z_{v_0,A}=\Gamma(|E|)\cdot Z_{v_0, A}\) is the normalizing constant.

For a nonempty set \(F\subseteq E\), let \(a(F):=\sum_{g\in F}a_g\), and let
\[
S(F):=\{v\in V:\ \text{every edge incident to }v\text{ belongs to }F\}.
\]
We also write \(\kappa(G\setminus F)\) for the number of connected components of the graph obtained by deleting the edges in \(F\), and define
\begin{equation}
\label{eq:AF_definition}
A(F):=
a(F)-\sum_{v\in S(F)} b_v+\frac{\kappa(G\setminus F)-1}{2}.
\end{equation}

The key combinatorial input is the following fact.

\begin{lemma}
\label{lem:A-lower-bound-explicit}
Assume that \(a_g\ge \underline a>0\) for all \(g\in E\). Then, for every nonempty proper subset \(F\subsetneq E\), one has
\[
A(F)\ge \frac{\underline a}{2}.
\]
\end{lemma}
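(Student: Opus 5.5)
The plan is to expand the definition \eqref{eq:AF_definition} and regroup $A(F)$ into three separately controlled pieces: an edge piece, a vertex/component piece, and a nonnegative correction from $v_0$. Throughout, $F\subsetneq E$ is nonempty.

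\textbf{Edge piece.} Substitute $b_v=\frac{a_v+1-\1(v=v_0)}{2}$ with $a_v=\sum_{g\ni v}a_g$. For $v\in S(F)$ every edge at $v$ lies in $F$, so exchanging the order of summation gives
\[
\sum_{v\in S(F)}\frac{a_v}{2}=\sum_{g\in F}\frac{c_g}{2}\,a_g,
\qquad
c_g:=|g\cap S(F)|\in\{0,1,2\}.
\]
Hence
\[
A(F)=\sum_{g\in F}\Bigl(1-\frac{c_g}{2}\Bigr)a_g
+\frac{\kappa(G\setminus F)-1-|S(F)|}{2}
+\frac{\1(v_0\in S(F))}{2}.
\]
The first sum has nonnegative coefficients, and the last term is nonnegative.

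\textbf{Vertex/component piece.} I will show $\kappa(G\setminus F)-1\ge |S(F)|$.
\begin{itemize}
\item Every $v\in S(F)$ is an isolated vertex of $G\setminus F$, so it forms its own component.
\item $V\setminus S(F)$ is nonempty. Otherwise every vertex has all its incident edges in $F$, which forces $F=E$, contradicting $F\ne E$.
\item The vertices of $V\setminus S(F)$ therefore contribute at least one further component.
\end{itemize}
So $\kappa(G\setminus F)\ge |S(F)|+1$, and the middle term in the expression for $A(F)$ is nonnegative.

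\textbf{Finding a good edge.} It remains to show that some $g\in F$ has $c_g\le 1$. That edge contributes at least $a_g/2\ge \underline a/2$, and all other terms are nonnegative, which gives $A(F)\ge \underline a/2$. I expect this to be the only real obstacle.

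Suppose instead that every $g\in F$ has both endpoints in $S(F)$. Since $F\neq\emptyset$, this makes $S(F)$ nonempty. Take any edge $\{v,u\}\in E$ with $v\in S(F)$. By definition of $S(F)$ the edge lies in $F$, so by the assumption $u\in S(F)$ as well. Thus $S(F)$ is closed under adjacency. Because $G$ is connected, this forces $S(F)=V$, contradicting the previous step. Hence some edge of $F$ has $c_g\le 1$, which completes the argument.

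A remark on sharpness: equality $A(F)=\underline a/2$ can occur. This happens when exactly one edge of $F$ has one endpoint in $S(F)$ and weight $\underline a$, the components count exactly, and $v_0\notin S(F)$. So the constant $\underline a/2$ cannot be improved by this method.
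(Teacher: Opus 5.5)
Your proof is correct and is essentially the paper's argument. Your coefficients $c_g$ play the role of the paper's partition $F=F_0\sqcup F_1\sqcup F_2$, you get $\kappa(G\setminus F)\ge |S(F)|+1$ the same way, and you find an edge with $c_g\le 1$ by the same connectivity contradiction; the only cosmetic difference is that you split off $\1(v_0\in S(F))/2$ as its own nonnegative term, whereas the paper absorbs it through $|S|-|S\setminus\{v_0\}|\ge 0$.
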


\begin{proof}
Deferred to Appendix~\ref{sec:proof_A_lower_bound}.
Note that the claim does not hold if $F = \emptyset$ or if $F = E$, because $A(F) = 0$
in both these cases.
\end{proof}

We will also make use of the following lemma.
\begin{lemma}[Ordered small-edge coordinates]
\label{lem:ordered_small_edge_coordinates}
Let \(F\subsetneq E\) be nonempty, write \(m:=|F|\), and fix an ordering \(\pi\), such that the edges in $F$ is ordered as
\(\{g_1,\dots,g_m\}\) under $\pi$. Fix $\delta_0:=(2|E|)^{-1}$, define
\begin{equation}    
\label{eq:omega_definition}
\Omega(F,\pi):=
\{x\in\Delta_E:\ 0<x_{g_1}\le \cdots \le x_{g_m}<\delta_0,\ \ x_h\ge \delta_0 \ \forall h\in E\setminus F\}.
\end{equation}
For \(x\in\Omega(F,\pi)\), 
define
\(r:=x_{g_m}\) and \(t_j:=x_{g_j}/x_{g_{j+1}}\in(0,1]\) for \(1\le j\le m-1\).
Then \(x_{g_j}=r\prod_{\ell=j}^{m-1}t_\ell\) for every \(1\le j\le m\), and
\[
dx_F
:=
\prod_{j=1}^m dx_j
=
r^{m-1}\prod_{j=1}^{m-1} t_j^{\,j-1}\,dr\,dt_1\cdots dt_{m-1}.
\]

More generally, for any real numbers \((\alpha_g)_{g\in F}\), if
\(\alpha(F_j^{\pi}):=\sum_{k=1}^j \alpha_{g_k}\), then
\[
\Bigl(\prod_{g\in F}x_g^{\alpha_g-1}\Bigr)\,dx_F
=
r^{\alpha(F)-1}\prod_{j=1}^{m-1} t_j^{\,\alpha(F_j^{\pi})-1}\,dr\,dt_1\cdots dt_{m-1}.
\]
In particular, if \(e=g_q\in F\) and \(s\ge 0\), then
\[
x_e^{-s}
=
r^{-s}\prod_{j=q}^{m-1} t_j^{-s}
=
r^{-s}\prod_{j=1}^{m-1} t_j^{-s\,\1(e\in F_j^{\pi})}.
\]
\end{lemma}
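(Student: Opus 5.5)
This lemma is a change of variables on the ordered region, so I will argue directly by substitution. On \(\Omega(F,\pi)\) the coordinates \(x_{g_1}\le\cdots\le x_{g_m}\) are strictly positive, so the ratios \(t_j=x_{g_j}/x_{g_{j+1}}\in(0,1]\) and \(r=x_{g_m}\in(0,\delta_0)\) are well defined. The first step is the product formula. Telescoping gives
\[
r\prod_{\ell=j}^{m-1}t_\ell=x_{g_m}\prod_{\ell=j}^{m-1}\frac{x_{g_\ell}}{x_{g_{\ell+1}}}=x_{g_j},
\]
with the empty product when \(j=m\). This shows that the map \((x_{g_1},\dots,x_{g_m})\mapsto(r,t_1,\dots,t_{m-1})\) is a smooth bijection onto its image, with explicit inverse \(x_{g_j}=r\prod_{\ell\ge j}t_\ell\). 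The coordinates outside \(F\), together with the simplex constraint, are left untouched. The density computation is done in the \(m\) variables of \(F\) with the remaining coordinates held fixed as parameters.

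The second step is the Jacobian, computed most cleanly in logarithmic coordinates. Set \(y_j=\log x_{g_j}\), \(\rho=\log r\) and \(s_\ell=\log t_\ell\). Then \(y_j=\rho+\sum_{\ell=j}^{m-1}s_\ell\), which is a linear map that is unit upper triangular in a suitable ordering, so its determinant is \(\pm1\). Hence
\[
\prod_j \frac{dx_{g_j}}{x_{g_j}}=\frac{dr}{r}\prod_\ell\frac{dt_\ell}{t_\ell},
\]
and therefore \(dx_F=\bigl(\prod_j x_{g_j}\bigr)\,r^{-1}\prod_\ell t_\ell^{-1}\,dr\,dt\). Next I count exponents. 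In \(\prod_j x_{g_j}=r^m\prod_\ell t_\ell^{\#\{j\le \ell\}}=r^m\prod_\ell t_\ell^{\ell}\), the factor \(t_\ell\) appears in \(x_{g_j}\) exactly when \(j\le\ell\). This yields \(r^{m-1}\prod_\ell t_\ell^{\ell-1}\), which is the stated Jacobian with \(j\) written in place of \(\ell\).

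For the general weighted identity I reuse the same count, now with weights. First,
\[
\prod_{g\in F}x_g^{\alpha_g}=r^{\alpha(F)}\prod_{\ell}t_\ell^{\sum_{j\le \ell}\alpha_{g_j}}=r^{\alpha(F)}\prod_\ell t_\ell^{\alpha(F_\ell^\pi)}.
\]
Multiplying by \(\prod_g x_g^{-1}\,dx_F=r^{-1}\prod_\ell t_\ell^{-1}\,dr\,dt\), which is the logarithmic form established above, gives the claim. The final statement is the special case of a single coordinate: \(x_e=x_{g_q}=r\prod_{\ell\ge q}t_\ell\). Raising this to the power \(-s\) and noting that \(\ell\ge q\) is equivalent to \(e\in F_\ell^\pi=\{g_1,\dots,g_\ell\}\) completes the proof.

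The only point requiring care is the index bookkeeping, namely that \(t_\ell\) multiplies exactly the \(x_{g_j}\) with \(j\le \ell\). Once that convention is fixed, every identity follows by the same exponent count.
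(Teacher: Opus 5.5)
Your proof is correct and follows essentially the same route as the paper: an explicit change of variables with a triangular Jacobian, plus the bookkeeping that \(t_\ell\) enters exactly the coordinates \(x_{g_j}\) with \(j\le\ell\). The only difference is cosmetic. You compute the Jacobian through logarithmic coordinates, where the map is unit triangular, and then absorb the weights as a monomial factor; the paper instead differentiates the recursion \(x_{g_j}=t_jx_{g_{j+1}}\) directly to get \(dx_F=(x_{g_2}\cdots x_{g_m})\,dr\,dt_1\cdots dt_{m-1}\).
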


\begin{proof}
Since \(x_{g_j}=t_jx_{g_{j+1}}\) for \(1\le j\le m-1\) and \(x_{g_m}=r\), iterating gives
\(x_{g_j}=r\prod_{\ell=j}^{m-1}t_\ell\). Writing \(y_j:=x_{g_j}\), we have
\(y_m=r\) and \(y_j=t_jy_{j+1}\), so \(dy_1\cdots dy_m=(y_2\cdots y_m)\,dt_1\cdots dt_{m-1}\,dr\). Since
\(y_k=r\prod_{\ell=k}^{m-1}t_\ell\), the product \(y_2\cdots y_m\) equals
\(r^{m-1}\prod_{j=1}^{m-1}t_j^{\,j-1}\), because \(t_j\) appears in exactly \(j-1\) of the factors \(y_2,\dots,y_m\). This proves the Jacobian formula.

For the monomial identity, substitute \(x_{g_k}=r\prod_{\ell=k}^{m-1}t_\ell\) into
\(\prod_{g\in F}x_g^{\alpha_g-1}\). The exponent of \(r\) from this product is
\(\sum_{k=1}^m(\alpha_{g_k}-1)=\alpha(F)-m\), while the exponent of \(t_j\) is
\(\sum_{k=1}^j(\alpha_{g_k}-1)=\alpha(F_j^{\pi})-j\), since \(t_j\) appears exactly in
\(x_{g_1},\dots,x_{g_j}\). After multiplying by the Jacobian, these become
\(\alpha(F)-1\) and \(\alpha(F_j^{\pi})-1\), respectively, proving the second display.

Finally, if \(e=g_q\), then \(x_e=x_{g_q}=r\prod_{j=q}^{m-1}t_j\), and raising to the power \(-s\) gives the last identity. The alternative form follows because \(e=g_q\in F_j^{\pi}\) holds exactly when \(q\le j\).
\end{proof}
We may now prove the required negative-moment and tail bounds.

\begin{theorem}[Small-edge tail bound]
\label{thm:small_edge_tail}
Assume that \(a_g\ge \underline a>0\) for all \(g\in E\). Fix \(e\in E\) and \(0<s<\underline a/2\). Let \(\delta_0:=(2|E|)^{-1}\). For each nonempty proper subset \(F\subsetneq E\) with \(e\in F\), define
\[
B_{\mathrm{out}}(F):=\sum_{v\notin S(F)} b_v
\qquad\text{and}\qquad
M_{\mathrm{edge}}(F):=\prod_{h\in E\setminus F}\max\{1,\delta_0^{\,a_h-1}\}.
\]
Then
\begin{equation}
\label{eq:small_edge_moment_bound}
\E[X_e^{-s}]
\le
(2|E|)^s
+
\frac{1}{\widetilde Z_{v_0,A}}
\sum_{\substack{\varnothing\neq F\subsetneq E\\ e\in F}}
|F|!\,\sqrt{|\mathcal T|}\,M_{\mathrm{edge}}(F)\,\delta_0^{-B_{\mathrm{out}}(F)}
\frac{\delta_0^{\,\underline a/2-s}}{(\underline a/2-s)^{|F|}}.
\end{equation}
In particular, \(\E[X_e^{-s}]<\infty\). Consequently, if
\begin{equation}
\label{eq:Ce_s_def}
C_{e,s}:=
(2|E|)^s
+
\frac{1}{\widetilde Z_{v_0,A}}
\sum_{\substack{\varnothing\neq F\subsetneq E\\ e\in F}}
|F|!\,\sqrt{|\mathcal T|}\,M_{\mathrm{edge}}(F)\,\delta_0^{-B_{\mathrm{out}}(F)}
\frac{\delta_0^{\,\underline a/2-s}}{(\underline a/2-s)^{|F|}},
\end{equation}
then
\begin{equation}
\label{eq:small_edge_tail_bound}
\P(X_e<\varepsilon)\le C_{e,s}\,\varepsilon^s,
\qquad 0<\varepsilon<1.
\end{equation}
\end{theorem}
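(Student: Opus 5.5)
We write \(\E[X_e^{-s}]=\int_{\Delta_E}x_e^{-s}\rho_{v_0,A}(x)\,dx\) using the simplex density \eqref{eq:simplex_density_general_graph} and split \(\Delta_E\) according to which coordinates are small. On \(\{x_e\ge\delta_0\}\) the integrand is at most \(\delta_0^{-s}=(2|E|)^s\) and \(\rho\) integrates to at most \(1\), which gives the first term. On \(\{x_e<\delta_0\}\) we let \(F:=\{g:\ x_g<\delta_0\}\ni e\). Since \(\sum_g x_g=1\) and \(|E|\delta_0=1/2<1\), some coordinate is at least \(\delta_0\), so \(F\neq E\). Hence \(\{x_e<\delta_0\}\) is covered by the regions \(\Omega(F,\pi)\) of \Cref{lem:ordered_small_edge_coordinates}, over nonempty proper \(F\ni e\) and the \(|F|!\) orderings \(\pi\) of \(F\); this explains the factor \(|F|!\). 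The tail bound \eqref{eq:small_edge_tail_bound} then follows from Markov's inequality, \(\P(X_e<\varepsilon)\le \varepsilon^s\E[X_e^{-s}]\).

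On a fixed \(\Omega(F,\pi)\) we bound each factor of \(\rho\).
\begin{itemize}
\item \emph{Spanning-tree factor.} Since \(x_g\le1\), we have \(\tau(x)\le|\mathcal T|\), so \(\sqrt{\tau}\le\sqrt{|\mathcal T|}\).
\item \emph{Edges \(h\notin F\).} Here \(\delta_0\le x_h\le1\), so \(x_h^{a_h-1}\le\max\{1,\delta_0^{a_h-1}\}\), which gives \(M_{\mathrm{edge}}(F)\).
\item \emph{Vertices \(v\notin S(F)\).} Such a vertex has an incident edge outside \(F\), so \(x_v\ge\delta_0\) and \(x_v^{-b_v}\le\delta_0^{-b_v}\), which gives \(\delta_0^{-B_{\mathrm{out}}(F)}\).
\item \emph{Vertices \(v\in S(F)\).} All incident edges lie in \(F\), so \(x_v\ge x_{g}\) for each incident \(g\). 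We must lower bound \(x_v\) by a monomial in \((r,t)\). Writing \(j_v\) for the largest index (in the order \(\pi\)) of an edge incident to \(v\), we get \(x_v\ge x_{g_{j_v}}=r\prod_{\ell\ge j_v}t_\ell\).
\end{itemize}

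Substituting into the monomial identity of \Cref{lem:ordered_small_edge_coordinates} with \(\alpha_g=a_g-s\1(g=e)\), the integrand becomes at most a constant times
\[
r^{A_r-1}\prod_{j=1}^{m-1}t_j^{A_j-1},
\]
integrated over \(r\in(0,\delta_0)\) and \(t_j\in(0,1]\). The exponent of \(r\) is \(A_r=a(F)-s-\sum_{v\in S(F)}b_v\). For the prefix sets \(F_j^\pi\), the exponent of \(t_j\) is \(A_j=a(F_j^\pi)-s\1(e\in F_j^\pi)-\sum_{v\in S(F_j^\pi)}b_v\). This uses that \(t_j\) appears in the bound for \(x_v\) exactly when \(v\in S(F_j^\pi)\), i.e.\ \(j_v\le j\).

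The key step is to show every exponent is at least \(\underline a/2-s>0\). We invoke \Cref{lem:A-lower-bound-explicit} on \(F\) and on each prefix \(F_j^\pi\), which gives \(A(\cdot)\ge\underline a/2\). But \(A\) contains the extra term \(+(\kappa-1)/2\ge0\), which enters with the wrong sign for a lower bound. I expect this to be the main obstacle. The fix is to sharpen the vertex bounds: each component of \(G\setminus F\) that is cut off contributes an extra \(\sqrt{\text{small}}\) through \(\sqrt{\tau(x)}\). Indeed, every spanning tree must use at least \(\kappa(G\setminus F)-1\) edges of \(F\), so \(\tau(x)\lesssim|\mathcal T|\,r^{\kappa-1}\times(\text{matching }t\text{-powers})\). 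Iterating this along the prefixes gives the factor \((\kappa(G\setminus F_j^\pi)-1)/2\) in each exponent.

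If that bookkeeping works, each exponent is at least \(\underline a/2-s\). Then \(\int_0^1t^{A_j-1}dt\le(\underline a/2-s)^{-1}\) and \(\int_0^{\delta_0}r^{A_r-1}dr\le\delta_0^{\underline a/2-s}/(\underline a/2-s)\), using \(\delta_0<1\) and \(A_r\ge\underline a/2-s\). This yields the factor \(\delta_0^{\underline a/2-s}/(\underline a/2-s)^{|F|}\). Finally, dividing by \(\widetilde Z_{v_0,A}\) and summing over \(F\) and \(\pi\) gives \eqref{eq:small_edge_moment_bound}.
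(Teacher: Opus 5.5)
Your proposal is correct and follows essentially the same route as the paper. It uses the same split at $\delta_0$, the same cover by the ordered regions $\Omega(F,\pi)$, and the same factor-by-factor bounds. It also resolves the obstacle you identified in the same way: the crude bound $\sqrt{\tau(x)}\le\sqrt{|\mathcal T|}$ is replaced, for each spanning tree $T$, by $\prod_{g\in T}x_g\le r^{|T\cap F|}\prod_{j}t_j^{|T\cap F_j^{\pi}|}$ together with $|T\cap F_j^{\pi}|\ge\kappa(G\setminus F_j^{\pi})-1$. The exponents then become exactly $A(F)-1-s$ and $A(F_j^{\pi})-1-s\,\1(e\in F_j^{\pi})$, and \Cref{lem:A-lower-bound-explicit} applies.

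Two small points. What you call sharpening ``the vertex bounds'' is really sharpening the spanning-tree factor. The paper also notes explicitly that the remaining simplex coordinates outside $F$ range over a set of Lebesgue measure at most $1$; your sketch leaves this implicit.
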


\begin{proof}
Write
\[
\E[X_e^{-s}]
=
\int_{\Delta_E} x_e^{-s}\rho_{v_0,A}(x)\,dx
=
\int_{\{x_e\ge \delta_0\}} x_e^{-s}\rho_{v_0,A}(x)\,dx
+
\int_{\{x_e<\delta_0\}} x_e^{-s}\rho_{v_0,A}(x)\,dx.
\]
On \(\{x_e\ge \delta_0\}\), one has \(x_e^{-s}\le \delta_0^{-s}=(2|E|)^s\), so
\[
\int_{\{x_e\ge \delta_0\}} x_e^{-s}\rho_{v_0,A}(x)\,dx \le (2|E|)^s.
\]

It remains to estimate the integral over \(\{x_e<\delta_0\}\). For each \(x\) in this set, let \(F(x):=\{g\in E:\ x_g<\delta_0\}\). Since \(\delta_0=(2|E|)^{-1}<|E|^{-1}\), not all coordinates of \(x\) can be \(<\delta_0\), so \(F(x)\) is always a nonempty proper subset of \(E\), and it contains \(e\). Thus
\[
\{x_e<\delta_0\}
=
\bigcup_{\substack{\varnothing\neq F\subsetneq E\\ e\in F}}
\bigcup_{\pi\in\mathfrak S(F)}
\Omega(F,\pi),
\]
where, for \(F=\{g_1,\dots,g_m\}\) listed in the order \(\pi\), $\Omega(F, \pi)$ is defined in \cref{eq:omega_definition}.
The union is disjoint up to null sets.

Fix such a pair \((F,\pi)\), write \(m:=|F|\), and let \(F_j^{\pi}:=\{g_1,\dots,g_j\}\). 
On \(\Omega(F,\pi)\), introduce the coordinates \(r:=x_{g_m}\) and \(t_j:=x_{g_j}/x_{g_{j+1}}\in(0,1]\) for \(j=1,\dots,m-1\). By Lemma~\ref{lem:ordered_small_edge_coordinates},
\[
\Bigl(\prod_{g\in F}x_g^{a_g-1}\Bigr)\,dx_F
=
r^{a(F)-1}\prod_{j=1}^{m-1} t_j^{a(F_j^{\pi})-1}\,dr\,dt_1\cdots dt_{m-1}.
\]

For the remaining factors, we use the following bounds on \(\Omega(F,\pi)\). First, for \(h\in E\setminus F\), since \(x_h\in[\delta_0,1]\), we have \(x_h^{a_h-1}\le \max\{1,\delta_0^{\,a_h-1}\}\), and hence \(\prod_{h\in E\setminus F}x_h^{a_h-1}\le M_{\mathrm{edge}}(F)\).

Second, if \(v\notin S(F)\), then some incident edge lies outside \(F\), so \(x_v\ge \delta_0\), and therefore \(x_v^{-b_v}\le \delta_0^{-b_v}\). For \(v\in S(F)\), all incident edges of \(v\) belong to \(F\). Let
\[
m_v:=\max\{k\in\{1,\dots,m\}: g_k \text{ is incident to } v\}.
\]
Then \(x_v=\sum_{e\ni v}x_e\ge x_{g_{m_v}}\), and since \(x_{g_k}=r\prod_{\ell=k}^{m-1} t_\ell\), we get
\[
x_v\ge x_{g_{m_v}}=r\prod_{\ell=m_v}^{m-1} t_\ell,
\qquad\text{so}\qquad
x_v^{-b_v}\le r^{-b_v}\prod_{\ell=m_v}^{m-1} t_\ell^{-b_v}.
\]

Multiplying these bounds over \(v\notin S(F)\) gives the factor \(\delta_0^{-B_{\mathrm{out}}(F)}\), where
\[
B_{\mathrm{out}}(F):=\sum_{v\notin S(F)} b_v.
\]
Multiplying over \(v\in S(F)\) gives the factor
\[
r^{-\sum_{v\in S(F)} b_v}
\prod_{v\in S(F)}\prod_{\ell=m_v}^{m-1} t_\ell^{-b_v}.
\]
Now fix \(j\in\{1,\dots,m-1\}\). The variable \(t_j\) appears in the contribution of \(v\) exactly when \(m_v\le j\). Since \(m_v\) is the largest index of an edge in \(F\) incident to \(v\), the condition \(m_v\le j\) is equivalent to saying that every edge incident to \(v\) lies in \(F_j^{\pi}=\{g_1,\dots,g_j\}\), i.e.\ \(v\in S(F_j^{\pi})\). Indeed, we show both implications.

\emph{(\(\Rightarrow\))} Suppose \(m_v\le j\). By definition, \(m_v\) is the largest index \(k\) such that \(g_k\) is incident to \(v\). Hence no edge \(g_k\) with \(k>j\) is incident to \(v\). Since \(v\in S(F)\), every edge incident to \(v\) belongs to \(F\), so every edge incident to \(v\) must in fact belong to \(F_j^{\pi}=\{g_1,\dots,g_j\}\). Therefore \(v\in S(F_j^{\pi})\).

\emph{(\(\Leftarrow\))} Conversely, suppose \(v\in S(F_j^{\pi})\). Then every edge incident to \(v\) lies in \(F_j^{\pi}\). In particular, if \(g_k\) is incident to \(v\), then \(k\le j\). Since \(m_v\) is the largest index of an edge in \(F\) incident to \(v\), it follows that \(m_v\le j\).

Thus
\[
m_v\le j
\quad\Longleftrightarrow\quad
v\in S(F_j^{\pi}).
\]
Therefore the total exponent of \(t_j\) is \(-\sum_{v\in S(F):\,m_v\le j} b_v = -\sum_{v\in S(F_j^{\pi})} b_v\). Combining everything, we obtain
\[
\prod_{v\in V}x_v^{-b_v}
\le
\delta_0^{-B_{\mathrm{out}}(F)}
\,r^{-\sum_{v\in S(F)}b_v}
\prod_{j=1}^{m-1} t_j^{-\sum_{v\in S(F_j^{\pi})}b_v}.
\]

Third, for the spanning-tree factor, fix \(T\in\mathcal T\). Since \(x_h\le 1\) for \(h\in E\setminus F\), we have
\[
\prod_{g\in T}x_g
\le
r^{|T\cap F|}
\prod_{j=1}^{m-1} t_j^{\,|T\cap F_j^{\pi}|}.
\]
Indeed,
\[
\prod_{g\in T} x_g
=
\Bigl(\prod_{g\in T\cap F} x_g\Bigr)
\Bigl(\prod_{h\in T\setminus F} x_h\Bigr).
\]
Since \(x_h\le 1\) for every \(h\in E\setminus F\), we have \(\prod_{g\in T} x_g \le \prod_{g\in T\cap F} x_g\). Now write \(T\cap F=\{g_{k_1},\dots,g_{k_q}\}\), where \(q=|T\cap F|\). For each such edge \(g_k\), we have \(x_{g_k}=r\prod_{\ell=k}^{m-1} t_\ell\). Therefore
\[
\prod_{g\in T\cap F} x_g
=
\prod_{g_k\in T\cap F}\left(r\prod_{\ell=k}^{m-1} t_\ell\right)
=
r^{|T\cap F|}
\prod_{j=1}^{m-1} t_j^{\,\#\{g_k\in T\cap F:\ k\le j\}}.
\]
But \(\#\{g_k\in T\cap F:\ k\le j\}=|T\cap F_j^{\pi}|\), since \(F_j^{\pi}=\{g_1,\dots,g_j\}\). Hence
\[
\prod_{g\in T}x_g
\le
r^{|T\cap F|}
\prod_{j=1}^{m-1} t_j^{\,|T\cap F_j^{\pi}|}.
\]

Because a spanning tree must reconnect the \(\kappa(G\setminus F_j^{\pi})\) components of \(G\setminus F_j^{\pi}\), one has \(|T\cap F_j^{\pi}|\ge \kappa(G\setminus F_j^{\pi})-1\). This can be seen more directly by comparing \(G\setminus F_j^{\pi}\) with the spanning subgraph \(T\setminus F_j^{\pi}\): since \(T\setminus F_j^{\pi} \subseteq G\setminus F_j^{\pi}\) and both graphs have the same vertex set, adding edges can only decrease the number of connected components, so \(\kappa(G\setminus F_j^{\pi})\le \kappa(T\setminus F_j^{\pi})\). Now \(T\) is a tree, so removing \(k\) edges from \(T\) produces exactly \(k+1\) connected components; taking \(k=|T\cap F_j^{\pi}|\) gives \(\kappa(T\setminus F_j^{\pi})=|T\cap F_j^{\pi}|+1\). Combining these yields
\[
\kappa(G\setminus F_j^{\pi})\le |T\cap F_j^{\pi}|+1.
\]

Hence
\[
\sqrt{\tau(x)}
\le
\sqrt{|\mathcal T|}\,
r^{(\kappa(G\setminus F)-1)/2}
\prod_{j=1}^{m-1} t_j^{\,(\kappa(G\setminus F_j^{\pi})-1)/2}.
\]

Also, by Lemma~\ref{lem:ordered_small_edge_coordinates},
\[
x_e^{-s}=r^{-s}\prod_{j=1}^{m-1} t_j^{-s\,\1(e\in F_j^{\pi})}.
\]

Recall that on \(\Omega(F,\pi)\),
\[
x_e^{-s}\rho_{v_0,A}(x)\,dx
=
\widetilde Z_{v_0,A}^{-1}\,
x_e^{-s}
\Bigl(\prod_{g\in F}x_g^{a_g-1}\Bigr)
\Bigl(\prod_{h\in E\setminus F}x_h^{a_h-1}\Bigr)
\Bigl(\prod_{v\in V}x_v^{-b_v}\Bigr)
\sqrt{\tau(x)}\,dx_F\,dz.
\]
Using the preceding bounds together with Lemma~\ref{lem:ordered_small_edge_coordinates},
\[
\Bigl(\prod_{g\in F}x_g^{a_g-1}\Bigr)\,dx_F
=
r^{a(F)-1}\prod_{j=1}^{m-1} t_j^{a(F_j^{\pi})-1}\,dr\,dt_1\cdots dt_{m-1},
\qquad
x_e^{-s}
=
r^{-s}\prod_{j=1}^{m-1} t_j^{-s\,\1(e\in F_j^{\pi})}.
\]
Hence
\[
x_e^{-s}\rho_{v_0,A}(x)\,dx
\le
\frac{\sqrt{|\mathcal T|}\,M_{\mathrm{edge}}(F)\,\delta_0^{-B_{\mathrm{out}}(F)}}{\widetilde Z_{v_0,A}}
\,r^{A(F)-1-s}
\prod_{j=1}^{m-1} t_j^{A(F_j^{\pi})-1-s\,\1(e\in F_j^{\pi})}
\,dr\,dt_1\cdots dt_{m-1}\,dz,
\]
because the exponent of \(r\) is \(a(F)-1-\sum_{v\in S(F)}b_v+(\kappa(G\setminus F)-1)/2-s=A(F)-1-s\), and similarly the exponent of \(t_j\) is \(a(F_j^{\pi})-1-\sum_{v\in S(F_j^{\pi})}b_v+(\kappa(G\setminus F_j^{\pi})-1)/2-s\,\1(e\in F_j^{\pi})=A(F_j^{\pi})-1-s\,\1(e\in F_j^{\pi})\). Therefore
\[
x_e^{-s}\rho_{v_0,A}(x)\,dx
\le
\frac{\sqrt{|\mathcal T|}\,M_{\mathrm{edge}}(F)\,\delta_0^{-B_{\mathrm{out}}(F)}}{\widetilde Z_{v_0,A}}
\,r^{A(F)-1-s}
\prod_{j=1}^{m-1} t_j^{A(F_j^{\pi})-1-s\,\1(e\in F_j^{\pi})}
\,dr\,dt_1\cdots dt_{m-1}\,dz.
\]
where \(dz\) denotes the remaining simplex coordinates outside \(F\). To make the \(z\)-coordinates precise, fix once and for all an edge \(h_\ast\in E\setminus F\) (this is possible since \(F\subsetneq E\)). Write
\[
E\setminus F=\{h_\ast\}\cup H,
\qquad |H|=|E|-|F|-1.
\]
For \(h\in H\), set \(z_h:=x_h\), and let \(dz:=\prod_{h\in H}dz_h\). Then, on \(\Delta_E\), the remaining outside coordinate is determined by the simplex constraint:
\[
x_{h_\ast}
=
1-\sum_{g\in F}x_g-\sum_{h\in H} z_h.
\]
Thus \((r,t_1,\dots,t_{m-1},z)\) gives a coordinate system on \(\Omega(F,\pi)\), where \(z=(z_h)_{h\in H}\in \mathbb R^{|E|-|F|-1}\). More explicitly, the admissible \(z\)-set is
\[
D(r,t):=
\Bigl\{
(z_h)_{h\in H}:\ z_h\ge \delta_0\ \forall h\in H,\ \ 
x_{h_\ast}=1-\sum_{g\in F}x_g-\sum_{h\in H}z_h\ge \delta_0
\Bigr\}.
\]
In particular, since each coordinate satisfies \(\delta_0\le z_h\le 1\), we have
\[
D(r,t)\subseteq [\delta_0,1]^{|E|-|F|-1}\subseteq [0,1]^{|E|-|F|-1}.
\]
Hence its Lebesgue measure is at most \(1\):
\[
|D(r,t)|\le 1.
\]
Therefore, after obtaining an upper bound that is independent of \(z\), integrating over the \(z\)-variables contributes at most a factor \(1\). If \(|E\setminus F|=1\), then \(H=\varnothing\), there are no \(z\)-variables, and this factor is interpreted as \(1\).

By Lemma \ref{lem:A-lower-bound-explicit}, every exponent \(A(F)\) and \(A(F_j^{\pi})\) is at least \(\underline a/2\). Since \(0<s<\underline a/2\), all the resulting integrals are finite, and
\[
\int_0^{\delta_0} r^{A(F)-1-s}\,dr
\le
\frac{\delta_0^{\,\underline a/2-s}}{\underline a/2-s},
\qquad
\int_0^1 t_j^{A(F_j^{\pi})-1-s\,\1(e\in F_j^{\pi})}\,dt_j
\le
\frac{1}{\underline a/2-s}.
\]
Hence
\[
\int_{\Omega(F,\pi)} x_e^{-s}\rho_{v_0,A}(x)\,dx
\le
\frac{\sqrt{|\mathcal T|}\,M_{\mathrm{edge}}(F)\,\delta_0^{-B_{\mathrm{out}}(F)}}{\widetilde Z_{v_0,A}}
\frac{\delta_0^{\,\underline a/2-s}}{(\underline a/2-s)^{|F|}}.
\]
Summing over the \(|F|!\) possible orderings \(\pi\) and then over all nonempty proper \(F\subsetneq E\) containing \(e\) gives \eqref{eq:small_edge_moment_bound}.

Finally, since \(\E[X_e^{-s}]<\infty\), Markov's inequality gives
\[
\P(X_e<\varepsilon)=\P(X_e^{-s}>\varepsilon^{-s})
\le \varepsilon^s\,\E[X_e^{-s}],
\]
and substituting \eqref{eq:small_edge_moment_bound} yields \eqref{eq:small_edge_tail_bound}.
\end{proof}

The preceding theorem has two immediate consequences that will be used later.

\begin{lemma}[Root-mass tail and Laplace consequence]
\label{lem:root_tail_laplace}
Assume that \(a_g\ge \underline a>0\) for all \(g\in E\), and fix \(0<s<\underline a/2\).

\begin{enumerate}
    \item Let \(\pi^W\) be the stationary distribution associated with the random environment \(W\). 
    Then, for every \(\varepsilon>0\),
    \begin{equation}
    \label{eq:root_tail_bound}
    \P\bigl(\pi^W(v_0)<\varepsilon\bigr)
    \le
    2^s \Big(\min_{g:v_0\in g}C_{g,s}\Big) \varepsilon^s.
    \end{equation}

    \item For every \(u>0\) and every edge \(e\in E\),
    \begin{equation}
    \label{eq:laplace_Xe2_bound}
    \E[e^{-uX_e^2}]
    \le
    C_{s/2}\,u^{-s/2}\,\E[X_e^{-s}],
    \qquad
    C_{s/2}:=\sup_{y>0} y^{s/2}e^{-y}
    =\Bigl(\frac{s}{2e}\Bigr)^{s/2}.
    \end{equation}
    In particular, \(\E[e^{-uX_e^2}]\lesssim u^{-s/2}\), with an explicit constant depending only on \(e\) and \(s\).
\end{enumerate}
\end{lemma}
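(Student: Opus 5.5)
Both parts are short consequences of \Cref{thm:small_edge_tail}, combined with elementary comparisons.

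\textbf{Part 1.} I would bound the stationary root mass below by the normalized mass of any single incident edge. Recall \(\pi^W(v_0)=W_{v_0}/\sum_{v}W_v\) and \(\sum_v W_v=2\sum_{g\in E}W_g\). Hence
\[
\pi^W(v_0)=\frac{\sum_{g\ni v_0}X_g}{2}.
\]
For any edge \(g\) with \(v_0\in g\), this gives \(\pi^W(v_0)\ge X_g/2\). The inclusion \(\{\pi^W(v_0)<\varepsilon\}\subseteq\{X_g<2\varepsilon\}\) then yields
\[
\P(\pi^W(v_0)<\varepsilon)\le \P(X_g<2\varepsilon).
\]
I would bound the right-hand side by Markov's inequality with \(\E[X_g^{-s}]\le C_{g,s}\), as in the last step of \Cref{thm:small_edge_tail}: \(\P(X_g<2\varepsilon)\le (2\varepsilon)^s C_{g,s}\). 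Using Markov directly, rather than \eqref{eq:small_edge_tail_bound}, covers all \(\varepsilon>0\), including \(2\varepsilon\ge 1\), where the probability bound is in any case trivial if it exceeds one. Minimizing over the incident edges \(g\) gives \eqref{eq:root_tail_bound}.

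\textbf{Part 2.} This is a pointwise inequality followed by taking expectations. For \(y>0\), the definition of \(C_{s/2}\) gives \(e^{-y}\le C_{s/2}\,y^{-s/2}\). Setting \(y=uX_e^2>0\) (note \(X_e>0\) a.s.) gives
\[
e^{-uX_e^2}\le C_{s/2}\,u^{-s/2}X_e^{-s}.
\]
Taking expectations gives \eqref{eq:laplace_Xe2_bound}. The right-hand side is finite by \eqref{eq:small_edge_moment_bound}, which is where the hypothesis \(s<\underline a/2\) enters. The value of the supremum follows by calculus: \(y^{s/2}e^{-y}\) is maximized at \(y=s/2\), giving \((s/(2e))^{s/2}\). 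The "in particular" statement follows with constant \(C_{s/2}C_{e,s}\).

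No step here is a real obstacle; the substantive work (finiteness of the negative moment) is already done in \Cref{thm:small_edge_tail}. The only points needing care are the factor \(2\) coming from \(\sum_v W_v=2\sum_g W_g\), and applying Markov's inequality directly so that the bound covers all \(\varepsilon>0\).
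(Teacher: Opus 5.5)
Your proposal is correct and follows the paper's proof essentially verbatim. The identity $\pi^W(v_0)=\frac12\sum_{g\ni v_0}X_g$ reduces Part 1 to the small-edge tail at level $2\varepsilon$ for a minimizing incident edge, and Part 2 is the same pointwise bound $e^{-y}\le C_{s/2}\,y^{-s/2}$ with $y=uX_e^2$. Your choice to apply Markov's inequality directly, rather than citing \eqref{eq:small_edge_tail_bound} (stated only for $0<\varepsilon<1$), slightly tidies the paper's argument, since the lemma claims the bound for every $\varepsilon>0$.
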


\begin{proof}
Since
\[
\pi^W(v_0)
=
\frac{W_{v_0}}{\sum_{u\in V}W_u}
=
\frac12\sum_{g\ni v_0}X_g,
\]
the event \(\{\pi^W(v_0)<\varepsilon\}\) implies
\(
\sum_{g\ni v_0}X_g<2\varepsilon.
\)
In particular, for every edge \(g\) incident to \(v_0\), one has
\(X_g<2\varepsilon\). Fix one such edge \(g_0\ni v_0\) that minimizes \(\{C_{g,s}:\,g\ni v_0\}\). Then
\[
\{\pi^W(v_0)<\varepsilon\}
\subseteq
\{X_{g_0}<2\varepsilon\}.
\]
Therefore, by \Cref{thm:small_edge_tail},
\[
\P(\pi^W(v_0)<\varepsilon)
\le
\P(X_{g_0}<2\varepsilon)
\le
2^s C_{g_0,s}\varepsilon^s.
\]

For the Laplace estimate, let \(Z>0\) and \(r>0\). Since \(e^{-y}\le C_r y^{-r}\) for all \(y>0\), where \(C_r:=\sup_{y>0}y^r e^{-y}=(r/e)^r\), one has \(e^{-uZ}\le C_r u^{-r}Z^{-r}\). Taking \(Z=X_e^2\) and \(r=s/2\) gives \eqref{eq:laplace_Xe2_bound}.
\end{proof}


\paragraph{Step 2: quenched concentration.}

We now pass from tail bounds on the random environment to concentration estimates for the quenched walk. For a fixed environment \(w\), let \(P^w\) be the quenched transition matrix, let \(\pi^w\) be its stationary distribution, and recall that
\(X_e(w):=w_e/\sum_{g\in E}w_g\) for \(e\in E\). We also write
\[
m(w):=\min_{e\in E}X_e(w).
\]
For the spectral gap, let \(\lambda_2(P^w)\) denote the second largest eigenvalue of \(P^w\), let
\begin{equation}        \label{eq:gamma_for_gap}
\gamma(P^w):=1-\lambda_2(P^w).
\end{equation}

\begin{proposition}[Spectral gap and edge-count concentration]
\label{prop:quenched_concentration_package}
For every fixed environment \(w\), the quenched chain \(P^w\) satisfies
\begin{equation}
\label{eq:spectral_gap_from_min_edge_main}
\gamma(P^w)\ge \frac{m(w)^2}{2}.
\end{equation}
Consequently, for every \(0<s<\underline a/2\),
\begin{equation}
\label{eq:spectral_gap_tail_main}
\Perrw^{v_0, A}\bigl(\gamma(P^W)<\varepsilon\bigr)
\le
\Bigl(\sum_{e\in E} C_{e,s}\Bigr)\,(2\varepsilon)^{s/2},
\qquad 0<\varepsilon<\frac12,
\end{equation}
where \(W\sim\mu_{v_0,A}\) and the constants \(C_{e,s}\) are those from \Cref{thm:small_edge_tail}.

Now fix an edge \(e\in E\), and let
\[
N_e(T):=\sum_{t=0}^{T-1}\1\bigl(\{X_t,X_{t+1}\}=e\bigr)
\]
be its 
traversal count up to time \(T\). 
If the chain starts from stationarity \(\pi^w\), then for every \(\eta>0\) and every \(T\ge 1\),
\begin{equation}
\label{eq:quenched_concentration_main}
\Psrw^{\pi^w,w}\!\left(\bigl|N_e(T)-TX_e(w)\bigr|\ge \eta T\right)
\le
2\exp\!\left(-\frac{\gamma(P^w)}{4}\eta^2T\right)
+
4\exp\!\left(-\frac{1}{8}\eta^2T\right).
\end{equation}
In particular, on the event \(\{\gamma(P^w)\ge \varepsilon_1,\ X_e(w)\ge \varepsilon_2\}\), one has for every \(0<\delta\le 1\),
\begin{equation}
\label{eq:quenched_relative_concentration_main}
\Psrw^{\pi^w,w}\!\left(\bigl|N_e(T)-TX_e(w)\bigr|\ge \delta\,T X_e(w)\right)
\le
2\exp\!\left(-\frac{\varepsilon_1\varepsilon_2^2}{4}\delta^2T\right)
+
4\exp\!\left(-\frac{\varepsilon_2^2}{8}\delta^2T\right).
\end{equation}
\end{proposition}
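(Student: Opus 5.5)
The plan has three parts, one for each display in the statement. The spectral gap bound \eqref{eq:spectral_gap_from_min_edge_main} comes from Cheeger's inequality. The tail estimate \eqref{eq:spectral_gap_tail_main} then follows from \Cref{thm:small_edge_tail} and a union bound. The concentration bound \eqref{eq:quenched_concentration_main} comes from splitting \(N_e(T)-TX_e(w)\) into an additive functional of the reversible chain plus a martingale.

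\emph{Spectral gap.} For the reversible chain \(P^w\), I use the conductance
\(\Phi:=\min_{S:\,0<\pi^w(S)\le 1/2} Q(S,S^c)/\pi^w(S)\), where \(Q(S,S^c)=\sum_{i\in S,j\notin S}\pi^w_i p_{ij}(w)\). Cheeger's inequality (e.g.\ \cite{LevinPeresWilmer2017}) gives \(\gamma(P^w)\ge \Phi^2/2\). Since \(\pi_i^w p_{ij}(w)=w_{ij}/\sum_v w_v=X_{ij}(w)/2\), we get \(Q(S,S^c)=\tfrac12\sum_{e\in\partial S}X_e(w)\). Because \(G\) is connected, every proper \(S\) has at least one boundary edge, so \(Q(S,S^c)\ge m(w)/2\). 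Combined with \(\pi^w(S)\le 1/2\), this gives \(\Phi\ge m(w)\), and hence \(\gamma(P^w)\ge m(w)^2/2\).

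\emph{Tail of the gap.} If \(\gamma(P^W)<\varepsilon\), then \(m(W)^2<2\varepsilon\), so \(X_e(W)<\sqrt{2\varepsilon}\) for some \(e\). Since \(\varepsilon<1/2\), we have \(\sqrt{2\varepsilon}<1\), and \eqref{eq:small_edge_tail_bound} applies to each \(e\). A union bound over \(e\in E\) then gives \(\sum_e C_{e,s}(2\varepsilon)^{s/2}\). Only the law of \(W\sim\mu_{v_0,A}\) is involved here, so the result is the same whether it is written under \(\Perrw^{v_0,A}\) or under \(\mu_{v_0,A}\).

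\emph{Concentration.} Write \(e=\{i,j\}\) and set \(f(x):=\1(x=i)p_{ij}(w)+\1(x=j)p_{ji}(w)\), the conditional probability that the next step traverses \(e\). Then \(\sum_x\pi^w_x f(x)=2w_e/\sum_v w_v=X_e(w)\). Decompose
\[
N_e(T)-TX_e(w)=\sum_{t=0}^{T-1}\bigl(f(X_t)-X_e(w)\bigr)+M_T,
\qquad M_T:=\sum_{t=0}^{T-1}\bigl(\1(\{X_t,X_{t+1}\}=e)-f(X_t)\bigr).
\]
The process \(M_T\) is a martingale with increments bounded by \(1\). Azuma--Hoeffding therefore gives \(\P(|M_T|\ge\eta T/2)\le 2e^{-\eta^2T/8}\). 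For the first sum, the chain is stationary and reversible, and \(f\) takes values in \([0,1]\), so its centred version has variance at most \(1/4\) and sup-norm at most \(1\). I apply L\'ezaud's Chernoff bound or Paulin's Bernstein inequality \cite{Lezaud1998,Paulin2015} at deviation \(\eta T/2\). The aim is the form \(2\exp(-\gamma(P^w)\eta^2T/4)\); the remaining \(2e^{-\eta^2T/8}\) slack in the stated constant \(4\) can absorb a cruder version, for example one obtained by treating the two orientations \(i\to j\) and \(j\to i\) separately.

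Matching the precise constants in the Markov-chain Chernoff bound is the step I expect to be the main obstacle. If Paulin's bound gives a denominator such as \(\gamma\eta^2T/(8\cdot\tfrac14+20\eta)\), I would use that \(\eta\le 1\) is the only nontrivial range and adjust, or invoke L\'ezaud's sharper form. Finally, \eqref{eq:quenched_relative_concentration_main} follows by substituting \(\eta=\delta X_e(w)\ge\delta\varepsilon_2\) and \(\gamma(P^w)\ge\varepsilon_1\), using \(\eta^2\ge\delta^2\varepsilon_2^2\) in both exponents.
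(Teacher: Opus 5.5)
Your spectral-gap argument via Cheeger, the union-bound tail estimate, the decomposition \(N_e(T)-TX_e(w)=A_T+M_T\) with Azuma--Hoeffding for the martingale part, and the final substitution \(\eta=\delta X_e(w)\) all match the paper's proof.

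\textbf{The gap.} It is exactly the step you flag. You need
\[
\Psrw^{\pi^w,w}\bigl(|A_T|\ge \eta T/2\bigr)\le 2e^{-\gamma(P^w)\eta^2T/4}+2e^{-\eta^2T/8},
\qquad A_T:=\sum_{t=0}^{T-1}\bigl(f(X_t)-X_e(w)\bigr),
\]
and neither L\'ezaud's Chernoff bound nor Paulin's Bernstein inequality gives this. Both are Bernstein-type bounds. Their exponent is \(\gamma(P^w)(\eta T/2)^2\) divided by a variance-plus-deviation term, with further losses: a prefactor exceeding \(1\) in L\'ezaud's bound, and \(T+1/\gamma(P^w)\) in place of \(T\) in Paulin's.

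Consider an environment with a bottleneck. For example, take the path on vertices \(a,b,c,d\) with \(e=\{a,b\}\), \(w_{ab}=w_{cd}=1\) and \(w_{bc}\) tiny. Then \(\gamma(P^w)\) is small while \(\mathrm{Var}_{\pi^w}(f)\approx 1/4\). On such environments these bounds give at best \(2\exp(-c\,\gamma(P^w)\eta^2T)\) with some \(c<1/4\), or carry a prefactor larger than \(1\). Restricting to \(\eta\le 1\) does not help, because the loss sits in the coefficient of \(\gamma(P^w)\) itself.

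The \(\gamma\)-free slack \(2e^{-\eta^2T/8}\) cannot compensate either. Fix \(\eta\), choose environments in this family with \(\gamma(P^w)\eta^2T=4\), and let \(T\to\infty\). Your bound then tends to \(2e^{-4c}\), while the stated right-hand side tends to \(2e^{-1}<2e^{-4c}\). Splitting \(f\) by orientation only worsens the constant.

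As it stands, the proposal proves a version of \eqref{eq:quenched_concentration_main} with weaker constants. That version would likely still serve downstream after adjusting the numerical constants in \Cref{thm:inverse_local_time_master}, but it is not the stated inequality.

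\textbf{The missing ingredient.} You need a variance-free Hoeffding bound with the sharp constant for stationary reversible chains, which is what the paper uses. By Le\'on--Perron \cite{LeonPerron2004}, for every \(g:V\to[0,1]\) and \(\epsilon>0\),
\[
\Psrw^{\pi^w,w}\Bigl(\sum_{t=1}^{T} g(X_t)\ge T\bigl(\pi^w(g)+\epsilon\bigr)\Bigr)\le \exp\Bigl(-2\,\tfrac{1-\lambda_0}{1+\lambda_0}\,T\epsilon^2\Bigr)\le \exp\bigl(-\min\{1,\gamma(P^w)\}\,T\epsilon^2\bigr),
\]
where \(\lambda_0:=\max\{\lambda_2(P^w),0\}\).

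Take \(\epsilon=\eta/2\), apply this to \(g=f\) and to \(g=1-f\) (for the lower tail), and use stationarity to shift the time index. This gives
\[
\Psrw^{\pi^w,w}\bigl(|A_T|\ge\eta T/2\bigr)\le 2e^{-\min\{1,\gamma(P^w)\}\eta^2T/4}\le 2e^{-\gamma(P^w)\eta^2T/4}+2e^{-\eta^2T/4},
\]
and \(e^{-\eta^2T/4}\le e^{-\eta^2T/8}\) finishes the bound.

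This also explains the constant \(4\) in the statement. The extra \(2e^{-\eta^2T/8}\) absorbs the saturation of \(\min\{1,\gamma(P^w)\}\), not a weaker coefficient of \(\gamma(P^w)\). Note that \(\gamma(P^w)\) can exceed \(1\): since \(p_{ii}(w)=0\), the eigenvalues sum to zero, so \(\lambda_2(P^w)\) may be negative.
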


\begin{proof}
For the spectral gap bound, let \(Q^w(i,j):=\pi^w(i)P^w_{ij}\). Since \(\pi^w(i)=w_i/\sum_{u\in V}w_u\) and \(P^w_{ij}=w_{ij}/w_i\), one has
\[
Q^w(i,j)=\frac{w_{ij}}{\sum_{u\in V}w_u}=\frac{X_{\{i,j\}}(w)}{2}.
\]
Hence, for every nonempty proper subset \(S\subsetneq V\),
\[
Q^w(S,S^c):=\sum_{i\in S,\ j\in S^c}Q^w(i,j)=\frac12\sum_{e\in \partial S}X_e(w).
\]
Here \(Q^w(S,S^c)\) is the total stationary flow from \(S\) to its complement \(S^c\), where \(Q^w(i,j):=\pi_i^wP_{ij}^w\), and \(\partial S:=\{\{i,j\}\in E:\ i\in S,\ j\in S^c\}\) is the edge boundary of \(S\). Because \(G\) is connected, \(\partial S\neq\varnothing\), so \(Q^w(S,S^c)\ge m(w)/2\). Therefore the conductance of the chain satisfies
\[
\Phi(P^w):=
\min_{\substack{S\subset V\\ 0<\pi^w(S)\le 1/2}}
\frac{Q^w(S,S^c)}{\pi^w(S)}
\ge
\min_{\emptyset \neq S\subsetneq V}
2Q^w(S,S^c)
=
\min_{\emptyset \neq S\subsetneq V}
\sum_{e\in \partial S}X_e(w)
\ge m(w).
\]
Cheeger's inequality for finite reversible Markov chains 
states that
\[
\gamma(P^w)\ge \frac{\Phi(P^w)^2}{2};
\]
see, for example, \cite[Theorem~13.14]{LevinPeresWilmer2017}.
Together with the inequality $\Phi(P^w) \ge m(w)$, we obtain \eqref{eq:spectral_gap_from_min_edge_main}.

To obtain \eqref{eq:spectral_gap_tail_main}, note that
\[
\{\gamma(P^W)<\varepsilon\}
\subseteq
\{m(W)<\sqrt{2\varepsilon}\}
=
\bigcup_{e\in E}\{X_e(W)<\sqrt{2\varepsilon}\}.
\]
Hence, by the union bound and \Cref{thm:small_edge_tail},
\[
\P\bigl(\gamma(P^W)<\varepsilon\bigr)
\le
\sum_{e\in E}\P\bigl(X_e(W)<\sqrt{2\varepsilon}\bigr)
\le
\Bigl(\sum_{e\in E}C_{e,s}\Bigr)(2\varepsilon)^{s/2}.
\]

We now turn to the concentration of \(N_e(T)\) for fixed \(w\). Define
\[
Y_t:=\1\bigl(\{X_t,X_{t+1}\}=e\bigr),
\qquad
f_e(i):=\Psrw^{v_0, w}\bigl(\{X_t,X_{t+1}\}=e\mid X_t=i\bigr)
=\sum_{j:\{i,j\}=e}P^w_{ij}.
\]
Then \(0\le f_e(i)\le 1\), and, under stationarity,
\[
\pi^w(f_e)
:=
\sum_{i\in V}\pi^w(i)f_e(i)
=
\Psrw^{\pi^w,w}\bigl(\{X_t,X_{t+1}\}=e\bigr)
=
X_e(w).
\]
Therefore \(\Esrw^{\pi^w,w}[N_e(T)]=TX_e(w)\).

Write
\[
A_T:=\sum_{t=0}^{T-1}\bigl(f_e(X_t)-X_e(w)\bigr)
\qquad\text{and}\qquad
M_T:=\sum_{t=0}^{T-1}\bigl(Y_t-f_e(X_t)\bigr).
\]
Then \(N_e(T)-TX_e(w)=A_T+M_T\).

To bound \(A_T\), we apply Theorem~1 of L\'eon--Perron \cite{LeonPerron2004}
to the stationary reversible chain \((X_t)\) under \(\Psrw^{\pi^w,w}\) and the function \(f_e\colon V\to[0,1]\). Since
\[
\Esrw^{\pi^w,w}[f_e(X_0)]=X_e(w),
\]
and, by stationarity, \(\sum_{t=0}^{T-1}f_e(X_t)\) has the same law as \(\sum_{t=1}^{T}f_e(X_t)\), Theorem~1 
of \cite{LeonPerron2004},
with \(n=T\), \(\mu=X_e(w)\), and \(\varepsilon=\eta/2\) yields, 
\[
\Psrw^{\pi^w,w}\!\left(A_T\ge \frac{\eta T}{2}\right)
=
\Psrw^{\pi^w,w}\!\left(\sum_{t=0}^{T-1}f_e(X_t)\ge T\Bigl(X_e(w)+\frac{\eta}{2}\Bigr)\right)
\le
\exp\!\left(
-\min\{1, \gamma(P^w)\}\,T\Bigl(\frac{\eta}{2}\Bigr)^2
\right).
\]

To control the lower tail, set \(g_e:=1-f_e\in[0,1]\). Then \(\E_{\pi^w,w}[g_e(X_0)]=1-X_e(w)\), and
\[
\left\{A_T\le -\frac{\eta T}{2}\right\}
=
\left\{\sum_{t=0}^{T-1}f_e(X_t)\le T\Bigl(X_e(w)-\frac{\eta}{2}\Bigr)\right\}
=
\left\{\sum_{t=0}^{T-1}g_e(X_t)\ge T\Bigl(1-X_e(w)+\frac{\eta}{2}\Bigr)\right\}.
\]
Therefore, we can apply the previous upper tail inequality to \(g_e\), 
getting
\[
\Psrw^{\pi^w,w}\!\left(A_T\le -\frac{\eta T}{2}\right)
\le
\exp\!\left(
-\min\{1,\gamma(P^w)\}\,T\Bigl(\frac{\eta}{2}\Bigr)^2
\right).
\]
Therefore, by the union bound,
\[
\Psrw^{\pi^w,w}\!\left(|A_T|\ge \frac{\eta T}{2}\right)
\le
2\exp\!\left(
-\frac14 \eta^2 T
\right)
+
2\exp\!\left(
-\frac14 \eta^2 \gamma(P^w)
\right).
\]

For \(M_T\), note that it is a martingale with respect to \(\mathcal F_t:=\sigma(X_0,\dots,X_t)\), because \(\E_w[Y_t\mid \mathcal F_t]=f_e(X_t)\). Its increments satisfy \(|M_{t+1}-M_t|=|Y_t-f_e(X_t)|\le 1\), so 
the
Azuma--Hoeffding 
inequality
\cite{Azuma1967,Hoeffding1963}
gives
\[
\Psrw^{\pi^w,w}\!\left(|M_T|\ge \frac{\eta T}{2}\right)
\le
2\exp\!\left(-\frac{1}{8}\eta^2T\right).
\]
Since
\[
\{|N_e(T)-TX_e(w)|\ge \eta T\}
\subseteq
\{|A_T|\ge \eta T/2\}\cup \{|M_T|\ge \eta T/2\},
\]
we obtain \eqref{eq:quenched_concentration_main}.

Finally, substituting \(\eta=\delta X_e(w)\) into \eqref{eq:quenched_concentration_main} and using \(\gamma(P^w)\ge \varepsilon_1, X_e(w)\ge \varepsilon_2\) yields \eqref{eq:quenched_relative_concentration_main}.
\end{proof}

To pass from 
a
stationary start to 
a start from 
the deterministic initial state \(v_0\), we use the following standard comparison estimate. The following lemma is inspired by~\cite[Proposition~3.15]{Paulin2015}.

\begin{lemma}[Transfer from 
a
stationary start to 
a start from a 
deterministic root 
]
\label{lem:nonstationary_transfer_v0}
Fix an environment \(w\), and let \(\pi^w\) be the a stationary distribution of the quenched chain \(P^w\). Then, for every event \(A\) measurable with respect to \((X_0,\dots,X_T)\),
\begin{equation}
\label{eq:root_transfer_main}
\Psrw^{v_0,w}(A)\le (\pi^w(v_0))^{-1/2}\,\Psrw^{\pi^w,w}(A)^{1/2}.
\end{equation}
In particular, on the event \(\{\pi^w(v_0)\ge \varepsilon_0\}\), one has
\begin{equation}
\label{eq:root_transfer_threshold_main}
\Psrw^{v_0,w}(A)\le \varepsilon_0^{-1/2}\,\Psrw^{\pi^w,w}(A)^{1/2}.
\end{equation}
\end{lemma}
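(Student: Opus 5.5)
The plan is a change of measure via the Radon--Nikodym derivative of the deterministic start with respect to the stationary start, followed by Cauchy--Schwarz. This is the same mechanism as in \cite[Proposition~3.15]{Paulin2015}.

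First, set up the density. Both $\Psrw^{v_0,w}$ and $\Psrw^{\pi^w,w}$ use the same transition kernel $P^w$ and differ only in the initial law. Hence on $\sigma(X_0,\dots,X_T)$ we have
\[
\frac{d\Psrw^{v_0,w}}{d\Psrw^{\pi^w,w}}(x_0^T)=\frac{\1(x_0=v_0)}{\pi^w(v_0)}=:D.
\]
This is well defined because $\pi^w(v_0)=w_{v_0}/\sum_u w_u>0$ for every environment with positive conductances. Consequently, for any event $A$ measurable with respect to $X_0^T$,
\[
\Psrw^{v_0,w}(A)=\Esrw^{\pi^w,w}[D\,\1_A].
\]

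Second, apply Cauchy--Schwarz under the stationary law:
\[
\Esrw^{\pi^w,w}[D\1_A]\le \bigl(\Esrw^{\pi^w,w}[D^2]\bigr)^{1/2}\,\Psrw^{\pi^w,w}(A)^{1/2}.
\]
The second moment is computed directly:
\[
\Esrw^{\pi^w,w}[D^2]=\pi^w(v_0)\cdot \pi^w(v_0)^{-2}=\pi^w(v_0)^{-1}.
\]
This equals $1+\chi^2(\delta_{v_0}\|\pi^w)$, which matches Paulin's formulation. Taking square roots gives \eqref{eq:root_transfer_main}.

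Third, \eqref{eq:root_transfer_threshold_main} is immediate: on $\{\pi^w(v_0)\ge\varepsilon_0\}$ we have $\pi^w(v_0)^{-1/2}\le\varepsilon_0^{-1/2}$.

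There is no real obstacle here. The only point needing care is that the likelihood ratio depends only on $X_0$, because the transition factors cancel. This is where the common kernel $P^w$ under the two initial laws is used. Stationarity itself plays no role beyond $\pi^w$ being a probability measure with $\pi^w(v_0)>0$. The resulting constant $\pi^w(v_0)^{-1/2}$ is then controlled in later steps via the root-mass tail bound of Lemma~\ref{lem:root_tail_laplace}.
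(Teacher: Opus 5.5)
Your proof is correct and matches the paper's argument essentially step for step. The paper also writes the path-space likelihood ratio $\1(x_0=v_0)/\pi^w(v_0)$, applies Cauchy--Schwarz under the stationary start, computes the second moment as $\pi^w(v_0)^{-1}$, and reads off the thresholded bound on $\{\pi^w(v_0)\ge\varepsilon_0\}$.
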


\begin{proof}
We argue directly on path space. Let \(\mathcal X_T:=V^{T+1}\), and for \(x_0,\dots,x_T\in V\), note
\[
\Psrw^{v_0,w}(X_0^T=x_0^T)
=
\1(x_0=v_0)\prod_{t=0}^{T-1}p^w(x_t,x_{t+1})
\]
for the quenched path law started from \(v_0\), and
\[
\Psrw^{\pi^w,w}(X_0^T=x_0^T)
:=
\pi^w(x_0)\prod_{t=0}^{T-1}p^w(x_t,x_{t+1})
\]
for the quenched path law started from stationarity \(\pi^w\). Since \(\pi^w(x)>0\) for every \(x\in V\), the first measure is absolutely continuous with respect to the second, and
\[
\frac{\Psrw^{v_0,w}(X_0^T=x_0^T)}{\Psrw^{\pi^w,w}(X_0^T=x_0^T)}
=
\frac{\1(x_0=v_0)}{\pi^w(v_0)}.
\]

Therefore, for every event \(A\) measurable with respect to \((X_0,\dots,X_T)\),
\[
\Psrw^{v_0,w}(A)
=
\Esrw^{\pi^w,w}\!\left[\frac{\1(X_0=v_0)}{\pi^w(v_0)}\,\1_A\right].
\]
Applying 
the 
Cauchy--Schwarz
inequality, we have
\[
\Psrw^{v_0,w}(A)
\le
\left(
\Esrw^{\pi^w,w}\!\left[\left(\frac{\1(X_0=v_0)}{\pi^w(v_0)}\right)^2\right]
\right)^{1/2}
\Bigl(\Esrw^{\pi^w,w}[\1_A]\Bigr)^{1/2}.
\]
Now under \(\P_{\pi^w,w}\), one has \(X_0\sim \pi^w\), so
\[
\Esrw^{\pi^w,w}\!\left[\left(\frac{\1(X_0=v_0)}{\pi^w(v_0)}\right)^2\right]
=
\sum_{x\in V}\pi^w(x)\left(\frac{\1(x=v_0)}{\pi^w(v_0)}\right)^2
=
\frac{1}{\pi^w(v_0)},
\]
while \(\Esrw^{\pi^w,w}[\1_A]=\Esrw^{\pi^w,w}(A)\). Hence
\[
\Psrw^{v_0,w}(A)\le \pi^w(v_0)^{-1/2}\,\Psrw^{\pi^w,w}(A)^{1/2},
\]
which is \eqref{eq:root_transfer_main}. The bound \eqref{eq:root_transfer_threshold_main} follows immediately on the event \(\{\pi^w(v_0)\ge \varepsilon_0\}\).
\end{proof}


\paragraph{Step 3: thresholded and refined inverse-local-time bounds.}

We now combine the environment tail bounds from Step 1 with the quenched concentration estimates from Step 2. The first estimate is obtained by imposing an explicit lower threshold on \(X_e\); the second treats \(X_e\) directly through its negative moments and Laplace transform.

\begin{theorem}[Thresholded and refined inverse-local-time bounds]
\label{thm:inverse_local_time_master}
Assume that \(G=(V,E)\) is finite and connected, that the ERRW starts from \(v_0\), and that the initial edge weights satisfy \(0<\underline a\le a_g\le \overline a\) for all \(g\in E\). Fix an edge \(e\in E\) and \(\gamma>0\). Let \(W\sim \mu_{v_0,A}\), let \(P^W\) be the quenched chain, let \(\pi^W\) be its stationary distribution, and write \(X_e:=W_e/\sum_{g\in E}W_g\). Also let
\[
N_e(T):=\sum_{t=0}^{T-1}\1\bigl(\{X_t,X_{t+1}\}=e\bigr)
\]
be the traversal count of \(e\) up to time \(T\).

\begin{enumerate}
    \item[(i)] Let \(0<s<\underline a/2\), and define
    \[
    C_{\mathrm{gap},s}:=\sum_{h\in E}C_{h,s},
    \qquad
    C_{v_0,s}:=2^s\min_{g\ni v_0}C_{g,s},
    \]
    where the constants \(C_{h,s}\) are those from \Cref{thm:small_edge_tail}. Then, for every \(0<\varepsilon_0<1/2\), \(0<\varepsilon_1\le 1/2\), \(0<\varepsilon_2<1\), and every \(T\ge 1\),
    \begin{align}
    \Eerrw^{v_0, A}\!\left[\frac{1}{N_e(T)+\gamma}\right]
    &\le
    \frac{2}{T\varepsilon_2}
    +
    \frac{1}{\gamma}
    \Bigl[
    C_{e,s}\,\varepsilon_2^s
    +
    C_{\mathrm{gap},s}\,(2\varepsilon_1)^{s/2}
    +
    C_{v_0,s}\,\varepsilon_0^s
    \Bigr]
    \notag\\
    &\qquad
    +
    \frac{2}{\gamma}\,\varepsilon_0^{-1/2}
    \left(
    e^{-\varepsilon_1\varepsilon_2^2T/32}
    +
    e^{-\varepsilon_2^2T/64}
    \right).
    \label{eq:inverse_local_time_thresholded}
    \end{align}

    \item[(ii)] If in addition \(1<s<\underline a/2\), then for every \(0<\varepsilon_0<1/2\), \(0<\varepsilon_1\le 1/2\), and every \(T\ge 1\),
    \begin{align}
    \Eerrw^{v_0,A}\!\left[\frac{1}{N_e(T)+\gamma}\right]
    &\le
    2\,C_{e,s}\,T^{-1}
    +
    \frac{2^{s/2}}{\gamma}\,C_{\mathrm{gap},s}\,\varepsilon_1^{s/2}
    +
    \frac{1}{\gamma}\,C_{v_0,s}\,\varepsilon_0^s
    \notag\\
    &\qquad
    +
    \frac{2\cdot 32^{s/2}\,C_{s/2}\,C_{e,s}}{\gamma}\,
    \varepsilon_0^{-1/2}(\varepsilon_1T)^{-s/2}
    +
    \frac{2\cdot 64^{s/2}\,C_{s/2}\,C_{e,s}}{\gamma}\,
    \varepsilon_0^{-1/2}T^{-s/2}.
    \label{eq:inverse_local_time_refined}
    \end{align}
\end{enumerate}
\end{theorem}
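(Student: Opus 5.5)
Both parts rest on the random-environment representation of \Cref{thm:errw_rwre}: we write $\Eerrw^{v_0,A}[(N_e(T)+\gamma)^{-1}]=\E\bigl[\Esrw^{v_0,W}[(N_e(T)+\gamma)^{-1}]\bigr]$ and decompose the inner expectation on a good event for both the environment and the path.

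For (i) we define the environment event
\[
\mathcal G:=\{X_e(W)\ge\varepsilon_2,\ \gamma(P^W)\ge\varepsilon_1,\ \pi^W(v_0)\ge\varepsilon_0\}.
\]
On $\mathcal G^c$ we use the trivial bound $(N_e+\gamma)^{-1}\le\gamma^{-1}$. The three pieces of $\mathcal G^c$ are controlled as follows:
\begin{itemize}
\item $\{X_e<\varepsilon_2\}$ by \eqref{eq:small_edge_tail_bound};
\item $\{\gamma(P^W)<\varepsilon_1\}$ by \eqref{eq:spectral_gap_tail_main};
\item $\{\pi^W(v_0)<\varepsilon_0\}$ by \eqref{eq:root_tail_bound}.
\end{itemize}
Together these give the bracketed term.

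On $\mathcal G$ we split according to the path event $B:=\{|N_e(T)-TX_e|\ge \tfrac12 TX_e\}$. On $B^c$ we have $N_e(T)\ge TX_e/2\ge T\varepsilon_2/2$, which yields the term $2/(T\varepsilon_2)$. On $B$ we again bound by $\gamma^{-1}$ and estimate $\Psrw^{v_0,W}(B)$ by \Cref{lem:nonstationary_transfer_v0}, which gives the bound $\varepsilon_0^{-1/2}\Psrw^{\pi^W,W}(B)^{1/2}$. We then apply \eqref{eq:quenched_relative_concentration_main} with $\delta=1/2$, take the square root, and use $\sqrt{x+y}\le\sqrt x+\sqrt y$. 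This produces exponents $\varepsilon_1\varepsilon_2^2T/32$ and $\varepsilon_2^2T/64$. The prefactor constants ($2$ versus $\sqrt2$, $\sqrt4$) need some care: we absorb $\sqrt2\le2$ and $2$ into the stated factor $2$, so that the resulting term is $(2/\gamma)\varepsilon_0^{-1/2}(\cdots)$.

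For (ii) we do not threshold on $X_e$. Instead we keep $X_e$ random and use only the good event $\mathcal G':=\{\gamma(P^W)\ge\varepsilon_1,\pi^W(v_0)\ge\varepsilon_0\}$; the complement contributes the second and third terms exactly as in (i). On $\mathcal G'$, on the path event $B^c$ we get
\[
(N_e+\gamma)^{-1}\le 2/(TX_e),
\]
and integrating gives $2T^{-1}\E[X_e^{-1}]\le 2T^{-1}C_{e,s}$. This step uses $s>1$: by \eqref{eq:small_edge_moment_bound} we have $\E[X_e^{-1}]\le\max(1,\E[X_e^{-s}])\le C_{e,s}$, noting $C_{e,s}\ge1$. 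On $B$ we use the same transfer and concentration step, but now with $\delta=1/2$ and $X_e$ left random. This leaves terms of the form $\E[e^{-\varepsilon_1X_e^2T/32}]$ and $\E[e^{-X_e^2T/64}]$, which we bound by \eqref{eq:laplace_Xe2_bound} with $u=\varepsilon_1T/32$ and $u=T/64$, giving $C_{s/2}u^{-s/2}C_{e,s}$ and hence the last two terms.

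The main obstacle is the bookkeeping in this last step. The square root from Cauchy--Schwarz sits inside the environment average, so we need $\E[\sqrt{e^{-c X_e^2T}}]$ rather than $\E[e^{-cX_e^2T}]$. The square root halves the exponent, which is why the constants $32$ and $64$ appear, i.e.\ double the $16$ and $32$ one would otherwise obtain. Care is also needed so that the indicator of $\mathcal G'$ allows us to replace $\gamma(P^W)$ by $\varepsilon_1$ before applying Laplace control, and so that $C_{e,s}$ is used as a bound for $\E[X_e^{-s}]$.
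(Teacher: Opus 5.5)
Your proposal is correct and follows the paper's proof essentially step by step. It uses the same environment-level bad event, controlled by the three tail bounds. It uses the same path event $\{|N_e(T)-TX_e|\ge \frac12 TX_e\}$, handled by the stationary-to-root transfer lemma together with the quenched concentration bound; there the square root is what produces the constants $32$ and $64$. For (ii) it uses the same combination of $\E[X_e^{-s}]\le C_{e,s}$ and the Laplace estimate \eqref{eq:laplace_Xe2_bound}. The only cosmetic difference is how $\E[X_e^{-1}]$ is bounded: the paper uses the pointwise inequality $X_e^{-1}\le X_e^{-s}$ (valid since $0<X_e\le 1$ and $s>1$), so your detour through $\max(1,\cdot)$ and $C_{e,s}\ge 1$ is unnecessary but harmless.
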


\begin{proof}
We prove (i) and (ii) separately.

\smallskip
\noindent\emph{Proof of (i).}
Let
\[
B:=
\{X_e<\varepsilon_2\}
\cup
\{\gamma(P^W)<\varepsilon_1\}
\cup
\{\pi^W(v_0)<\varepsilon_0\},
\]
where we recall that $\gamma(P^W)$ is defined in \eqref{eq:gamma_for_gap}.
Splitting according to \(B\), we write
\[
 \Eerrw^{v_0, A}\!\left[\frac{1}{N_e(T)+\gamma}\right]
=
 \Eerrw^{v_0, A}\!\left[\frac{1}{N_e(T)+\gamma}\,\1_B\right]
+
 \Eerrw^{v_0, A}\!\left[\frac{1}{N_e(T)+\gamma}\,\1_{B^c}\right].
\]

On \(B\), the trivial bound \((N_e(T)+\gamma)^{-1}\le \gamma^{-1}\) gives
\[
 \Eerrw^{v_0, A}\!\left[\frac{1}{N_e(T)+\gamma}\,\1_B\right]
\le
\frac{1}{\gamma}\,\P(B).
\]
By the union bound together with \Cref{thm:small_edge_tail}, \eqref{eq:spectral_gap_tail_main}, and \eqref{eq:root_tail_bound}, we have
\[
\P(B)
\le
C_{e,s}\,\varepsilon_2^s
+
C_{\mathrm{gap},s}\,(2\varepsilon_1)^{s/2}
+
C_{v_0,s}\,\varepsilon_0^s.
\]
Hence
\[
 \Eerrw^{v_0, A}\!\left[\frac{1}{N_e(T)+\gamma}\,\1_B\right]
\le
\frac{1}{\gamma}
\Bigl[
C_{e,s}\,\varepsilon_2^s
+
C_{\mathrm{gap},s}\,(2\varepsilon_1)^{s/2}
+
C_{v_0,s}\,\varepsilon_0^s
\Bigr].
\]

We next estimate the contribution on \(B^c\). Fix \(w\in B^c\). Then \(X_e(w)\ge \varepsilon_2\), \(\gamma(P^w)\ge \varepsilon_1\), and \(\pi^w(v_0)\ge \varepsilon_0\). Let
\[
A_w:=\left\{\bigl|N_e(T)-TX_e(w)\bigr|\ge \frac12\,T X_e(w)\right\}.
\]
Applying \eqref{eq:quenched_relative_concentration_main} with \(\delta=1/2\), we obtain
\[
\Psrw^{\pi^w,w}(A_w)
\le
2\exp\!\left(-\varepsilon_1\varepsilon_2^2T/16\right)
+
4\exp\!\left(-\varepsilon_2^2T/32\right).
\]
By Lemma \ref{lem:nonstationary_transfer_v0},
\[
\Psrw^{v_0,w}(A_w)
\le
\pi^w(v_0)^{-1/2}\,\Psrw^{\pi^w,w}(A_w)^{1/2}
\le
2\,\varepsilon_0^{-1/2}
\left(
e^{-\varepsilon_1\varepsilon_2^2T/32}
+
e^{-\varepsilon_2^2T/64}
\right).
\]

On \(A_w^c\), one has \(N_e(T)\ge \frac12\,TX_e(w)\ge \frac12\,T\varepsilon_2\), so \((N_e(T)+\gamma)^{-1}\le 2/(T\varepsilon_2)\). On \(A_w\), we again use \((N_e(T)+\gamma)^{-1}\le \gamma^{-1}\). Thus
\[
 \Esrw^{v_0, w}\!\left[\frac{1}{N_e(T)+\gamma}\right]
\le
\frac{2}{T\varepsilon_2}
+
\frac{1}{\gamma}\,\Psrw^{v_0,w}(A_w).
\]
Averaging this bound over all \(w\in B^c\) gives
\[
 \Eerrw^{v_0, A}\!\left[\frac{1}{N_e(T)+\gamma}\,\1_{B^c}\right]
\le
\frac{2}{T\varepsilon_2}
+
\frac{2}{\gamma}\,\varepsilon_0^{-1/2}
\left(
e^{-\varepsilon_1\varepsilon_2^2T/32}
+
e^{-\varepsilon_2^2T/64}
\right).
\]
Combining the estimates on \(B\) and \(B^c\) yields \eqref{eq:inverse_local_time_thresholded}.

\smallskip
\noindent\emph{Proof of (ii).}
Let
\[
B':=\{\gamma(P^W)<\varepsilon_1\}\cup \{\pi^W(v_0)<\varepsilon_0\}.
\]
Again we split
\[
 \Eerrw^{v_0, A}\!\left[\frac{1}{N_e(T)+\gamma}\right]
=
 \Eerrw^{v_0, A}\!\left[\frac{1}{N_e(T)+\gamma}\,\1_{B'}\right]
+
 \Eerrw^{v_0, A}\!\left[\frac{1}{N_e(T)+\gamma}\,\1_{(B')^c}\right].
\]

On \(B'\), we use \((N_e(T)+\gamma)^{-1}\le \gamma^{-1}\), so
\[
 \Eerrw^{v_0, A}\!\left[\frac{1}{N_e(T)+\gamma}\,\1_{B'}\right]
\le
\frac{1}{\gamma}\,\Perrw^{v_0, A}(B').
\]
By \eqref{eq:spectral_gap_tail_main} and \eqref{eq:root_tail_bound},
\[
\Perrw^{v_0, A}(B')
\le
C_{\mathrm{gap},s}\,(2\varepsilon_1)^{s/2}
+
C_{v_0,s}\,\varepsilon_0^s
=
2^{s/2}C_{\mathrm{gap},s}\,\varepsilon_1^{s/2}
+
C_{v_0,s}\,\varepsilon_0^s.
\]
Hence
\[
 \Eerrw^{v_0, A}\!\left[\frac{1}{N_e(T)+\gamma}\,\1_{B'}\right]
\le
\frac{2^{s/2}}{\gamma}C_{\mathrm{gap},s}\,\varepsilon_1^{s/2}
+
\frac{1}{\gamma}C_{v_0,s}\,\varepsilon_0^s.
\]

Fix now \(w\in (B')^c\). Then \(\gamma(P^w)\ge \varepsilon_1\) and \(\pi^w(v_0)\ge \varepsilon_0\). Define
\[
A_w:=\left\{\bigl|N_e(T)-TX_e(w)\bigr|\ge \frac12\,T X_e(w)\right\}.
\]
Applying \eqref{eq:quenched_concentration_main} with \(\eta=X_e(w)/2\), and using \(\gamma(P^w)\ge \varepsilon_1\), gives
\[
\Psrw^{\pi^w,w}(A_w)
\le
2\exp\!\left(-\varepsilon_1X_e(w)^2T/16\right)
+
4\exp\!\left(-X_e(w)^2T/32\right).
\]
By Lemma \ref{lem:nonstationary_transfer_v0},
\[
\Psrw^{v_0,w}(A_w)
\le
2\,\varepsilon_0^{-1/2}
\left(
e^{-\varepsilon_1X_e(w)^2T/32}
+
e^{-X_e(w)^2T/64}
\right).
\]

On \(A_w^c\), one has \(N_e(T)\ge \frac12\,TX_e(w)\), so \((N_e(T)+\gamma)^{-1}\le 2/(TX_e(w))\). Hence
\[
 \Esrw^{v_0, w}\!\left[\frac{1}{N_e(T)+\gamma}\right]
\le
\frac{2}{TX_e(w)}
+
\frac{1}{\gamma}\,\Psrw^{v_0,w}(A_w).
\]
Averaging over \(w\in (B')^c\), we obtain
\begin{align*}
 \Eerrw^{v_0, A}\!\left[\frac{1}{N_e(T)+\gamma}\,\1_{(B')^c}\right]
&\le
\frac{2}{T}\, \Eerrw^{v_0, A}[X_e^{-1}]
\\
&\qquad
+
\frac{2}{\gamma}\,\varepsilon_0^{-1/2}
\left(
 \Eerrw^{v_0, A}[e^{-\varepsilon_1X_e^2T/32}]
+
 \Eerrw^{v_0, A}[e^{-X_e^2T/64}]
\right).
\end{align*}

Since \(1<s<\underline a/2\) and \(0<X_e\le 1\), one has \(X_e^{-1}\le X_e^{-s}\), so
\[
\Eerrw^{v_0,A}[X_e^{-1}]
\le
\Eerrw^{v_0,A}[X_e^{-s}]
\le
C_{e,s}
\]
by \Cref{thm:small_edge_tail}. Also, by \eqref{eq:laplace_Xe2_bound},
\[
\Eerrw^{v_0,A}[e^{-\varepsilon_1X_e^2T/32}]
\le
C_{s/2}\,(\varepsilon_1T/32)^{-s/2}\,\Eerrw^{v_0,A}[X_e^{-s}]
\le
32^{s/2}C_{s/2}C_{e,s}\,(\varepsilon_1T)^{-s/2},
\]
and similarly
\[
\Eerrw^{v_0,A}[e^{-X_e^2T/64}]
\le
C_{s/2}\,(T/64)^{-s/2}\,\Eerrw^{v_0,A}[X_e^{-s}]
\le
64^{s/2}C_{s/2}C_{e,s}\,T^{-s/2}.
\]
Substituting these estimates gives
\begin{align*}
 \Eerrw^{v_0, A}\!\left[\frac{1}{N_e(T)+\gamma}\,\1_{(B')^c}\right]
&\le
2\,C_{e,s}\,T^{-1}
\\
&\qquad
+
\frac{2\cdot 32^{s/2}C_{s/2}C_{e,s}}{\gamma}\,
\varepsilon_0^{-1/2}(\varepsilon_1T)^{-s/2}
\\
&\qquad
+
\frac{2\cdot 64^{s/2}C_{s/2}C_{e,s}}{\gamma}\,
\varepsilon_0^{-1/2}T^{-s/2}.
\end{align*}
Combining this with the estimate on \(B'\) yields \eqref{eq:inverse_local_time_refined}.
\end{proof}

\paragraph{Step 4: optimization.}

We now optimize the two estimates from \Cref{thm:inverse_local_time_master} to obtain the final polynomial decay rate.

\begin{theorem}[Inverse-local-time decay]
\label{thm:inverse_local_time_simple}
Assume that \(G=(V,E)\) is finite and connected, that the ERRW starts from \(v_0\), and that the initial edge weights satisfy \(0<\underline a\le a_g\le \overline a\) for all \(g\in E\). Fix an edge \(e\in E\) and \(\gamma>0\).

\begin{enumerate}
    \item[(i)] If \(0<\underline a\le 4+2\sqrt5\), then for every
    \[
    0<r<\min\!\left\{\frac{\underline a}{8},\frac34\right\},
    \]
    there exists a constant \(C_r<\infty\), depending only on \(G,v_0,A,e,\gamma,r\), such that
    \[
    \Eerrw^{v_0,A}\!\left[\frac{1}{N_e(T)+\gamma}\right]\le C_r\,T^{-r},
    \qquad T\ge 1.
    \]

    \item[(ii)] If \(\underline a>4+2\sqrt5\), then there exists a constant \(C<\infty\), depending only on \(G,v_0,A,e,\gamma\), such that
    \[
    \Eerrw^{v_0,A}\!\left[\frac{1}{N_e(T)+\gamma}\right]\le \frac{C}{T},
    \qquad T\ge 1.
    \]
\end{enumerate}
\end{theorem}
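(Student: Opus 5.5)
We obtain both parts from \Cref{thm:inverse_local_time_master} by letting the thresholds $\varepsilon_0,\varepsilon_1,\varepsilon_2$ be negative powers of $T$. We then choose the exponents to balance the polynomial error terms. For small $T$ (say $T\le T_0$, so that the thresholds are still below $1/2$ or exceed the allowed ranges) we simply use $(N_e(T)+\gamma)^{-1}\le\gamma^{-1}$. This costs only a factor in the constant, since $T^{-r}\ge T_0^{-r}$ on that range. In both parts $s$ is taken strictly below $\underline a/2$, so that the constants $C_{e,s}$, $C_{\mathrm{gap},s}$ and $C_{v_0,s}$ of \Cref{thm:small_edge_tail} and \Cref{lem:root_tail_laplace} are finite.

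\emph{Part (i).} We use \eqref{eq:inverse_local_time_thresholded} with $\varepsilon_2=T^{-\alpha}$, $\varepsilon_1=T^{-\beta}$ and $\varepsilon_0=T^{-1}$. The polynomial terms are then of order $T^{\alpha-1}$, $T^{-s\alpha}$, $T^{-s\beta/2}$ and $T^{-s}$.

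The exponential terms have exponents $\varepsilon_1\varepsilon_2^2T=T^{1-\beta-2\alpha}$ and $\varepsilon_2^2T=T^{1-2\alpha}$. Whenever $2\alpha+\beta<1$, these terms decay faster than any power of $T$, even after multiplication by $\varepsilon_0^{-1/2}=T^{1/2}$.

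We therefore maximize $\min\{1-\alpha,\,s\alpha,\,s\beta/2\}$ subject to $2\alpha+\beta<1$. Formally balancing $s\alpha=s\beta/2$ with $\beta=1-2\alpha$ gives $\alpha=1/4$. The value is then $\min\{3/4,\,s/4\}$, and the $T^{-s}$ term is harmless. Taking $\beta=1/2-\eta$ with $\eta>0$ small, and $s$ close to $\underline a/2$, we reach any $r<\min\{\underline a/8,3/4\}$.

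\emph{Part (ii).} We use \eqref{eq:inverse_local_time_refined} with $1<s<\underline a/2$, $\varepsilon_0=T^{-c}$ and $\varepsilon_1=T^{-\beta}$. Every term must be $O(T^{-1})$. This requires four conditions:
\[
s\beta/2\ge 1,\qquad sc\ge 1,\qquad \tfrac c2-\tfrac{s(1-\beta)}2\le -1,\qquad \tfrac c2-\tfrac s2\le -1.
\]
We choose the minimal values $c=1/s$ and $\beta=2/s$. The third condition then becomes $s-2\ge 2+1/s$, i.e.\ $s^2-4s-1\ge 0$, i.e.\ $s\ge 2+\sqrt5$; it implies the fourth.

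Such an $s$ with $s<\underline a/2$ exists exactly when $\underline a>4+2\sqrt5$, which explains the threshold in the statement. Fixing one such $s$ gives the bound $C/T$.

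\emph{Main difficulty.} The only real subtlety is bookkeeping. One must check that the $\varepsilon_0^{-1/2}$ prefactor does not spoil the exponential terms in (i), or the polynomial terms in (ii). One must also check that the threshold constraints $\varepsilon_0<1/2$, $\varepsilon_1\le 1/2$ and $\varepsilon_2<1$ hold once $T$ is large. Both points are handled by the choices above and the small-$T$ absorption into the constant.
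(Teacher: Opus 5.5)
Your proposal is correct and follows the paper's proof: you feed the same threshold choices into \Cref{thm:inverse_local_time_master}, namely $\varepsilon_0=T^{-1}$, $\varepsilon_1=T^{-1/2+\eta}$, $\varepsilon_2=T^{-1/4}$ with $4r<s<\underline a/2$ in (i), and $\varepsilon_0=T^{-1/s}$, $\varepsilon_1=T^{-2/s}$ with $s\ge 2+\sqrt5$ in (ii). You also make explicit a point the paper leaves implicit: for small $T$ these thresholds violate the hypotheses $\varepsilon_0<1/2$, $\varepsilon_1\le 1/2$, $\varepsilon_2<1$ (all three equal $1$ at $T=1$), and you handle that range by absorbing it into the constant via $(N_e(T)+\gamma)^{-1}\le\gamma^{-1}$.
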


\begin{proof}
Set \(E_T:=\Eerrw^{v_0,A}\!\left[(N_e(T)+\gamma)^{-1}\right]\).

\smallskip
\noindent\emph{Proof of (i).}
Assume \(0<\underline a\le 4+2\sqrt5\), and fix any constant \(r\) such that
\[
0<r<\min\!\left\{\frac{\underline a}{8},\frac34\right\}.
\]
Choose any constant \(s\) so that \(4r<s<\underline a/2\). This is possible because \(r<\underline a/8\) implies \(4r<\underline a/2\).

Next choose any constant \(\kappa>0\) so that
\[
\kappa<\frac12,\qquad \kappa<\frac34-r,\qquad \kappa<\frac{2}{s}\Bigl(\frac{s}{4}-r\Bigr).
\]
This is possible because \(r<3/4\) and \(r<s/4\). With this choice, \(r<3/4-\kappa\) and \(r<s/4-s\kappa/2\).

Apply \Cref{thm:inverse_local_time_master}(i) with
\[
\varepsilon_0=T^{-1},\qquad \varepsilon_1=T^{-1/2+\kappa},\qquad \varepsilon_2=T^{-1/4}.
\]
Then \(\frac{2}{T\varepsilon_2}=2T^{-3/4}\), \(\varepsilon_2^s=T^{-s/4}\), \(\varepsilon_1^{s/2}=T^{-s/4+s\kappa/2}\), and \(\varepsilon_0^s=T^{-s}\). Moreover,
\[
\varepsilon_1\varepsilon_2^2T
=
T^{-1/2+\kappa}\cdot T^{-1/2}\cdot T
=
T^\kappa,
\qquad
\varepsilon_2^2T
=
T^{-1/2}\cdot T
=
T^{1/2},
\]
and \(\varepsilon_0^{-1/2}=T^{1/2}\). Hence \eqref{eq:inverse_local_time_thresholded} gives
\[
E_T
\le
C\Bigl(
T^{-3/4}
+
T^{-s/4}
+
T^{-s/4+s\kappa/2}
+
T^{-s}
+
T^{1/2}e^{-cT^\kappa}
+
T^{1/2}e^{-cT^{1/2}}
\Bigr)
\]
for constants \(C,c>0\) depending only on \(G,v_0,A,e,\gamma,r\).

We now compare the exponents with \(r\). By construction, \(3/4>r\), \(s/4>r\), \(s/4-s\kappa/2>r\), and \(s>r\), so each polynomial term on the right-hand side is \(O(T^{-r})\).

It remains to control the exponential terms. We use the elementary inequality that for every \(c>0\) and every \(\beta>0\),
\[
\sup_{u\ge 0} u^\beta e^{-cu}<\infty.
\]
Indeed, the function \(\phi(u):=u^\beta e^{-cu}\) is continuous on \([0,\infty)\), tends to \(0\) as \(u\to\infty\), and for \(u>0\) satisfies \(\phi'(u)=u^{\beta-1}e^{-cu}(\beta-cu)\). Hence its maximum is attained at \(u=\beta/c\), and therefore
\[
u^\beta e^{-cu}\le \left(\frac{\beta}{ce}\right)^\beta,\qquad u\ge 0.
\]

Applying this with \(u=T^\kappa\) and \(\beta=(r+\tfrac12)/\kappa\), we obtain
\[
T^{r+1/2}e^{-cT^\kappa}
=
(T^\kappa)^{(r+1/2)/\kappa}e^{-cT^\kappa}
\le
\left(\frac{(r+1/2)/\kappa}{ce}\right)^{(r+1/2)/\kappa}.
\]
Thus \(T^{1/2}e^{-cT^\kappa}\le C_{r,\kappa,c}\,T^{-r}\) for all \(T\ge 1\).

Similarly, applying the same inequality with \(u=T^{1/2}\) and \(\beta=2r+1\), we get
\[
T^{r+1/2}e^{-cT^{1/2}}
=
(T^{1/2})^{2r+1}e^{-cT^{1/2}}
\le
\left(\frac{2r+1}{ce}\right)^{2r+1},
\]
and hence \(T^{1/2}e^{-cT^{1/2}}\le C'_{r,c}\,T^{-r}\) for all \(T\ge 1\).

Therefore every term on the right-hand side is \(O(T^{-r})\), and hence
\[
E_T\le C_rT^{-r},
\qquad T\ge 1.
\]

\smallskip
\noindent\emph{Proof of (ii).}
Assume \(\underline a>4+2\sqrt5\). Choose \(s\) so that
\[
2+\sqrt5<s<\frac{\underline a}{2}.
\]
This is possible because \(\underline a>4+2\sqrt5\) implies \(\underline a/2>2+\sqrt5\).

Apply \Cref{thm:inverse_local_time_master}(ii) with
\[
\varepsilon_0=T^{-1/s},
\qquad
\varepsilon_1=T^{-2/s}.
\]
Then \(\varepsilon_1^{s/2}=(T^{-2/s})^{s/2}=T^{-1}\) and \(\varepsilon_0^s=(T^{-1/s})^s=T^{-1}\). Also,
\[
\varepsilon_0^{-1/2}(\varepsilon_1T)^{-s/2}
=
T^{1/(2s)}\bigl(T^{-2/s}T\bigr)^{-s/2}
=
T^{1/(2s)}\,T^{-(1-2/s)s/2}
=
T^{1/(2s)}\,T^{-s/2+1}
=
T^{-(\,s/2-1-1/(2s)\,)},
\]
and similarly
\[
\varepsilon_0^{-1/2}T^{-s/2}
=
T^{1/(2s)}T^{-s/2}
=
T^{-(\,s/2-1/(2s)\,)}.
\]

We now check that both exponents are strictly larger than \(1\)
in absolute value. 
First,
\[
\frac{s}{2}-1-\frac{1}{2s}>1
\iff
s-\frac1s>4
\iff
s^2-4s-1>0.
\]
The positive root of \(s^2-4s-1=0\) is \(2+\sqrt5\), so this holds because \(s>2+\sqrt5\). Second,
\[
\frac{s}{2}-\frac{1}{2s}>1
\iff
s-\frac1s>2
\iff
s^2-2s-1>0,
\]
whose positive root is \(1+\sqrt2\); this also holds because \(s>2+\sqrt5\).

Therefore both terms \(\varepsilon_0^{-1/2}(\varepsilon_1T)^{-s/2}\) and \(\varepsilon_0^{-1/2}T^{-s/2}\) are \(O(T^{-1-\delta})\) for some \(\delta>0\), and in particular are \(O(T^{-1})\). Since the first three terms in \eqref{eq:inverse_local_time_refined} are already \(O(T^{-1})\), it follows that every term on the right-hand side of \eqref{eq:inverse_local_time_refined} is \(O(T^{-1})\). Hence
\[
E_T\le \frac{C}{T},
\qquad T\ge 1,
\]
for a finite constant \(C\) depending only on \(G,v_0,A,e,\gamma\).
\end{proof}
\begin{corollary}[Decay of the trajectory-level KL gap]
\label{cor:gap_decay_general}
Let \(A^{(0)},A^{(1)}\in(0,\infty)^E\), and let
\[
\mu_s:=\mu_{v_0,A^{(s)}},
\qquad
P_s^{(T)}:=\mathcal L_{\Perrw^{v_0,A^{(s)}}}(X_0^T),
\qquad s\in\{0,1\}.
\]
Define
\[
\Gap_T:=\KL(\mu_0\|\mu_1)-\KL(P_0^{(T)}\|P_1^{(T)}),
\qquad
\delta_e:=a_e^{(1)}-a_e^{(0)},
\qquad
\underline a:=\min_{e\in E}\min\{a_e^{(0)},a_e^{(1)}\} >0.
\]
Then the following hold.

\begin{enumerate}
    \item[(i)] If \(0<\underline a\le 4+2\sqrt5\), then for every
    \[
    0<r<\min\!\left\{\frac{\underline a}{8},\frac34\right\},
    \]
    there exists a constant \(C_r<\infty\), depending only on \(G,v_0,A^{(0)},A^{(1)},r\), such that
    \[
    \Gap_T\le C_r\,T^{-r},
    \qquad T\ge 1.
    \]

    \item[(ii)] If \(\underline a>4+2\sqrt5\), then there exists a constant \(C<\infty\), depending only on \(G,v_0,A^{(0)},A^{(1)}\), such that
    \[
    \Gap_T\le \frac{C}{T},
    \qquad T\ge 1.
    \]
\end{enumerate}
\end{corollary}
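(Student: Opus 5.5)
The plan is to combine the general upper bound \eqref{eq:gap_upper_simplified_main} of Proposition~\ref{prop:gap_general_bounds} with the inverse-local-time decay of \Cref{thm:inverse_local_time_simple}. By \eqref{eq:gap_upper_simplified_main},
\[
\Gap_T\le \frac12\Bigl(1+\frac1{\underline a}\Bigr)\sum_{e\in E}\delta_e^2\,\E_{P_0^{(T)}}\!\left[\frac{1}{N_e+\underline a}\right],
\]
where \(N_e=N_e(X_0^T)\) is the undirected traversal count of \(e\) over the first \(T\) steps of the ERRW with initial weights \(A^{(0)}\). This count coincides with the \(N_e(T)\) of \Cref{thm:inverse_local_time_master}, since both count \(t\in\{1,\dots,T\}\) with \(\{X_{t-1},X_t\}=e\). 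So it suffices to bound \(\Eerrw^{v_0,A^{(0)}}[(N_e(T)+\underline a)^{-1}]\) for each edge \(e\), and then sum over the finitely many edges with the weights \(\delta_e^2\).

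For each \(e\) I apply \Cref{thm:inverse_local_time_simple} with \(A=A^{(0)}\) and \(\gamma=\underline a\). The lower bound required there is \(\min_g a_g^{(0)}\ge \underline a\); since \(\underline a\) is defined as the minimum over both parameter vectors, this holds. The upper bound \(\overline a:=\max_g a^{(0)}_g\) is finite. There is a subtlety: \Cref{thm:inverse_local_time_simple} is stated with its own lower bound, which could be the larger quantity \(\min_g a^{(0)}_g\). I will invoke it with the constant \(\underline a\) used as the lower bound. This is legitimate because its hypotheses only need \(a_g\ge \underline a\) (through Lemma~\ref{lem:A-lower-bound-explicit} and \Cref{thm:small_edge_tail}), and the exponent ranges in the corollary are written in terms of this \(\underline a\).

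In case (i), \(\underline a\le 4+2\sqrt5\). For any \(0<r<\min\{\underline a/8,3/4\}\), part (i) of \Cref{thm:inverse_local_time_simple} gives \(\E[(N_e(T)+\underline a)^{-1}]\le C_{r,e}T^{-r}\). Summing over \(e\) then yields \(\Gap_T\le C_rT^{-r}\) with
\[
C_r=\tfrac12\Bigl(1+\tfrac1{\underline a}\Bigr)\sum_e\delta_e^2C_{r,e},
\]
which depends only on \(G,v_0,A^{(0)},A^{(1)},r\), because \(\gamma=\underline a\) is itself determined by \(A^{(0)},A^{(1)}\).

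In case (ii), \(\underline a>4+2\sqrt5\). Then \(\min_g a^{(0)}_g\ge\underline a>4+2\sqrt5\), so part (ii) of \Cref{thm:inverse_local_time_simple} gives the \(C/T\) bound for each \(e\), and summing gives \(\Gap_T\le C/T\).

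The only step needing care, and the one I expect to be the (mild) obstacle, is this bookkeeping. One must confirm that the path-count conventions agree, that the expectation in \eqref{eq:gap_upper_simplified_main} is taken under \(P_0^{(T)}\) (the ERRW with parameter \(A^{(0)}\) started at \(v_0\)), and that the constants depend only on the stated data. One should also note the edges with \(\delta_e=0\): they contribute nothing, and the claimed bounds are trivially consistent in that case.
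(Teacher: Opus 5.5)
Your proposal is correct and matches the paper's proof essentially step for step: insert \eqref{eq:gap_upper_simplified_main} into \Cref{thm:inverse_local_time_simple}, applied to the ERRW with weights \(A^{(0)}\), with \(\gamma=\underline a\) and with \(\underline a\) used as the lower bound in that theorem's hypothesis, then sum over the finitely many edges. You observe that \(\underline a=\min_{e}\min\{a_e^{(0)},a_e^{(1)}\}\) is a valid lower bound for \(A^{(0)}\) and that the two traversal-count conventions agree; the paper relies on the same two points.
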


\begin{proof}
By Proposition \ref{prop:gap_general_bounds},
\[
\Gap_T
\le
\frac12\Bigl(1+\frac1{\underline a}\Bigr)
\sum_{e\in E}\delta_e^2\,
\E_{P_0^{(T)}}\!\left[\frac{1}{N_e(T)+\underline a}\right].
\]
Here the expectation is taken under the ERRW law with initial weights \(A^{(0)}\). Since \(E\) is finite and \(A^{(0)}\in(0,\infty)^E\), we have
\[
0<\underline a\le a_e^{(0)}\le \overline a_0:=\max_{e\in E}a_e^{(0)}<\infty,
\qquad e\in E.
\]
Thus \Cref{thm:inverse_local_time_simple} applies to ERRW with initial weight vector \(A^{(0)}\) and \(\gamma=\underline a\).

Assume first that \(0<\underline a\le 4+2\sqrt5\), and fix
\[
0<r<\min\!\left\{\frac{\underline a}{8},\frac34\right\}.
\]
For each edge \(e\in E\), \Cref{thm:inverse_local_time_simple}(i) yields a finite constant \(C_{r,e}'\), depending only on \(G,v_0,A^{(0)},e,\underline a,r\), such that
\[
\E_{P_0^{(T)}}\!\left[\frac{1}{N_e(T)+\underline a}\right]
\le
C_{r,e}'\,T^{-r},
\qquad T\ge 1.
\]
Since \(E\) is finite, we may define
\[
C_r':=\max_{e\in E} C_{r,e}'<\infty.
\]
Then
\[
\E_{P_0^{(T)}}\!\left[\frac{1}{N_e(T)+\underline a}\right]
\le
C_r' T^{-r},
\qquad T\ge 1,
\]
uniformly over \(e\in E\). Substituting this into the preceding bound gives
\[
\Gap_T
\le
\frac12\Bigl(1+\frac1{\underline a}\Bigr)
\Bigl(\sum_{e\in E}\delta_e^2\Bigr)\,C_r' T^{-r},
\]
which proves part (i).

Now assume that \(\underline a>4+2\sqrt5\). For each edge \(e\in E\), \Cref{thm:inverse_local_time_simple}(ii) yields a finite constant \(C_e''\), depending only on \(G,v_0,A^{(0)},e,\underline a\), such that
\[
\E_{P_0^{(T)}}\!\left[\frac{1}{N_e(T)+\underline a}\right]
\le
C_e''\,T^{-1},
\qquad T\ge 1.
\]
Again, since \(E\) is finite, we may define
\[
C'':=\max_{e\in E} C_e''<\infty,
\]
so that
\[
\E_{P_0^{(T)}}\!\left[\frac{1}{N_e(T)+\underline a}\right]
\le
C''\,T^{-1},
\qquad T\ge 1,
\]
uniformly over \(e\in E\). Inserting this into the same bound from Proposition \ref{prop:gap_general_bounds} yields
\[
\Gap_T
\le
\frac12\Bigl(1+\frac1{\underline a}\Bigr)
\Bigl(\sum_{e\in E}\delta_e^2\Bigr)\,C''\,T^{-1},
\]
which proves part (ii).
\end{proof}

\section{Discussion and Open Problems}
\label{sec:discussion}

This paper studies several information-theoretic quantities for ERRW on finite graphs, including the entropy rate, the KL divergence between environment laws, and the KL divergence between finite-trajectory laws. The random-environment representation makes these quantities accessible in a rather explicit form, and in particular reduces the convergence of trajectory-level KL divergence toward environment-level KL divergence to the control of inverse local times.

A first natural open problem is to obtain a sharp characterization of this convergence in the general graph case. In the \(n\)-star, the P\'olya-urn structure yields the precise rates \(T^{-1}\), \((\log T)T^{-1}\), and \(T^{-a}\). By contrast, for general finite graphs, our argument gives polynomial upper bounds, but these are unlikely to be optimal  for $\underline{a}< 4+2\sqrt{5}$. It would be very interesting to identify the correct rate, and more broadly to understand how it depends on the graph geometry and the initial weights.

A second, and perhaps more important, direction is to use these information-theoretic results to derive quantitative upper and lower bounds for testing problems involving ERRWs. The most immediate example is two-point testing, where the trajectory-level KL divergence governs the Stein exponent. More generally, one may ask about identity testing, closeness testing, and related minimax testing problems. From this perspective, the quantities studied here should be viewed not only as intrinsic objects of interest, but also as tools for proving information-theoretic lower bounds and for designing statistically efficient tests.

\bibliographystyle{plain}
\bibliography{refs} 

\appendix
\section{Tightness of the mutual-information upper bound on \(I(W;X_0^T)\)}
\label{sec:mi_tightness_star}

In the special case of the \(n\)-star, the logarithmic upper bound from Lemma \ref{lem:entropy_anneal} is sharp up to the value of the constant. Let \(G\) be the \(n\)-star with center \(v_0\) and leaves \(\ell_1,\dots,\ell_n\), and assume first that the walk starts from the center \(v_0\). Write \(e_i:=\{v_0,\ell_i\}\), let \(a_i:=a_{e_i}\), set \(\alpha_i:=a_i/2\), and let \(\alpha_0:=\sum_{i=1}^n \alpha_i\). For the random environment \(W\), define
\[
P_i:=\frac{W_{e_i}}{\sum_{j=1}^n W_{e_j}},
\qquad i=1,\dots,n.
\]
Then
\[
I(W;X_0^T)=\frac{n-1}{2}\log T+O(1)
\qquad (T\to\infty),
\]
or equivalently,
\[
\frac{1}{T}I(W;X_0^T)
=
\frac{n-1}{2}\frac{\log T}{T}+O(T^{-1}).
\]

\begin{proof}
Let \(M:=\lceil T/2\rceil\). Since the graph is a star and the walk starts from the center, the trajectory is deterministic except when it departs from \(v_0\). Let \(Y_r\in\{1,\dots,n\}\) be the leaf chosen on the \(r\)-th departure from \(v_0\), for \(r=1,\dots,M\). Then \(X_0^T\) and \((Y_1,\dots,Y_M)\) determine each other, so
\[
I(W;X_0^T)=I(W;Y_1,\dots,Y_M).
\]

Moreover, the law of \((Y_1,\dots,Y_M)\) depends on \(W\) only through the vector \(P=(P_1,\dots,P_n)\), since \(P_i\) is the quenched probability of choosing leaf \(\ell_i\) upon departure from the center. Hence \(I(W;Y_1,\dots,Y_M)=I(P;Y_1,\dots,Y_M)\).

Now let \(K_i(m):=\sum_{r=1}^m \1\{Y_r=i\}\). On the \((m+1)\)-st departure from the center, the edge \(e_i\) has weight \(a_i+2K_i(m)\), because each previous excursion using \(e_i\) contributes two traversals of that edge. Therefore
\[
\Pr(Y_{m+1}=i\mid Y_1,\dots,Y_m)
=
\frac{a_i+2K_i(m)}{\sum_{j=1}^n a_j+2m}
=
\frac{\alpha_i+K_i(m)}{\alpha_0+m}.
\]
Thus \((Y_r)_{r\ge 1}\) is exactly a P\'olya urn sequence with de Finetti measure \(\Dir(\alpha_1,\dots,\alpha_n)\). Equivalently, conditional on \(P\sim\Dir(\alpha_1,\dots,\alpha_n)\),
the variables \(Y_1,\dots,Y_M\) are
independent and satisfy
\[
\Pr(Y_r=i\mid P)=P_i,\qquad i=1,\dots,n.
\]
In other words, conditional on \(P\), each \(Y_r\) has the categorical distribution
on \(\{1,\dots,n\}\) with probability vector \(P\). This is the standard
de Finetti representation of P\'olya's urn, or equivalently the usual
Dirichlet--multinomial mixture representation; see, for example,
\cite[Chapter~VII, Section~4]{Feller1968}.

Let \(K_i:=K_i(M)\) and \(K:=(K_1,\dots,K_n)\). Since
\[
\Pr(Y_1=y_1,\dots,Y_M=y_M\mid P)=\prod_{i=1}^n P_i^{K_i},
\]
the sample enters only through the count vector \(K\). Hence \(K\) is sufficient for \(P\), and therefore
\[
I(P;Y_1,\dots,Y_M)=I(P;K).
\]

Let \(h(\cdot)\) denote differential entropy. By Dirichlet--multinomial conjugacy,
\(P\mid K\sim \Dir(\beta_1,\dots,\beta_n)\), where \(\beta_i:=\alpha_i+K_i\) and \(\beta_0:=\sum_i\beta_i=\alpha_0+M\). Thus
\[
I(P;K)=h(P)-\E[h(P\mid K)].
\]
Since \(P\sim \Dir(\alpha_1,\dots,\alpha_n)\), the first term \(h(P)\) is a finite constant independent of \(M\).

For a general Dirichlet law \(\Dir(\beta_1,\dots,\beta_n)\), one has
\[
h(\Dir(\beta))
=
\log B(\beta)+(\beta_0-n)\Psi(\beta_0)-\sum_{i=1}^n (\beta_i-1)\Psi(\beta_i),
\]
where
\[
B(\beta) := B(\beta_1, \ldots, \beta_n) := \frac{\prod_{i=1}^n \Gamma(\beta_i)}{\Gamma(\beta_0)},~~ \beta_0 := \sum_{i=1}^n \beta_i,
\]
denotes the multivariate beta function.
Using Stirling's formula \(\log\Gamma(x)=x\log x - x - \tfrac{1}{2}\log x + O(1)\)
and the digamma asymptotic \(\Psi(x)=\log x+O(x^{-1})\), one computes
\begin{align*}
\log B(\beta) &= \textstyle\sum_i \beta_i\log\beta_i - \beta_0\log\beta_0
                 - \tfrac12\sum_i\log\beta_i + \tfrac12\log\beta_0 + O(1),\\
(\beta_0-n)\Psi(\beta_0) &= \beta_0\log\beta_0 - n\log\beta_0 + O(1),\\
-\textstyle\sum_i(\beta_i-1)\Psi(\beta_i) &=
  -\textstyle\sum_i\beta_i\log\beta_i + \sum_i\log\beta_i + O(1).
\end{align*}
Summing these three lines and simplifying yields, uniformly over all
\(\beta_i\ge\alpha_i>0\),
\[
h(\Dir(\beta))
=
-\frac{n-1}{2}\log \beta_0
+\frac12\sum_{i=1}^n \log\frac{\beta_i}{\beta_0}
+O(1).
\]
Applying this with \(\beta_i=\alpha_i+K_i\) yields
\[
\E[h(P\mid K)]
=
-\frac{n-1}{2}\log(\alpha_0+M)
+\frac12\sum_{i=1}^n \E\!\left[\log\frac{\alpha_i+K_i}{\alpha_0+M}\right]
+O(1).
\]

Set \(Z_{M,i}:=(\alpha_i+K_i)/(\alpha_0+M)\). Since \(P\mid K\sim\Dir(\alpha_1+K_1,\dots,\alpha_n+K_n)\), the posterior mean of \(P_i\) is \(Z_{M,i}=\E[P_i\mid K]\). In particular \(0<Z_{M,i}\le 1\), so \(\log Z_{M,i}\le 0\). On the other hand, by concavity of \(\log\),
\[
\log Z_{M,i}
=
\log \E[P_i\mid K]
\ge
\E[\log P_i\mid K].
\]
Taking expectations gives \(\E[\log Z_{M,i}]\ge \E[\log P_i]\). Since the marginal law of \(P_i\) is \(\Beta(\alpha_i,\alpha_0-\alpha_i)\), we have \(\E[\log P_i]=\Psi(\alpha_i)-\Psi(\alpha_0)>-\infty\). Therefore
\[
\Psi(\alpha_i)-\Psi(\alpha_0)\le \E[\log Z_{M,i}]\le 0,
\]
so \(\E[\log Z_{M,i}]=O(1)\) uniformly in \(M\). It follows that
\[
\E[h(P\mid K)]
=
-\frac{n-1}{2}\log M+O(1).
\]

Finally, since \(h(P)\) is constant,
\[
I(W;X_0^T)
=
I(P;K)
=
h(P)-\E[h(P\mid K)]
=
\frac{n-1}{2}\log M+O(1).
\]
Because \(M=\lceil T/2\rceil\), we have \(\log M=\log T+O(1)\), and hence
\[
I(W;X_0^T)=\frac{n-1}{2}\log T+O(1).
\]

If the walk starts from a leaf rather than from the center, then \(X_1=v_0\) deterministically, so \(I(W;X_0^T)=I(W;X_1^T)\), and the same argument applies after shifting time by one step.
\end{proof}

\section{Proof of Proposition \ref{prop:posterior_conjugacy}}
\label{sec:proof_posterior_conjugacy}

\begin{proof}
Fix a path \(x_0^T\), and write \(N:=N(x_0^T)\) for its undirected edge-count vector and \(x_T:=x_T(x_0^T)\) for its endpoint. Let \(N_v\) denote the number of departures from \(v\) along the path. By the quenched likelihood formula,
\[
\Psrw^{v_0,w}(X_0^T=x_0^T)=\prod_{e\in E} w_e^{N_e}\prod_{v\in V} w_v^{-N_v}.
\]
Multiplying this by the magic-formula density of \(\mu_{v_0,A}\), we obtain
\[
\Psrw^{v_0,w}(X_0^T=x_0^T)\,\mu_{v_0,A}(dw)
\propto
\frac{w_{v_0}^{1/2}\prod_{e\in E} w_e^{a_e-1+N_e}}
{\prod_{v\in V} w_v^{(a_v+1)/2+N_v}}
\sqrt{\tau(w)}\,dw_{-e_0}.
\]

Now set \(a'_e:=a_e+N_e\), so that \(a'_v:=\sum_{e\ni v}a'_e=a_v+\sum_{e\ni v}N_e\). Writing \(d_v:=\sum_{e\ni v}N_e\), we have the standard identity
\[
d_v=2N_v+\1(v=x_T)-\1(v=v_0),
\]
because each departure from \(v\) contributes \(1\) to \(d_v\), each arrival contributes \(1\), and along a path started at \(v_0\) and ending at \(x_T\), arrivals and departures differ by \(\1(v=x_T)-\1(v=v_0)\). Hence
\[
\frac{a_v+1}{2}+N_v
=
\frac{a_v+d_v+1-\1(v=x_T)+\1(v=v_0)}{2}
=
\frac{a'_v+1-\1(v=x_T)+\1(v=v_0)}{2}.
\]
Substituting this into the previous display gives
\[
\Psrw^{v_0,w}(X_0^T=x_0^T)\,\mu_{v_0,A}(dw)
\propto
\frac{w_{v_0}^{1/2}\prod_{e\in E} w_e^{a'_e-1}}
{\prod_{v\in V} w_v^{(a'_v+1-\1(v=x_T)+\1(v=v_0))/2}}
\sqrt{\tau(w)}\,dw_{-e_0}.
\]

The extra vertex factor simplifies because
\[
\frac{w_{v_0}^{1/2}}
{\prod_{v\in V} w_v^{(-\1(v=x_T)+\1(v=v_0))/2}}
=
w_{x_T}^{1/2}.
\]
Therefore
\[
\Psrw^{v_0,w}(X_0^T=x_0^T)\,\mu_{v_0,A}(dw)
\propto
\frac{w_{x_T}^{1/2}\prod_{e\in E} w_e^{a'_e-1}}
{\prod_{v\in V} w_v^{(a'_v+1)/2}}
\sqrt{\tau(w)}\,dw_{-e_0}.
\]
This is exactly the magic-formula density of \(\mu_{x_T,A+N}\), up to normalization. Restoring the normalizing constants yields
\[
\Psrw^{v_0,w}(X_0^T=x_0^T)\,\mu_{v_0,A}(dw)
=
\frac{Z_{x_T,A+N}}{Z_{v_0,A}}\,
\mu_{x_T,A+N}(dw),
\]
which is \eqref{eq:unnormalized_bayes_magic_main}.

Integrating over \(w\) gives
\[
\Perrw^{v_0,A}(X_0^T=x_0^T)=\frac{Z_{x_T,A+N}}{Z_{v_0,A}},
\]
which is \eqref{eq:path_probability_Z_ratio_main}. Dividing the previous identity by this path probability yields
\[
\mu_{v_0,A}(dw\mid X_0^T=x_0^T)=\mu_{x_T,A+N}(dw),
\]
which is \eqref{eq:posterior_path_main}.

Finally, for a path started at \(v_0\), the endpoint is determined by the parity rule
\begin{equation}    \label{eq:parityrule}
\sum_{e\ni v}N_e \equiv \1(v=v_0)+\1(v=x_T)\pmod 2.
\end{equation}
Thus \(x_T\) is a function of \(N\), and the posterior depends on the realized path only through \(N\). Hence
\[
\mu_{v_0,A}(dw\mid N)=\mu_{x_T(N),A+N}(dw),
\]
which is \eqref{eq:posterior_count_main}.
\end{proof}

\section{Proof of \Cref{thm:nstar_inverse_moment}}
\label{sec:proof_nstar_inverse}

We retain the notation from 
\cref{sec:n-star-asyptotics}:
\(G\) is the \(n\)-star with center \(v_0\), all initial edge weights are equal to \(a>0\), \(e\) is a fixed leaf edge, and \(N_e(T)\),
with an abuse of notation,
denotes the number of the first \(T\) excursions that use \(e\).

The first ingredient is the standard beta-binomial representation.

\begin{lemma}[Beta-binomial mixture]
\label{lem:beta_binomial_star}
Let \(b:=(n-1)a\). There exists a random variable \(X\sim \Beta(a,b)\) such that, conditional on \(X=x\), one has \(N_e(T)\mid X=x\sim \Bin(T,x)\). Consequently, for every \(\gamma>0\),
\begin{equation}
\label{eq:star_mixture}
\E\!\left[\frac{1}{N_e(T)+\gamma}\right]
=
\frac{1}{B(a,b)}
\int_0^1 x^{a-1}(1-x)^{b-1}\,g_T(x)\,dx,
\end{equation}
where \(g_T(x):=\E[(\Bin(T,x)+\gamma)^{-1}]\).
\end{lemma}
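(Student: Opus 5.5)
The plan is to obtain the mixture directly from the random-environment representation (\Cref{thm:errw_rwre}) and then read off the law of the mixing variable. The identity \eqref{eq:star_mixture} is then the tower property.

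\emph{Step 1: conditional binomial law.} On the $n$-star started at the center $v_0$, every excursion leaves $v_0$ along some leaf edge $e_i$ and returns deterministically. Conditional on $W=w$, the walk is the Markov chain \eqref{eq:Pw_def}. At each visit to $v_0$ it therefore chooses leaf edge $e$ with probability
\[
p_{v_0\ell}(w)=\frac{w_e}{w_{v_0}}=\frac{w_e}{\sum_{g\in E}w_g}=X_e(w).
\]
These choices are independent across excursions by the Markov property. Hence, conditional on $W$, $N_e(T)\sim\Bin(T,X)$ with $X:=X_e(W)$. Averaging over $W$, i.e. conditioning on $X=x$ and integrating against the law of $X$, gives \eqref{eq:star_mixture} once the law of $X$ is identified. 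Note that $g_T(x)\le 1/\gamma$ is bounded, so no integrability issue arises.

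\emph{Step 2: law of $X$.} I would identify the law of $X$ in one of two ways. The first is to note that the excursion sequence is exchangeable with Pólya-urn predictive probabilities. The ERRW transition rule gives that the probability of choosing $e$ on the $(m+1)$-st excursion is an affine function of the number $K_e(m)$ of earlier uses of $e$, divided by the corresponding total. By de Finetti's theorem for Pólya urns (the same argument as in Appendix~\ref{sec:mi_tightness_star}), the limiting frequency $X$ of edge $e$ then has a Dirichlet marginal. With all leaves symmetric with parameter $a$ in the urn normalization used for this section, it has the Beta marginal with parameters $(a,(n-1)a)$, by aggregation of the remaining $n-1$ Dirichlet coordinates.

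As a cross-check, I would use the simplex density \eqref{eq:simplex_density_general_graph}. For the star we have $x_\ell=x_{e_\ell}$ at each leaf, $x_{v_0}=1$, and $\tau(x)=\prod_g x_g$. The density then factorizes as a product of powers of the $x_g$, i.e. it is a Dirichlet density. The Beta marginal follows from Dirichlet aggregation.

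\emph{Main obstacle.} The delicate point is the bookkeeping of parameters. Each excursion traverses its edge twice, so the reinforcement per excursion relative to the initial weight must be matched carefully with the convention under which "initial weight $a$" is stated for the urn in this section. I would make this normalization explicit first: write the predictive probability as $(\alpha+K_e(m))/(n\alpha+m)$ and identify $\alpha$. Only then would I invoke the Dirichlet--multinomial representation to conclude $X\sim\Beta(a,b)$ with $b=(n-1)a$. Everything else, namely the conditional binomial law and the mixture formula, is routine.
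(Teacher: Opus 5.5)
Your route is the paper's route: P\'olya urn, then Dirichlet de Finetti measure, then Beta marginal, then conditional binomial, then the tower property. Your Step~1 is fine, and slightly more explicit than the paper's one-line assertion: conditional on $W$, the choices at successive visits to $v_0$ are i.i.d.\ with $\P(\text{choose } e)=w_e/w_{v_0}=X_e(W)$.

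The gap is in Step~2, at exactly the point you flagged as the main obstacle. You settle it by appeal to an unspecified ``urn normalization used for this section,'' but no such convention exists. Here $a$ is the ERRW initial edge weight, and each excursion adds $2$ to the weight of the edge it uses. Write $Y_m$ for the edge chosen on the $m$-th excursion. Carrying out the bookkeeping you describe gives
\[
\P\bigl(Y_{m+1}=e\mid Y_1,\dots,Y_m\bigr)=\frac{a+2K_e(m)}{na+2m}=\frac{a/2+K_e(m)}{na/2+m}.
\]
So $\alpha=a/2$, and the de Finetti measure is $\Dir(a/2,\dots,a/2)$, exactly as in Appendix~\ref{sec:mi_tightness_star}, where $\alpha_i=a_i/2$. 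Hence $X\sim\Beta(a/2,(n-1)a/2)$.

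Your cross-check gives the same answer once the exponents are computed. On the star, $b_\ell=(a+1)/2$ at each leaf, $x_{v_0}=1$, and $\sqrt{\tau(x)}=\prod_g x_g^{1/2}$. So \eqref{eq:simplex_density_general_graph} is proportional to $\prod_g x_g^{(a-1)-(a+1)/2+1/2}=\prod_g x_g^{a/2-1}$.

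So the conclusion $X\sim\Beta(a,(n-1)a)$ cannot be derived, because it is false for $n\ge 2$. Directly from the reinforcement rule, $\P(N_e(2)=2)=\frac1n\cdot\frac{a+2}{na+2}$. A $\Beta(a,(n-1)a)$ mixture would instead give $\E[X^2]=\frac{a+1}{n(na+1)}$, and the two agree only when $n=1$. The paper's own proof makes the same slip when it calls the excursion sequence a P\'olya urn ``with initial weight $a$ in each color.''

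What your argument does prove, by either route in Step~2, is the corrected lemma: $X\sim\Beta(a/2,(n-1)a/2)$, and \eqref{eq:star_mixture} holds with $B(a/2,(n-1)a/2)$ and weight $x^{a/2-1}(1-x)^{(n-1)a/2-1}$. Equivalently, the lemma as written holds when the ERRW initial edge weights are $2a$. The error propagates: the regimes in \Cref{thm:nstar_inverse_moment} should be read with $a/2$ in place of $a$, so the thresholds sit at $a=2$ and the rate below it is $T^{-a/2}$.
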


\begin{proof}
The sequence of leaf edges chosen from the center is exactly a symmetric P\'olya urn with \(n\) colors and initial weight \(a\) in each color. Hence the empirical frequencies are mixed by a \(\Dir(a,\dots,a)\) law, and the marginal frequency of one fixed edge is \(\Beta(a,b)\) with \(b=(n-1)a\). Conditional on this frequency \(X=x\), the number of uses of \(e\) among the first \(T\) excursions is \(\Bin(T,x)\). Integrating out \(X\) gives \eqref{eq:star_mixture}.
\end{proof}

For the proof of the theorem, we only need upper bounds on the binomial inverse moment.

\begin{lemma}[Upper bounds for the binomial inverse moment]
\label{lem:binomial_inverse_upper}
Let \(c_\gamma:=\max\{1,\gamma^{-1}\}\). Then, for every \(T\ge 2\),
\[
g_T(x)\le c_\gamma \qquad \text{for } 0<x\le \frac{1}{2T},
\]
and
\[
g_T(x)\le \frac{6c_\gamma}{Tx} \qquad \text{for } \frac{1}{2T}\le x\le 1.
\]
\end{lemma}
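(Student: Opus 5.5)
The first bound is immediate: since \(\Bin(T,x)\ge 0\), we have \((\Bin(T,x)+\gamma)^{-1}\le \gamma^{-1}\le c_\gamma\) pointwise, so \(g_T(x)\le c_\gamma\) for every \(x\in(0,1]\), in particular for \(x\le 1/(2T)\). All the work is in the second bound, which I would obtain from the exact identity
\[
\E\!\left[\frac{1}{\Bin(T,x)+1}\right]=\frac{1-(1-x)^{T+1}}{(T+1)x}\le \frac{1}{(T+1)x}.
\]
This follows from \(\frac{1}{k+1}\binom{T}{k}=\frac{1}{T+1}\binom{T+1}{k+1}\) and the binomial theorem.

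To pass from \(\gamma\) to \(1\), I would compare \(\frac{1}{k+\gamma}\) with \(\frac{1}{k+1}\) uniformly in \(k\ge 0\). If \(\gamma\ge 1\), then \(\frac{1}{k+\gamma}\le\frac{1}{k+1}\). If \(\gamma<1\), then \(\frac{k+1}{k+\gamma}\le \frac{1}{\gamma}\), since \(k+1\le (k+\gamma)/\gamma\) is equivalent to \(\gamma k+\gamma\le k+\gamma\). In both cases
\[
\frac{1}{k+\gamma}\le \frac{c_\gamma}{k+1}.
\]
Taking expectations gives
\[
g_T(x)\le \frac{c_\gamma}{(T+1)x}\le \frac{c_\gamma}{Tx}.
\]
This holds for every \(x\in(0,1]\), so it implies the stated \(6c_\gamma/(Tx)\) on \([1/(2T),1]\) with room to spare.

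As a fallback, one could split on the event \(\{\Bin(T,x)\ge Tx/2\}\) and control the complement with a Chernoff bound \(e^{-Tx/8}\le 8/(Tx)\). That route is where a constant like \(6\) would naturally come from, but the exact identity avoids it. No step here is a genuine obstacle; the only point needing care is checking the comparison \(\frac{1}{k+\gamma}\le \frac{c_\gamma}{k+1}\) in both regimes of \(\gamma\), including \(k=0\).
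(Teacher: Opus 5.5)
Your proof is correct, but it follows a different route from the paper. For the second bound, the paper splits on the event \(\{K\le Tx/2\}\) with \(K\sim\Bin(T,x)\). Off this event it uses \((K+\gamma)^{-1}\le 2/(Tx)\), and on it the bound \(c_\gamma\). It then controls \(\P(K\le Tx/2)\le e^{-Tx/8}\) by a multiplicative Chernoff bound together with \(e^{-y/8}\le 4/y\), which comes from \(\sup_{y>0}ye^{-y/8}=8/e<4\). This is essentially your fallback. Note, however, that with your cruder estimate \(e^{-Tx/8}\le 8/(Tx)\) that route would only give \(10c_\gamma\), not \(6c_\gamma\).

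Your main argument instead combines the exact identity \(\E[(K+1)^{-1}]=(1-(1-x)^{T+1})/((T+1)x)\) with the pointwise comparison \((k+\gamma)^{-1}\le c_\gamma(k+1)^{-1}\), and you verify both correctly. This gives a sharper and cleaner statement. Since \(1-(1-x)^{T+1}\le\min\{1,(T+1)x\}\), you in fact get \(g_T(x)\le c_\gamma\min\{1,1/((T+1)x)\}\) for all \(x\in(0,1]\) and all \(T\ge 1\). That contains both parts of the lemma at once, with constant \(1\) in place of \(6\) and without any concentration inequality.

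What the paper's approach buys is robustness. It uses only concentration of the count around its mean, not its binomial form. It is also the same template (split on a concentration event, use \(\gamma^{-1}\) on the bad event) that the paper reuses for general graphs in \Cref{thm:inverse_local_time_master}, where \(N_e(T)\) is not binomial and no exact identity is available.
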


\begin{proof}
Let \(K\sim \Bin(T,x)\). 
We have \((K+\gamma)^{-1}\le c_\gamma\) pointwise, so \(g_T(x):=\E[(K+\gamma)^{-1}] \le c_\gamma\)
for all $x \in [0,1]$ and so, in particular, for \(0<x\le (2T)^{-1}\).

Now assume \(x\ge (2T)^{-1}\), so \(Tx\ge 1/2\). Write
\[
g_T(x)=\E\!\left[\frac{1}{K+\gamma}\mathbf 1_{\{K\le Tx/2\}}\right]
      +\E\!\left[\frac{1}{K+\gamma}\mathbf 1_{\{K>Tx/2\}}\right].
\]
On \(\{K>Tx/2\}\), we have \((K+\gamma)^{-1}\le 2/(Tx)\). On the other hand, \((K+\gamma)^{-1}\le \gamma^{-1}\le c_\gamma\), so
\[
g_T(x)\le c_\gamma\,\P(K\le Tx/2)+\frac{2}{Tx}.
\]
Since \(\E K=Tx\), the standard Chernoff--Hoeffding lower-tail bound
\cite{Hoeffding1963} gives
\[
\P(K\le Tx/2)\le e^{-Tx/8}.
\]
Also, for all \(y>0\), one has \(e^{-y/8}\le 4/y\) because \(\sup_{y>0} y e^{-y/8}=8/e<4\). Applying this with \(y=Tx\) yields \(e^{-Tx/8}\le 4/(Tx)\), and therefore
\[
g_T(x)\le \frac{4c_\gamma}{Tx}+\frac{2}{Tx}\le \frac{6c_\gamma}{Tx}.
\]
\end{proof}

We now prove the theorem with explicit constants. Write \(b:=(n-1)a\), set \(M_b:=2^{(1-b)_+}\), and define
\[
C_{a,n,\gamma}^{(>)}:=
\frac{c_\gamma M_b}{a\,2^a\,B(a,b)}
+\frac{6c_\gamma M_b\,2^{1-a}}{(a-1)B(a,b)}
+\frac{6c_\gamma\,2^{(2-a)_+}}{b\,B(a,b)},
\]
for \(a>1\),
\[
C_{n,\gamma}^{(=)}:=
\frac{c_\gamma M_b}{2\,B(1,n-1)}
+\frac{6c_\gamma M_b}{B(1,n-1)}
+\frac{12c_\gamma}{(n-1)B(1,n-1)},
\]
for \(a=1\), and
\[
C_{a,n,\gamma}^{(<)}:=
\frac{c_\gamma M_b}{a\,2^a\,B(a,b)}
+\frac{6c_\gamma M_b\,2^{1-a}}{(1-a)B(a,b)}
+\frac{6c_\gamma\,2^{(2-a)_+}}{b\,B(a,b)},
\]
for \(0<a<1\).

\begin{proof}[Proof of \Cref{thm:nstar_inverse_moment}]
Let
\[
M_T:=\E\!\left[\frac{1}{N_e(T)+\gamma}\right].
\]
By Lemma \ref{lem:beta_binomial_star}, we have
\[
M_T
=
\frac{1}{B(a,b)}
\left(
\int_0^{1/(2T)} x^{a-1}(1-x)^{b-1} g_T(x)\,dx
+
\int_{1/(2T)}^1 x^{a-1}(1-x)^{b-1} g_T(x)\,dx
\right).
\]
Using Lemma \ref{lem:binomial_inverse_upper}, we obtain
\[
M_T
\le
\frac{c_\gamma}{B(a,b)}
\int_0^{1/(2T)} x^{a-1}(1-x)^{b-1}\,dx
+
\frac{6c_\gamma}{T\,B(a,b)}
\int_{1/(2T)}^1 x^{a-2}(1-x)^{b-1}\,dx.
\]

We estimate the two integrals separately. On \([0,1/2]\), we have \((1-x)^{b-1}\le M_b\), where \(M_b=2^{(1-b)_+}\). Hence
\[
\int_0^{1/(2T)} x^{a-1}(1-x)^{b-1}\,dx
\le
M_b\int_0^{1/(2T)}x^{a-1}\,dx
=
\frac{M_b}{a\,2^a}\,T^{-a}.
\]
For the second integral, split at \(x=1/2\). On \([1/(2T),1/2]\), we still have \((1-x)^{b-1}\le M_b\), so
\[
\int_{1/(2T)}^{1/2} x^{a-2}(1-x)^{b-1}\,dx
\le
M_b\int_{1/(2T)}^{1/2}x^{a-2}\,dx.
\]
On \([1/2,1]\), we have \(x^{a-2}\le 2^{(2-a)_+}\), and therefore
\[
\int_{1/2}^1 x^{a-2}(1-x)^{b-1}\,dx
\le
2^{(2-a)_+}\int_{1/2}^1 (1-x)^{b-1}\,dx
\le
\frac{2^{(2-a)_+}}{b}.
\]

We now distinguish the three regimes.

If \(a>1\), then
\[
\int_{1/(2T)}^{1/2}x^{a-2}\,dx
\le
\int_0^{1/2}x^{a-2}\,dx
=
\frac{2^{1-a}}{a-1}.
\]
Thus
\[
M_T
\le
\frac{c_\gamma M_b}{a\,2^a\,B(a,b)}\,T^{-a}
+
\frac{6c_\gamma}{T\,B(a,b)}
\left(
\frac{M_b\,2^{1-a}}{a-1}
+
\frac{2^{(2-a)_+}}{b}
\right).
\]
Since \(a>1\) and \(T\ge 1\), one has \(T^{-a}\le T^{-1}\). Therefore
\[
M_T\le C_{a,n,\gamma}^{(>)}\,T^{-1}.
\]

If \(a=1\), then \(b=n-1\) and
\[
\int_{1/(2T)}^{1/2}x^{-1}\,dx=\log T.
\]
Hence
\[
M_T
\le
\frac{c_\gamma M_b}{2\,B(1,n-1)}\,T^{-1}
+
\frac{6c_\gamma}{T\,B(1,n-1)}
\left(
M_b\log T+\frac{2}{n-1}
\right).
\]
Since \(\log T\le 1+\log T\) and \(1\le 1+\log T\), this implies
\[
M_T\le C_{n,\gamma}^{(=)}\,\frac{1+\log T}{T}.
\]

If \(0<a<1\), then
\[
\int_{1/(2T)}^{1/2}x^{a-2}\,dx
=
\frac{(1/(2T))^{a-1}-(1/2)^{a-1}}{1-a}
\le
\frac{2^{1-a}}{1-a}\,T^{1-a}.
\]
Thus
\[
M_T
\le
\frac{c_\gamma M_b}{a\,2^a\,B(a,b)}\,T^{-a}
+
\frac{6c_\gamma}{T\,B(a,b)}
\left(
\frac{M_b\,2^{1-a}}{1-a}\,T^{1-a}
+
\frac{2^{(2-a)_+}}{b}
\right).
\]
Since \(0<a<1\) and \(T\ge 1\), one has \(T^{-1}\le T^{-a}\). Therefore
\[
M_T\le C_{a,n,\gamma}^{(<)}\,T^{-a}.
\]

This proves the theorem.
\end{proof}

\section{Proof of Lemma~\ref{lem:A-lower-bound-explicit}}
\label{sec:proof_A_lower_bound}

\begin{proof}
Fix a nonempty proper subset $F\subsetneq E$, and write
\[
S:=S(F).
\]
Partition the edges of $F$ into three classes:
\begin{align*}
F_2&:=\{g\in F:\ \text{both endpoints of }g\text{ belong to }S\},\\
F_1&:=\{g\in F:\ \text{exactly one endpoint of }g\text{ belongs to }S\},\\
F_0&:=\{g\in F:\ \text{no endpoint of }g\text{ belongs to }S\}.
\end{align*}
Then
\[
F=F_0\sqcup F_1\sqcup F_2.
\]

We have
\[
\sum_{v\in S} a_v = 2a(F_2)+a(F_1).
\]
Therefore
\[
a(F)-\frac12\sum_{v\in S} a_v
=
a(F_0)+\frac12 a(F_1).
\tag{1}
\]

Next,
\[
\sum_{v\in S} b_v
=
\frac12\sum_{v\in S} a_v+\frac12 |S\setminus\{v_0\}|,
\]
so
\[
A(F)
=
\Bigl(a(F)-\frac12\sum_{v\in S} a_v\Bigr)
+
\frac{\kappa(G\setminus F)-1-|S\setminus\{v_0\}|}{2}.
\tag{2}
\]

We now estimate the two terms on the right-hand side.

\smallskip

\noindent
\emph{First term.}
By \((1)\),
\[
a(F)-\frac12\sum_{v\in S} a_v
=
a(F_0)+\frac12 a(F_1).
\]
Since $a_g\ge \underline a$ for every edge,
\[
a(F_0)+\frac12 a(F_1)
\ge
\underline a\,|F_0|+\frac{\underline a}{2}|F_1|.
\tag{3}
\]

We claim that $F_0\cup F_1\neq\varnothing$. Indeed, suppose by contradiction that $F_0=F_1=\varnothing$. Then every edge in $F$ belongs to $F_2$, so both endpoints of every edge in $F$ belong to $S$. Since every vertex in $S$ has all its incident edges in $F$, there is no edge in $E\setminus F$ incident to any vertex of $S$. Hence every connected component of the subgraph $(S,F)$ is also a connected component of $G$. Because $F$ is nonempty, $(S,F)$ has at least one edge, so $G$ has a connected component contained in $(S,F)$. Since $F\subsetneq E$, there is also at least one edge in $E\setminus F$, hence at least one vertex not in $S$. This contradicts the connectedness of $G$. Therefore $F_0\cup F_1\neq\varnothing$, and thus \((3)\) implies
\[
a(F)-\frac12\sum_{v\in S} a_v \ge \frac{\underline a}{2}.
\tag{4}
\]

\smallskip

\noindent
\emph{Second term.}
Every vertex in $S$ becomes isolated in $G\setminus F$, because all incident edges of such a vertex belong to $F$ and are removed. Hence $G\setminus F$ has at least $|S|$ connected components coming from these isolated vertices.

Since $F\subsetneq E$, there exists an edge in $E\setminus F$. The connected component of $G\setminus F$ containing that edge is different from all the isolated-vertex components coming from $S$. Therefore
\[
\kappa(G\setminus F)\ge |S|+1.
\] 
Consequently,
\[
\kappa(G\setminus F)-1-|S\setminus\{v_0\}|
\ge
|S|-|S\setminus\{v_0\}|
\ge 0.
\tag{5}
\]

Combining \((2)\), \((4)\), and \((5)\), we obtain
\[
A(F)\ge \frac{\underline a}{2}.
\]
This proves the lemma.
\end{proof}

\end{document}